\DeclareMathOperator{\Ima}{Im}
\DeclareMathOperator\supp{supp}
\DeclareMathOperator\spa{span}
\DeclareMathOperator{\sgn}{sgn}
\theoremstyle{plain}
\newtheorem{theorem}{Theorem}
\newtheorem{lemma}{Lemma}
\newtheorem{proposition}{Proposition}
\newtheorem{corollary}{Corollary}
\newtheorem{procedure}{Procedure}
\theoremstyle{definition}
\newtheorem{definition}{Definition}
\newtheorem{example}{Example}
\newtheorem{remark}{Remark}
\newtheorem{erratum}{Erratum}
\title{Network transformation-based analysis of biochemical systems}
\author{Dylan Antonio SJ.~Talabis \\
  Institute of Mathematical Sciences and Physics \\
  University of the Philippines \\
  Los Banos, Laguna 4031, Philippines \\
  \texttt{dstalabis1@up.edu.ph} \\
  \And
Eduardo R.~Mendoza\thanks{Max Planck Institute of Biochemistry, Martinsried near Munich, Germany} \thanks{Faculty of Physics, Ludwig Maximilian University, Munich 80539 Germany} \\
  Mathematics and Statistics Department \\
  Center for Natural Sciences and Environmental Research \\
  De La Salle University \\
  Manila 0922, Philippines \\
  \texttt{eduardo.mendoza@dlsu.edu.ph} \\
  }
\begin{document}
\maketitle

\begin{abstract}
A dynamical system obtains a wide variety of kinetic realizations, which is advantageous for the analysis of biochemical systems. A reaction network, derived from a dynamical system, may or may not possess some properties needed for a thorough analysis. We improve and extend the work of  M. Johnston \cite{JOHN2014} and Hong et al. \cite{HONG2023} on network translations to network transformations, where the network is modified while preserving the dynamical system. These transformations can shrink, extend, or retain the stoichiometric subspace. Here, we show that positive dependent network can be translated to a weakly reversible network. Using the kinetic realizations of (1) calcium signaling in the olfactory system and (2) metabolic insulin signaling, we demonstrate the benefits of transformed systems with positive deficiency for analyzing biochemical systems. Furthermore, we present an algorithm for a network transformation of a weakly reversible non-complex factorizable kinetic (NFK) system to a weakly reversible complex factorizable kinetic (CFK) system, thereby enhancing the Subspace Coincidence Theorem for NFK systems of Nazareno et al. \cite{NAZA2019}.  Finally, using the transformed kinetic realization of monolignol biosynthesis in \textit{Populus xylem}, we study the structural and kinetic properties of transformed systems,including the invariance of concordance and variation of injectivity and mono-/multi-stationarity under network transformation.
\end{abstract}
% keywords can be removed
%\keywords{chemical reaction network theory \and power law kinetics \and complex balancing \and poly-PL kinetics}

\baselineskip=0.30in

\section{Introduction}\label{sec1}
Mathematical models of biochemical systems are mostly dynamical systems, i.e.,  given as solutions of a set of ordinary differential equations (ODE). A set of molecular entities and their interactions, typically illustrated as a “biochemical map”, underlies such equations. This relationship can advance  the understanding of the system's properties when used to formally construct a kinetic realization of the system. We recall the concept of a kinetic system as follows:
\begin{itemize}
    \item A reaction network $\mathscr{N}$ is  a directed graph whose vertices $y$ (called complexes) are non-negative vectors $y$ in $\mathbb{R}^m$ and whose edges (called reactions) are denoted by $y \rightarrow y'$ . The units in $\mathbb{R}^m$  are called species and in our case correspond to the state variables of the ODE system.  A CRN is typically denoted as a triple $\mathscr{N} = (\mathscr{S}, \mathscr{C}, \mathscr{R})$, where $\mathscr{S}$ is the set of $m$ species, $\mathscr{C}$  the set of $n$ complexes and $\mathscr{R}$ the set of $r$ reactions.
    \item A kinetics $K$ on a CRN $\mathscr{N}$ assigns to each reaction $q: y \rightarrow y'$ a rate function $K_q: \Omega \rightarrow \mathbb{R}_{\geq}$, where $\Omega$ is a subset of $\mathbb{R}^m_{\geq}$ containing $\mathbb{R}^m_{>}$, and $\supp x$ contains $\supp y$ $\Rightarrow$ $K_q(x) > 0$.
    \item A kinetic system is a pair $(\mathscr{N}, K)$ with $\mathscr{N}$ a CRN and $K$ a kinetics on $\mathscr{N}$. Each reaction $q: y \rightarrow y'$ defines a reaction vector $y'- y$ in $\mathbb{R}^m$, and the CRN's stoichiometric matrix $N$ is the $m \times r$ matrix whose columns are the $r$ reaction vectors. The image on $N$ (in $\mathbb{R}^m$) is called the CRN's stoichiometric subspace $S$. If $K(x)$ denotes the $r$-vector of rates of $x$, then the function $f(x)= NK(x)$, called the system's species formation rate function, defines the system's vector field, i.e., the RHS of its system of ODEs.
    \item A natural equivalence relation between kinetic systems is dynamical equivalence. $(\mathscr{N}, K)$ and $(\mathscr{N}', K')$ are dynamically equivalent if (i) $\mathscr{N}$ and $\mathscr{N}'$ have the same set of species $S = S'$, (ii) $K$ and $K'$ have the same definition domain $\Omega = \Omega'$ and (iii) $NK (x) = N'K'(x)$ for all $x \in \Omega$. The last condition means that their ODE systems are identical.
\end{itemize}
We can now formally define a kinetic realization of a dynamical system: a kinetic system $(\mathscr{N}, K)$ is a kinetic realization of a dynamical system given by $dx_i/dt = f_i(x)$, for $i = 1,\ldots,m $ if (i) the species of $\mathscr{N}$ coincide with the $m$ state variables $x_i$ and (ii) $(NK (x))i = dx_i/dt$ for all $x \in \Omega$ and $i=1,\ldots,m$. \\

A dynamical system obtains a wide variety of kinetic realizations, since any dynamically equivalent of a kinetic realization is also a kinetic realization. This diversity is advantageous for the analysis of the biochemical system, since the CRN immediately derived from the biochemical map and equations may not possess the appropriate properties for a thorough analysis. However, caution is needed since many system properties are not necessarily preserved under dynamic equivalence. One of the goals of this paper is to provide new approaches for such analysis. \\

Early research in identifying dynamic equivalents of a given kinetic system was based on methods from Mixed Integer Linear Programming (MILP). Many of the results were also extended to the more general relation of linear conjugacy. An excellent overview of these studies is provided in the Introduction of the paper by M. Johnston, D. Siegel and G. Szederkenyi \cite{JOSS2013}. \\

Here, our approach is based on the network translation method introduced by M. Johnston in 2014 for mass action systems \cite{JOHN2014}. The network translate of a mass action system is a generalized mass action system in the sense of S. Müller and G. Regensburger \cite{MURE2012}. The novel approach provided a geometric interpretation of the “toric steady states” of the mass action system previously discovered by Craciun et al. in 2007 as complex balanced equilibria of the generalized mass action system. In the biochemical context, Talabis et al. \cite{TAME2018} showed that generalized mass action systems corresponded to PL-RDK systems, i.e. power law kinetic systems where all branching reactions of any reactant had identical kinetic order vectors. \\

Recently, Hong et al. \cite{HONG2023} extended the concept of network translation to all kinetic systems and derived important properties of the method in this generality. Their contributions include the following:

\begin{itemize}
    \item Demonstration that any network translation of a kinetic system is a sequence of three network operations: shifting, dividing and merging. (See Section \ref{Sec:2} for precise definitions),
    \item Derivation of a necessary and sufficient condition for a kinetic system to have weakly reversible network translate, and
    \item Design and MATLAB implementation of TOWARDZ, an algorithm for finding weakly reversible, deficiency zero network translates of mass action systems (both deterministic and stochastic).
\end{itemize}
    
This paper aims to improve and extend the contributions of Hong et al. in several aspects. \\

We \textbf{improve} their results by
\begin{itemize}
    \item Showing the necessary and sufficient condition is equivalent to the well-known property of positive dependence.
    \item Broaden the analysis perspective to weakly reversible network translates with positive deficiency in two ways: 
    \begin{itemize}
        \item First, we provide two examples of beneficial analysis of biochemical systems (olfactory calcium signaling and insulin signaling in adipocytes) despite their translates' positive deficiencies.
        \item Second, for any weakly reversible NFK system (of arbitrary deficiency), we construct an algorithm for finding a weakly reversible CFK network translate. We then use the translate to improve results of Nazareno et al. \cite{NAZA2019}, including their Subspace Coincidence Theorem for NFK systems.
    \end{itemize}    
\end{itemize}
    
We \textbf{extend} the work of Hong et al. by introducing network transformation as a generalization of network translation and defining 5 additional network operations to generate such transformations. In contrast to the operations for network translation, these new operations do not leave the set of reaction vectors invariant, but contract, expand or retain the stoichiometric subspace. Hence, a network transformation is a dynamic equivalence that contracts, expands or leaves the stoichiometric subspace $S$ invariant. A network translation is an $S$-invariant network transformation with the special property that it leaves the subspace's canonical set of generators unchanged. \\

We then study the variation of system properties under different classes of network transformations, with a focus on concordance, a network property inducing stability (or even “dull behavior” \cite{SF2012}) in the system. In addition to the calcium and insulin signaling systems previously introduced, we illustrate the analysis with the PL-NDK realization of a model of biosynthesis in   \textit{Populus xylem}. To our knowledge, this is the first example of a concordant non-mass action system in the literature. We also discuss applications of the analysis to results in Nazareno et al. \cite{NAZA2019}, Fortun and Mendoza \cite{FOME2020} and Hernandez et al. \cite{HELM2024}. \\

The paper is organized as follows: in Section \ref{Sec:2}, we review the work of Hong et al. and present our clarification of some of their results. Section \ref{Sec:3} introduces the olfactory calcium signaling and adipocyte insulin signaling systems and analyzes their positive deficiency network translates. In Section \ref{Sec:4}, we discuss the algorithm for constructing a weakly reversible CFK translate for a weakly reversible NFK system (regardless of its deficiency value). Section \ref{Sec:5} applies the algorithm to improve results in Nazareno et al., including the Subspace Coincidence Theorem for NFK systems. In Sections \ref{Sec:6} and \ref{Sec:7}, we extend the study to network transformations, introduce the additional network operations and illustrate the variation of system properties with the model of \textit{Populus xylem}. Section \ref{Sec:7} focuses on network concordance, including a sufficient condition for invariance and on improvements of results in previous publications. We conclude with a Summary and Outlook in Section \ref{Sec:8}. An Appendix collects the fundamentals on reaction networks and kinetics useful for the paper.

\section{Network translation of kinetic systems}
\label{Sec:2}
In this Section, we review the main results of Hong et al. \cite{HONG2023} and clarify aspects of their contribution with several new propositions.

\subsection{The concept of network translation}
M. Johnston pioneered the study of network translation of mass action systems in 2014 \cite{JOHN2014}. His main motivation was to find an interpretation of the “toric steady states” of the mass action systems discovered by Craciun et al. in 2009 \cite{CRACIUN2009}. His approach led him to consider power law kinetic systems, which he formulated in terms of the generalize mass action systems introduced by Müller and Regensburger in 2012 \cite{MURE2012}. In the biochemical context, Talabis et al. demonstrated that these systems are the PL-RDK systems, where all branching reactions of any reactant share identical kinetic order vectors \cite{TAME2018}. Recently, Hong et al. \cite{HONG2023} extended the concept of network translation to all kinetic systems and derived important properties of the method. We review these contributions and provide clarifications for their interpretation. \\

In Hong et al., the fact that a network translation is a dynamic equivalence that leaves the stoichiometric subspace invariant follows implicitly from its generation through 3 network operations (cf. discussion in the next section). Here, we derive the property directly from the definition of network translation.
 
\begin{definition}
    For a kinetic system $(\mathscr{N},K)$, we call $(\tilde{\mathscr{N}},\tilde{K})$ a \textbf{translation} of $(\mathscr{N},K)$ if 

    \begin{equation}
    \label{hong}
        \sum_{r:y'-y=\zeta} K_r (\textbf{x}) = \sum_{\tilde{r}:z'-z=\zeta} \tilde{K}_{\tilde{r}} (\textbf{x})
    \end{equation}  
    for any $\zeta \in \mathbb{Z}^m$ and $x \in \mathbb{R}^m_{\geq 0}$.
\end{definition}

We state a necessary condition for a kinetic system $(\tilde{\mathscr{N}},\tilde{K})$ to be a network translate of a given $(\mathscr{N},K)$:
 
\begin{proposition}
    \label{prop:reacvec}
    Let $\sigma: \mathscr{R} \rightarrow \mathbb{R}^m$ maps a reaction to its reaction vector. If $(\tilde{\mathscr{N}},\tilde{K})$ be a network translate of $(\mathscr{N},K)$, then $\sigma (\mathscr{R}) = \sigma (\tilde{\mathscr{R}})$, i.e., their sets of reaction vectors coincide.
\end{proposition}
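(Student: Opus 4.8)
The plan is to exploit the positivity clause built into the definition of a kinetics, together with the manifest symmetry of the defining equation \eqref{hong}. First I would note that it suffices to establish a single inclusion, say $\sigma(\mathscr{R}) \subseteq \sigma(\tilde{\mathscr{R}})$: equation \eqref{hong} is symmetric under interchanging $(\mathscr{N},K)$ and $(\tilde{\mathscr{N}},\tilde{K})$ (it merely swaps the two sides), so the hypothesis is itself symmetric, and applying the proven inclusion with the roles reversed yields $\sigma(\tilde{\mathscr{R}}) \subseteq \sigma(\mathscr{R})$ and hence equality.

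To prove $\sigma(\mathscr{R}) \subseteq \sigma(\tilde{\mathscr{R}})$, I would fix $\zeta \in \sigma(\mathscr{R})$ and choose a reaction $r_0 \colon y \to y'$ of $\mathscr{N}$ with $y' - y = \zeta$. Evaluating at any strictly positive state $x \in \mathbb{R}^m_{>}$ — which lies in the common definition domain $\Omega$, since $\mathbb{R}^m_{>} \subseteq \Omega$ — the support condition $\supp x \supseteq \supp y$ holds trivially, so $K_{r_0}(x) > 0$. As all rate functions are non-negative, the left-hand side of \eqref{hong} for this $\zeta$ and $x$ is at least $K_{r_0}(x) > 0$. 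By \eqref{hong}, the right-hand side $\sum_{\tilde r \colon z'-z=\zeta} \tilde K_{\tilde r}(x)$ is then also strictly positive; since an empty sum is zero, the index set $\{\tilde r \in \tilde{\mathscr{R}} : z'-z = \zeta\}$ must be non-empty, i.e.\ some reaction of $\tilde{\mathscr{N}}$ has reaction vector $\zeta$. Thus $\zeta \in \sigma(\tilde{\mathscr{R}})$, which is the desired inclusion.

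I do not expect a genuine obstacle here — the argument is a direct unwinding of the definitions. The only point deserving care is the choice of evaluation point: one must test \eqref{hong} at a state in $\mathbb{R}^m_{>}$ (rather than an arbitrary $x \in \mathbb{R}^m_{\geq 0}$) so that the positivity clause of the kinetics definition forces $K_{r_0}(x) > 0$ irrespective of $\supp y$, and one should record that $\mathbb{R}^m_{>} \subseteq \Omega$ so that the identity is meaningful there.
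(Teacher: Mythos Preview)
Your proof is correct and follows essentially the same approach as the paper: evaluate \eqref{hong} at a strictly positive $x$, use the positivity clause in the definition of a kinetics to make the side indexed by an occurring reaction vector strictly positive, and observe that the opposite side would vanish if its index set were empty. The paper phrases this as a contradiction argument and handles the two inclusions in parallel rather than invoking the symmetry of \eqref{hong}, but the underlying idea is identical.
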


\begin{proof}
    We observe that the sums on both sides of Equation \ref{hong} collect the values of the kinetics of a reaction vector which may be the same for several reactions. The sums are parametrized by $\zeta \in \mathbb{Z}^m$, with the sums taken over all reactions in $\mathscr{N}$(LHS) and $\tilde{\mathscr{N}}$(RHS) whose reaction vector $= \zeta$. If the set of reaction vectors of the networks are different, say, for a reaction vector of $\mathscr{N}$ not a reaction vector of $\tilde{\mathscr{N}}$, for any positive $x$ in $\mathbb{R}^m$, LHS is positive, while the  RHS is zero (the index set of the sum is empty). Hence, the networks cannot be translates. This similarly applies to a reaction vector in $\tilde{\mathscr{N}}$ but not in $\mathscr{N}$.
\end{proof}

\begin{corollary}
    Let $S$ and $\tilde{S}$ be the stoichiometric subspaces of the networks $\mathscr{N}$ and $\mathscr{\tilde{N}}$.   If $(\mathscr{N},K)$ and $(\mathscr{\tilde{N}},\tilde{K})$ are network translations (or translates), then they are dynamically equivalent with $S = \tilde{S}$.
\end{corollary}

\begin{proof}
    
    Summing the LHS of Equation \ref{hong} over $\sigma(\mathscr{R})$ gives the vector field of $(\mathscr{N},K)$. Analogously, summing the RHS over $\sigma(\tilde{\mathscr{R}})$ results in that of $(\mathscr{\tilde{N}},\tilde{K})$. Since $\sigma(\mathscr{R}) = \sigma(\tilde{\mathscr{R}})$, we obtain the dynamical equivalence and stoichiometric subspace coincidence.
\end{proof}

\subsection{Network translation and network operations}
Following from Proposition \ref{prop:reacvec}, only operations which preserve the reaction vector can be considered as network translation. These are shifting, dividing and merging. 
 
\begin{definition}[Shifting, Dividing and Merging]
Let $q: y \xrightarrow{r} y'$ be a reaction in a kinetic system $(\mathscr{N}, K)$. 
\begin{enumerate}

    \item \textbf{Shifting} the reaction results to  $q':y + z \xrightarrow{r} y' + z$ with kinetics $K_{q'}= K_{q}$ where $z = z_1 X_1 + ... + z_m X_m$ and $z_i$'s $\in \mathbb{Z}$.
    
    \item \textbf{Dividing} the reaction results to the pair of reactions $q': y + a \xrightarrow{r_1} y' + a$ and $q'': y + b \xrightarrow{r_2} y' + b$ with kinetics $K_{q'}=r_1 I_{K,1}$ and $K_{q''}=r_2 I_{K,1}$, respectively. Here, $K_{q}=K_{q'} + K_{q''}$, $a = a_1 X_1 + ... + a_m X_m$, $b = b_1 X_1 + ... + b_m X_m$ and $a_i$'s, $b_i$'s $\in \mathbb{Z}$.

    \item Let $q_1: y \xrightarrow{r_1} y'$ and $q_2: z \xrightarrow{r_2} z'$ be reactions with the same reaction vector and kinetics $K_{1}$ and $K_{2}$, respectively. \textbf{Merging} the reactions results to the reaction $q': x \xrightarrow{r} x'$ with kinetics $K_{q'}=K_1 + K_2$ where $x'-x = y'-y$.
         
\end{enumerate}
\end{definition}
In the example below, we shift the reaction $R_1: Y \rightarrow X $ by adding Y to both reactant and product complexes yielding the reaction $2Y \rightarrow X+2Y$ and making the network weakly reversible. In the succeeding examples for dividing and merging, $R_3$ results in two reactions $Y \rightarrow X$ and $2Y \rightarrow X +X$ and $R_1^*$ is the merge of $R_1$ and $R_2$. See Table \ref{tabexxi} for complete details. Here, each reaction $R_i$ is associated to kinetics $K_i$. Note that the kinetics $K_i$ can be decomposed as $K_i = r_i I_{K,i}$ where $r_i \in \mathbb{R}_+$.

\begin{table}[h]
\caption{Examples for the three (reaction) network operations}
\label{tabexxi}
\begin{tabular*}{\textwidth}{@{\extracolsep\fill} p{0.11\linewidth} llll}
\toprule%
& \multicolumn{2}{@{}c@{}}{Original Kinetic System} & \multicolumn{2}{@{}c@{}}{WR Dynamic Equivalent System} \\\cmidrule{2-3}\cmidrule{4-5}%
Operation & Network & Kinetics & Network & Kinetics \\

\midrule

Shifting &
$\begin{array}{l}
R_1: Y \rightarrow X\\
R_2: X+Y \rightarrow 2Y
\end{array}$
 & $\begin{array}{l}
K_1 =r_1 I_{K,1}  \\
K_2 =r_2 I_{K,2}
\end{array}$ & $\begin{array}{l}
R_1^*: 2Y \rightarrow X+Y\\
R_2: X+Y \rightarrow 2Y
\end{array}$ & $\begin{array}{l}
K_1 =r_1 I_{K,1}  \\
K_2 =r_2 I_{K,2}
\end{array}$ \\

\midrule

Dividing &
$\begin{array}{l}
R_1: X+Y \rightarrow 2Y \\
R_2: X \rightarrow Y \\
R_3: Y \rightarrow X
\end{array}$
 & $\begin{array}{l}
K_1 =r_1 I_{K,1}  \\
K_2 =r_2 I_{K,2} \\
K_3 = r_3 I_{K,3} \\
+ r_4 I_{K,4}  \\
\end{array}$ & $\begin{array}{l}
R_1: X+Y \rightarrow 2Y \\
R_2: X \rightarrow Y \\
R_3: Y \rightarrow X \\
R_4^*: 2Y \rightarrow X+Y \\
\end{array}$
 & $\begin{array}{l}
K_1 =r_1 I_{K,1}  \\
K_2 =r_2 I_{K,2} \\
K_3 = r_3 I_{K,3}  \\ 
K_4 = r_4 I_{K,4} \\
\end{array}$ \\

\midrule

Merging & $\begin{array}{l}
R_1: X+Y \rightarrow 2Y \\
R_2: X \rightarrow Y \\
R_3: Y \rightarrow X \\
\end{array}$
 & $\begin{array}{l}
K_1 =r_1 I_{K,1}  \\
K_2 =r_2 I_{K,2} \\
K_3 =r_3 I_{K,3}  \\ 
\end{array}$ & $\begin{array}{l}
R_1^*: X \rightarrow Y \\
R_3: Y \rightarrow X \\
\end{array}$
 & $\begin{array}{l}
K_1 =r_1 I_{K,1} \\
+ r_2 I_{K,2} \\
K_3 =r_3 I_{K,3}  \\ 
\end{array}$ \\
\bottomrule
\end{tabular*}
\end{table}

A very important result of Hong et al. \cite{HONG2023} is the following:
 
\begin{proposition}
    For two kinetic systems $(\mathscr{N},K)$ and $(\tilde{\mathscr{N}},\tilde{K})$, if (\ref{hong}) holds then $(\mathscr{N},K)$ can be translated to $(\tilde{\mathscr{N}},\tilde{K})$ by shifting, dividing, and merging reactions. Conversely, if $(\tilde{\mathscr{N}},\tilde{K})$ is obtained by shifting, dividing, or merging reactions in $(\mathscr{N},K)$, then Equation \ref{hong} holds.
\end{proposition}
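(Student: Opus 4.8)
The plan is to prove the two implications separately, handling the converse by a direct invariance check and the forward implication through a network-independent canonical form. For a kinetic system with kinetics $K$ write $\Phi_{K}(\zeta, x) := \sum_{r:\, y'-y = \zeta} K_r(x)$ for the aggregated rate attached to a lattice vector $\zeta \in \mathbb{Z}^m$, so that Equation~(\ref{hong}) says exactly $\Phi_{K} \equiv \Phi_{\tilde K}$. For the converse I would check that a single shift, divide, or merge leaves $\Phi$ unchanged: a shift replaces $q:y\to y'$ by $y+z\to y'+z$ with the same reaction vector and $K_{q'}=K_q$, so no summand of any $\Phi(\zeta,\cdot)$ moves; a divide removes $q$ (contributing $K_q$ to $\Phi(y'-y,\cdot)$) and inserts $q',q''$ of the same reaction vector with $K_{q'}+K_{q''}=K_q$; a merge removes $q_1,q_2$ and inserts $q'$, all with the common reaction vector and $K_{q'}=K_1+K_2$. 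Since a network translation is a finite sequence of such operations, $\Phi$ is invariant along it by induction, which is Equation~(\ref{hong}).

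For the forward implication, assume Equation~(\ref{hong}); by Proposition~\ref{prop:reacvec} the networks share their set of reaction vectors $Z := \sigma(\mathscr{R}) = \sigma(\tilde{\mathscr{R}})$, and for $\zeta\in Z$ let $\zeta^{+},\zeta^{-}\in\mathbb{Z}^m_{\geq 0}$ be its positive and negative parts ($\zeta = \zeta^{+}-\zeta^{-}$). Define the \emph{canonical translate} of $(\mathscr{N},K)$ as the network carrying, for each $\zeta\in Z$, the single reaction $\zeta^{-}\to\zeta^{+}$ with rate function $\Phi_K(\zeta,\cdot)$. I would first show $(\mathscr{N},K)$ reaches its canonical translate by shifts followed by merges: shift each reaction $y\to y+\zeta$ by the integral vector $\zeta^{-}-y$ onto the arrow $\zeta^{-}\to\zeta^{+}$, then merge iteratively all reactions that now coincide with this arrow into one reaction whose rate is the sum of theirs, namely $\Phi_K(\zeta,\cdot)$. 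Applying the identical construction to $(\tilde{\mathscr{N}},\tilde K)$ yields a canonical translate with the same reaction set $\{\zeta^{-}\to\zeta^{+}:\zeta\in Z\}$ and rates $\Phi_{\tilde K}(\zeta,\cdot)=\Phi_{K}(\zeta,\cdot)$ by~(\ref{hong}); so the two canonical translates are literally identical. Finally I would reverse the $(\tilde{\mathscr{N}},\tilde K)$-construction step by step — an integral shift is undone by the opposite shift, and a merge of two reactions with rates $L_1,L_2$ into one with rate $L_1+L_2$ is undone by a divide with trivial increments $a=b=0$ and rate split $L_1,L_2$ — so the canonical translate returns to $(\tilde{\mathscr{N}},\tilde K)$ via divides followed by shifts. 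Concatenating, $(\mathscr{N},K)$ is translated into $(\tilde{\mathscr{N}},\tilde K)$ by shifts, merges, divides and shifts.

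I expect the only substantive point to be the forward direction's observation that Equation~(\ref{hong}) forces the two networks to have the \emph{same} canonical translate, combined with the fact that the reduction to canonical form is invertible with the inverse of ``merge'' being precisely ``divide'' — this inversion is what brings divides into an otherwise shift-and-merge argument. The one technical caveat to flag is that the aligning shift vectors $\zeta^{-}-y$ and their inverses must be integral, which holds because complexes are taken in $\mathbb{Z}^m_{\geq 0}$, the stoichiometric setting in which network translation, and the quantifier $\zeta\in\mathbb{Z}^m$ in~(\ref{hong}), are formulated. The converse is routine bookkeeping.
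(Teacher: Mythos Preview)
The paper does not supply its own proof of this proposition; it is quoted as a result of Hong et al.\ and left unproved here, so there is no in-paper argument to compare against. Your proof is correct and is the natural one: the converse is a one-line invariance check per operation, and for the forward direction you reduce each system to a common canonical form (one reaction per reaction vector $\zeta$, placed at $\zeta^{-}\to\zeta^{+}$ with aggregated rate $\Phi(\zeta,\cdot)$), observe the two canonical forms coincide by~(\ref{hong}), and then reverse the reduction on the target side using divide as the inverse of merge.

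One cosmetic point worth tightening: dividing with $a=b=0$, and likewise shifting all reactions with a given $\zeta$ onto $\zeta^{-}\to\zeta^{+}$ before merging, produces transient parallel edges, so the intermediate objects may fail to be CRNs in the strict simple-digraph sense. You can sidestep this entirely by exploiting the built-in shift already present in the merge and divide operations: merge the reactions of a given $\zeta$ pairwise, choosing output locations that avoid collisions and landing the final merge at $\zeta^{-}\to\zeta^{+}$; for the reverse, divide directly into the desired target locations via $a=\tilde y-\zeta^{-}$ rather than $a=0$. This reorganisation also makes explicit that the proposition requires the general form of divide, namely an arbitrary decomposition $K_q=K_{q'}+K_{q''}$ into valid kinetics as in the example of Table~\ref{tabexxi}, rather than the proportional split suggested by the notation in the formal definition; with the proportional-only reading the forward implication would fail. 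With that adjustment your argument is complete.
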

The operations presented in this section are already studied in previous works, specifically their application in mass action. Please refer to Table \ref{app} for an overview of the applications of these operations as documented in earlier studies.

\begin{table}[h!]
\centering
\caption{Applications of network translating operations}\label{app}%
\begin{tabular}{lc}
\hline
Operations & Applications\\
\hline
Shifting & dynamic equivalence \cite{JOHN2014}, parametrization of steady states \cite{TONJOHN2018, JOHNMUPA2018}    \\
Dividing & ACR analysis \cite{MESH2022}  \\
Merging & derivation of stationary distributions \cite{HKK2021}  \\
\hline
\hline
\end{tabular}
\end{table}

\subsection{A necessary and sufficient condition for weakly reversible network translates}

Identifying a weakly reversible realization of a kinetic system is often very useful for analysis of the original system. Here, we show that the classical concept of positive dependence of a network is a necessary and sufficient condition for weakly reversible network translates. The property is usually defined as a special type of linear dependence of the network's reaction vectors:

\begin{definition}
    The reaction vectors of $\mathscr{N}$ are \textbf{positively dependent} if for each reaction $q: y \rightarrow y'$ there exists a positive number $k_q$ such that
    $$\sum_{q: y \rightarrow y' \in \mathscr{R}} k_q (y'-y) = 0.$$
\end{definition}

The following, more geometric characterization, does not seem to have been documented in the reaction network literature:
 
\begin{proposition}
\label{prop:posKerN*}
    $\mathscr{N}$ is positively dependent network $\Longleftrightarrow$ $\ker N$ contains a positive vector.
\end{proposition}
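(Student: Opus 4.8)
The plan is to observe that the defining equation of positive dependence is literally the statement that a particular vector lies in $\ker N$, so the proof is a short unwinding of definitions. First I would fix an enumeration $q_1,\dots,q_r$ of the reactions in $\mathscr{R}$ with $q_j\colon y_j\rightarrow y_j'$, so that by definition $N$ is the $m\times r$ matrix whose $j$-th column is the reaction vector $y_j'-y_j$. Then for any $k=(k_1,\dots,k_r)\in\mathbb{R}^r$ one has $Nk=\sum_{j=1}^r k_j\,(y_j'-y_j)$, which is exactly the linear combination of reaction vectors occurring in the definition of positive dependence.

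For the forward implication, assume $\mathscr{N}$ is positively dependent: there exist positive numbers $k_{q}$, $q\in\mathscr{R}$, with $\sum_{q\colon y\rightarrow y'}k_q\,(y'-y)=0$. Collecting them into $k=(k_{q_1},\dots,k_{q_r})$ gives $Nk=0$ with every entry strictly positive, i.e.\ $\ker N\cap\mathbb{R}^r_{>}\neq\emptyset$. Conversely, if $k=(k_1,\dots,k_r)\in\ker N$ has all $k_j>0$, then setting $k_{q_j}:=k_j$ for each $j$ produces positive coefficients with $\sum_j k_{q_j}(y_j'-y_j)=Nk=0$, which is precisely positive dependence of $\mathscr{N}$.

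The one point requiring care — and the only thing that could be called an obstacle — is terminological consistency: "positive vector" here must mean a vector all of whose components are strictly positive, matching the phrase "there exists a positive number $k_q$" in the definition of positive dependence (rather than the weaker "nonnegative and nonzero"). Once that convention is fixed, the equivalence is immediate; the content of the proposition is simply to record this dictionary between the combinatorial notion of positive dependence and the linear-algebraic condition on $\ker N$, which is the form in which it will be used in the sequel.
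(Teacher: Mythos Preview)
Your proof is correct and follows essentially the same idea as the paper: both arguments reduce to the identity $N a = \sum_{q} a_q (y_q' - y_q)$, from which the equivalence is immediate. The only cosmetic difference is that the paper derives this identity by routing through the factorization $N = Y I_a$ (writing $a$ in the characteristic-function basis of $\mathbb{R}^{\mathscr{R}}$, applying $I_a$, then $Y$), whereas you read it off directly from the column description of $N$; the paper's detour pays off slightly in the next corollary, where the intermediate object $\ker I_a$ is exactly what characterizes weak reversibility.
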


\begin{proof}
    A positive vector in $\mathbb{R}^\mathscr{R}$ can be written as $a=\sum_q a_q \omega_q$, where $\omega_q$ is the characteristic function of the reaction $q:y_q \rightarrow y_q'$ and $a_q > 0$. Hence, $I_a (a) = \sum_q a_q (\omega_{y'q} - \omega_{yq})$ where the $\omega$'s are the characteristic functions of the product and reactant complexes of $q$ in $\mathbb{R}^\mathscr{C}$. Applying the map $Y$ to both sides, we obtain $N(a) = \sum_q a_q(y_q' - y_q)$, which implies the claimed equivalence.
\end{proof}

We then obtain the following Proposition from \cite{FEIN1987} as a corollary, which directly identifies an important class of positively dependent reaction networks:
 
\begin{corollary}
The reaction vectors of a weakly reversible network are positively dependent.
\end{corollary}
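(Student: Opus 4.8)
Show that the reaction vectors of a weakly reversible network are positively dependent.

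**Approach.** By Proposition~\ref{prop:posKerN*}, it suffices to exhibit a strictly positive vector in $\ker N$. I would construct such a vector directly from the combinatorial structure of a weakly reversible network, using the fact that every reaction lies on a directed cycle.

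**Key steps.**

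\begin{enumerate}
    \item \textbf{Reduce to linkage classes / strongly connected components.} Recall that weak reversibility means every connected component of the reaction graph is strongly connected. Work on each linkage class separately; if we produce a positive vector in $\ker N$ supported on the reactions of each linkage class, their sum is a positive vector in $\ker N$ for all of $\mathscr{N}$ (the reaction vectors from different linkage classes occupy disjoint coordinate blocks of $N$, so no cancellation is needed across classes).
    \item \textbf{Decompose each strongly connected component into directed cycles.} Fix a strongly connected component. Since it is strongly connected, every edge $q\colon y \to y'$ can be completed to a directed closed walk (go from $y'$ back to $y$). Enumerate a finite family of directed cycles $C_1,\dots,C_p$ such that every reaction appears in at least one $C_j$ (for instance, one cycle through each edge).
    \item \textbf{Each directed cycle gives a vector in $\ker N$.} For a directed cycle $C = (y_{i_0} \to y_{i_1} \to \cdots \to y_{i_\ell} = y_{i_0})$, let $v_C \in \mathbb{R}^{\mathscr{R}}_{\ge 0}$ be the indicator of the edges traversed (with multiplicity if an edge is repeated). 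Then
    \[
        N v_C \;=\; \sum_{k=0}^{\ell-1} (y_{i_{k+1}} - y_{i_k}) \;=\; y_{i_\ell} - y_{i_0} \;=\; 0,
    \]
    a telescoping sum. So $v_C \in \ker N$, and $v_C \ge 0$ with $v_C \ne 0$.
    \item \textbf{Sum the cycle vectors.} Set $v := \sum_{j=1}^{p} v_{C_j}$. Then $v \in \ker N$ (kernel is a subspace), $v \ge 0$, and $v_q \ge 1 > 0$ for every reaction $q$ because every edge lies in some $C_j$. Hence $v$ is a strictly positive vector in $\ker N$.
    \item \textbf{Conclude.} Reassembling over linkage classes yields a positive vector in $\ker N$ for $\mathscr{N}$; by Proposition~\ref{prop:posKerN*}, $\mathscr{N}$ is positively dependent. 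Equivalently, writing $k_q := v_q > 0$ gives $\sum_{q\colon y\to y'} k_q (y'-y) = Nv = 0$, which is the definition.
\end{enumerate}

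**Main obstacle.** There is no deep obstacle — the result is essentially folklore, and the cited source \cite{FEIN1987} establishes it. The only point requiring a little care is step~2: arguing that finitely many directed cycles cover all edges of a finite strongly connected digraph. This is immediate (for each edge $y\to y'$, strong connectivity provides a directed path $y' \rightsquigarrow y$, and appending the edge closes a cycle; there are finitely many edges), but it should be stated cleanly so the telescoping argument in step~3 applies verbatim. A secondary bookkeeping point is the disjointness of reaction-coordinate blocks across linkage classes in step~1, which makes the "sum over linkage classes" step legitimate without any cancellation.
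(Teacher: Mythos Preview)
Your argument is correct. It differs from the paper's proof, which is a two-line appeal to a known characterization: a network is weakly reversible if and only if $\ker I_a$ contains a positive vector, and since $N = Y I_a$, any such vector also lies in $\ker N$. You instead build the positive kernel element by hand from a cycle cover of each strongly connected component and the telescoping identity $N v_C = 0$. In effect you are re-proving (the relevant direction of) the cited $\ker I_a$ characterization from scratch---your cycle indicators $v_C$ actually lie in $\ker I_a$ already, not just in $\ker N$. The paper's route is shorter and highlights the structural factorization $N = Y I_a$; your route is more self-contained and avoids invoking an external result. One phrasing nit: in Step~1, ``disjoint coordinate blocks of $N$'' is slightly imprecise (different linkage classes share species rows in general), but what you need---that the \emph{reaction} coordinates are disjoint, so the sum of the per-class positive kernel vectors is positive on every reaction---is exactly right.
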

\begin{proof}
    It is well known that a network is weakly reversible if and only if $\ker (I_a)$ contains a positive vector $z$. This vector is also contained in $\ker N$, since $N = YI_a$.
\end{proof}

We recall a further concept from \cite{FEIN1987}:

\begin{definition}
    By the \textbf{stoichiometric cone} for a network, denoted by $S_+$, we shall mean the set of vectos in $\mathbb{R}^m$ defined as follows: $\gamma \in \mathbb{R}^m$ is a member of the stoichiometric cone if and only if there exists a set of nonnegative numbers $\left\{ \alpha_{q} \right\}_{q \in \mathscr{R}}$ satisfying

%$$\gamma = \sum_\mathscr{R} \alpha_{i\rightarrow j} (y_j - y_i).$$

$$\gamma = \sum_{q: y \rightarrow y' \in \mathscr{R}} \alpha_q (y'-y).$$

Clearly, the stoichiometric cone for a network is contained in and might be smaller than the stiochiometric subspace for the network.
\end{definition}

We now combine results from Feinberg \cite{FEIN1987}, Hong et al. \cite{HONG2023} and Proposition \ref{prop:posKerN*} to a comprehensive characterization of positive dependence in the following Theorem:

\begin{theorem}
\label{t13*}
    Let $(\mathscr{N},K)$ be a kinetic system, with $N$ and $S$, the stoichiometric matrix and subspace of $\mathscr{N}$, respectively.
    Then the following statements are equivalent:
    \begin{enumerate}
        \item[i)] $\mathscr{N}$ is positively dependent.
        \item[ii)] $\ker N$ contains a positive vector.
        \item[iii)] $S_+ = S$, where $S_+$ is the stoichiometric cone of $\mathscr{N}$.
        \item[iv)] There are rate constants such that $E_+(\mathscr{N},K) \neq \emptyset$, i.e., the system has a positive equilibrium.
        \item[v)] $(\mathscr{N},K)$ has a weakly reversible network translate.    
    \end{enumerate}
\end{theorem}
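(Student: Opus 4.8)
The plan is to prove Theorem \ref{t13*} as a cycle of implications, organizing the five statements so that the hard analytic content is isolated in one step and everything else is a short appeal to earlier results. Concretely, I would establish i) $\Leftrightarrow$ ii) directly from Proposition \ref{prop:posKerN*} (this is literally that proposition, so it costs nothing), then close the loop ii) $\Rightarrow$ iii) $\Rightarrow$ iv) $\Rightarrow$ v) $\Rightarrow$ ii), or some convenient reordering of these. The implication i) (equivalently ii)) $\Rightarrow$ iii) is elementary: given a positive $k \in \ker N$, every $\gamma \in S$ is a finite combination $\sum_q \beta_q (y_q'-y_q)$ with $\beta_q \in \mathbb{R}$ (not necessarily nonnegative), and adding a large multiple $t\cdot k$ of the positive kernel vector shifts all coefficients $\beta_q + t k_q$ to be nonnegative for $t$ large, without changing the sum since $Nk = 0$; hence $\gamma \in S_+$, and the reverse inclusion $S_+ \subseteq S$ is automatic, giving $S_+ = S$. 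The converse iii) $\Rightarrow$ ii) is similar: if $S_+ = S$, then in particular $-\sum_q (y_q' - y_q) \in S = S_+$, so it has a nonnegative representation $\sum_q \alpha_q(y_q'-y_q)$; rearranging gives $\sum_q (1+\alpha_q)(y_q'-y_q) = 0$ with strictly positive coefficients $1 + \alpha_q$, i.e. a positive vector in $\ker N$.

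For iv), I would cite standard deficiency-theory facts: one direction, iv) $\Rightarrow$ ii), uses that if there is a positive equilibrium $x^*$ then $K(x^*)$ is a positive $r$-vector in $\ker N$ by definition of the SFRF $f(x)=NK(x)$ and the fact that $K_q(x^*)>0$ whenever $\mathrm{supp}\, x^* \supseteq \mathrm{supp}\, y_q$, which holds for $x^*$ positive. For the other direction, ii) $\Rightarrow$ iv), one can choose the kinetics (the theorem only asserts existence of rate constants) so that the equilibrium condition $NK(x)=0$ is met at some positive point; for instance, given positive $k \in \ker N$, pick a mass-action (or power-law) kinetics whose rate vector at $x=\mathbf{1}$ equals $k$, which is possible by choosing rate constants appropriately. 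This latitude in choosing rate constants is what makes iv) cheap.

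The substantive step is the equivalence of v) with the rest, and here I would lean directly on Hong et al.~\cite{HONG2023}: their necessary-and-sufficient condition for the existence of a weakly reversible network translate is precisely the condition that (after the reformulation in the earlier subsection) amounts to $\ker N$ containing a positive vector, or equivalently $S_+ = S$. So v) $\Leftrightarrow$ ii) is essentially a restatement of their theorem together with Proposition \ref{prop:posKerN*}; the paper's stated contribution is exactly the observation that their condition \emph{is} positive dependence. To be safe I would also give the easy direction v) $\Rightarrow$ i) independently: a weakly reversible translate $(\tilde{\mathscr{N}}, \tilde K)$ has positively dependent reaction vectors by the Corollary to Proposition \ref{prop:posKerN*}, and by Proposition \ref{prop:reacvec} the reaction vector set is unchanged, so $\sigma(\mathscr{R}) = \sigma(\tilde{\mathscr{R}})$; a positive combination of the $\tilde{\mathscr{N}}$-reaction vectors summing to zero then pushes down (collecting coefficients of repeated reaction vectors) to a positive combination of the $\mathscr{N}$-reaction vectors summing to zero, i.e.\ positive dependence of $\mathscr{N}$.

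The main obstacle I anticipate is the direction i) $\Rightarrow$ v): actually \emph{constructing} a weakly reversible translate from positive dependence, rather than merely invoking Hong et al.\ as a black box. If the paper wants a self-contained argument, one would take a positive $k \in \ker N$, and for each reaction $q: y_q \to y_q'$ interpret the cycle structure encoded by $k$ — since $\sum_q k_q(y_q'-y_q)=0$, the reaction vectors weighted by $k$ decompose into cycles in the lattice $\mathbb{Z}^m$ — and then use shifting to reposition reactants/products along these cycles so that the resulting complex graph is a union of directed cycles, i.e.\ weakly reversible, while dividing and merging reconcile multiplicities; the kinetics is transported along by the defining relation \eqref{hong}. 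Making the cycle decomposition and the shifting bookkeeping precise is the delicate part, but since Hong et al.\ have already done this, the cleanest exposition is to cite their result for v) $\Leftrightarrow$ (their condition) and only supply the new bridge (their condition) $\Leftrightarrow$ ii) $\Leftrightarrow$ i) $\Leftrightarrow$ iii), plus the short self-contained v) $\Rightarrow$ i) above as a sanity check.
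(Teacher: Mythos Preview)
Your proposal is correct and in fact supplies considerably more detail than the paper does: the paper's own proof of Theorem~\ref{t13*} consists entirely of the sentence ``For proofs of the equivalences between i) and iii)--v), see \cite{FEIN1987} and \cite{HONG2023},'' relying on Proposition~\ref{prop:posKerN*} for i)~$\Leftrightarrow$~ii) and on Feinberg and Hong et al.\ for everything else. Your explicit elementary arguments for ii)~$\Leftrightarrow$~iii) and ii)~$\Leftrightarrow$~iv), and your self-contained v)~$\Rightarrow$~i), are all sound and go beyond what the paper actually writes; the only place both you and the paper defer to a black box is the constructive direction i)~$\Rightarrow$~v), which is indeed the nontrivial content of Hong et al.'s result.
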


For proofs of the equivalences between i) and iii) – v), see \cite{FEIN1987} and \cite{HONG2023}.
\begin{remark}
    The previous Theorem is a striking example of the interplay between structural properties (statements i– iii) and kinetic properties (statements iv – v) of a kinetic system.
\end{remark} 
\begin{corollary}
    Let $\mathscr{N}=\mathscr{N}_1 \cup \cdots \cup \mathscr{N}_k$ be an independent decomposition. Then $\mathscr{N}$ is positive dependent if and only if  each subnetwork $\mathscr{N}_i$ is positive dependent.
\end{corollary}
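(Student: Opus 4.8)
The plan is to reduce everything to statement (ii) of Theorem \ref{t13*}: a network is positively dependent if and only if $\ker N$ contains a (componentwise) positive vector. Recall that an independent decomposition $\mathscr{N} = \mathscr{N}_1 \cup \cdots \cup \mathscr{N}_k$ partitions the reaction set, $\mathscr{R} = \mathscr{R}_1 \sqcup \cdots \sqcup \mathscr{R}_k$, hence $\mathbb{R}^{\mathscr{R}} = \bigoplus_{i=1}^k \mathbb{R}^{\mathscr{R}_i}$, and additionally has the defining property that the stoichiometric subspaces form a direct sum, $S = S_1 \oplus \cdots \oplus S_k$. With respect to this partition of the columns of $N$, for any $v = (v_1,\dots,v_k)$ with $v_i \in \mathbb{R}^{\mathscr{R}_i}$ we have $Nv = \sum_{i=1}^k N_i v_i$, where $N_i$ is the stoichiometric matrix of $\mathscr{N}_i$ and $N_i v_i \in S_i$. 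Also, $v$ is positive as an element of $\mathbb{R}^{\mathscr{R}}$ exactly when each block $v_i$ is positive in $\mathbb{R}^{\mathscr{R}_i}$.

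For the ``if'' direction, suppose each $\mathscr{N}_i$ is positively dependent. By Theorem \ref{t13*} pick a positive $v_i \in \ker N_i$ for each $i$, and set $v = (v_1,\dots,v_k)$. Then $v$ is a positive vector in $\mathbb{R}^{\mathscr{R}}$ with $Nv = \sum_{i=1}^k N_i v_i = 0$, so $\ker N$ contains a positive vector and $\mathscr{N}$ is positively dependent. Note this argument uses only that the reactions partition, not independence.

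For the ``only if'' direction, suppose $\mathscr{N}$ is positively dependent, and pick a positive $v = (v_1,\dots,v_k) \in \ker N$. Then $0 = Nv = \sum_{i=1}^k N_i v_i$ with each $N_i v_i \in S_i$. Because the decomposition is independent, the sum $S = S_1 \oplus \cdots \oplus S_k$ is direct, so the representation of $0$ as a sum of vectors $N_i v_i \in S_i$ is unique, forcing $N_i v_i = 0$ for every $i$. Hence each $v_i$ is a positive vector in $\ker N_i$, and by Theorem \ref{t13*} each $\mathscr{N}_i$ is positively dependent.

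There is no deep obstacle here; the only point requiring care is locating exactly where independence enters — it is precisely the directness of $S = \bigoplus_i S_i$ that rules out cancellation among the $N_i v_i$ in the ``only if'' direction, whereas the ``if'' direction needs nothing beyond the reaction partition. One should also state explicitly the elementary fact that concatenating the blocks $v_i$ yields a positive vector iff each block is positive, since that is what glues the per-subnetwork positive kernel vectors into a positive kernel vector for $\mathscr{N}$ and vice versa.
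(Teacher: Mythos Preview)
Your proof is correct and takes a genuinely different route from the paper's. The paper proves ($\Rightarrow$) via characterization (iv) of Theorem~\ref{t13*} (existence of rate constants giving a positive equilibrium, together with the fact that for an independent decomposition a positive equilibrium of $\mathscr{N}$ is one of each $\mathscr{N}_i$), and proves ($\Leftarrow$) via characterization (v) (each $\mathscr{N}_i$ has a weakly reversible translate, and the union of these translates is a weakly reversible translate of $\mathscr{N}$). You instead work entirely through characterization (ii), the positive vector in $\ker N$, and argue directly with the block structure $Nv = \sum_i N_i v_i$.

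Your approach is more elementary --- pure linear algebra, no kinetics or translation machinery invoked --- and it makes precise exactly where independence is used: only in the ($\Rightarrow$) direction, to force each summand $N_i v_i \in S_i$ to vanish via the directness of $S = \bigoplus_i S_i$. The paper's argument is a nice illustration of the interplay among the equivalent conditions in Theorem~\ref{t13*}, but it obscures this point somewhat (and in its ($\Leftarrow$) step even remarks that the translated decomposition remains independent, which your argument shows is unnecessary for that direction).
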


\begin{proof}
($\Rightarrow$) The proof easily follows from the equivalence of the property with the existence of rate constants for each the kinetic system has a positive equilibrium. The set of rate constants for the whole network also provide such for each subnetwork, showing that they each are positive dependent. ($\Leftarrow$) Each subnetwork can be translated to a weakly reversible subnetwork, and their union is weakly reversible. It also forms a decomposition because the translation leaves the set of reaction vectors unchanged. Clearly, this also means that the decomposition is independent. Hence we have weakly reversible translate for the whole network and it follows that it is positive dependent.
\end{proof}

\subsection{TOWARDZ – an algorithm and MATLAB tool for weakly reversible deficiency zero translates of mass action systems}
Hong et al. \cite{HONG2023}  developed a computational framework (called TOWARDZ) for determining whether a network can be translated to a weakly reversible deficiency zero system. The goal of this section is to elaborate on some of aspects of network translation of TOWARDZ as a particularly useful subset of network transformations.

\subsubsection{Some observations about the merge operation}
If two reactions $R_1$ and $R_2$ with the same reaction vector have kinetics $K_1$ and $K_2$, then, to maintain dynamic equivalence, their merge $R'$ has the kinetics $K_1 + K_2$.  If $K_i$ belong to a set $\mathscr{K}$, then their sum may or may not belong to $\mathscr{K}$. Consider the following examples: rate constant-interaction map decomposable kinetics (RIDK), complex factorizable kinetics (CFK) and poly-PL systems (PYK). See Appendix \ref{fundamentals} for precise definitions.
 
\begin{example}
    Let $\mathscr{K} =  RIDK$. Then $K_1 + K_2$ is in RIDK. If $K_1 = k_1I_1$ and $K_2= k_2I_2$ , then  $K_1 + K_2 = k'I'$, with $k' = k_1$ and $I'= I_1 + \frac{k_2}{k_1}I_2$ - the latter is clearly an interaction map.
\end{example}
 
\begin{example}
    Similarly, if $\mathscr{K} =  CFK$. Then $K_1 + K_2$ is in $CFK$. This is because $K_1 + K_2 = \rho'\psi'$, with $\psi'= \psi_1 + \frac{k_2}{k_1} \psi_2.$
\end{example}
 
\begin{example}
    The set of poly-PL kinetics $PYK$ is characterized by the fact that each kinetics can be described as the sum of $PLK$ systems, so clearly the sum of two elements of $PYK$ is again $PYK$.
\end{example}

The 3 examples above fall under the following sufficient condition:
 
\begin{proposition}
    If the set of kinetics $\mathscr{K}$ is an additive semigroup, then the network translate of any kinetic system $(\mathscr{N},K)$ with $K$ from $\mathscr{K}$, also has a kinetics from $\mathscr{K}$.
\end{proposition}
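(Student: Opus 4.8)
The plan is to reduce the statement to the behaviour of the three elementary network operations. By the proposition of Hong et al.\ recalled just above, every network translate $(\tilde{\mathscr N},\tilde K)$ of $(\mathscr N,K)$ is obtained from $(\mathscr N,K)$ by a finite sequence of shift, divide, and merge operations. So I would argue by induction on the length of such a sequence: the base case is $(\mathscr N,K)$ itself, whose rate functions all lie in $\mathscr K$ by hypothesis, and the inductive step is to check that applying a single shift, divide, or merge to a kinetic system all of whose rate functions lie in $\mathscr K$ again produces such a system. Since the property ``every rate function belongs to $\mathscr K$'' is then maintained at each step, the translate inherits it.

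Two of the three checks are immediate. Shifting replaces a reaction $y\to y'$ by $y+z\to y'+z$ while keeping $K_{q'}=K_q$, so the rate function — as a function on $\Omega$ — is literally unchanged and still lies in $\mathscr K$. Merging replaces two reactions with a common reaction vector and rate functions $K_1,K_2\in\mathscr K$ by one reaction with rate function $K_1+K_2$; this lies in $\mathscr K$ precisely because $\mathscr K$, being an additive semigroup, is closed under addition. This is where the hypothesis does its work, and it is the common abstraction of the three worked examples (RIDK, CFK, PYK) preceding the proposition: in each of them the verification that $K_1+K_2$ stays in the class is exactly the semigroup property.

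The step that needs a word of care is dividing, where a reaction with rate function $K_q=(r_1+r_2)I_K$ is split into two reactions carrying the rescaled pieces $r_1 I_K$ and $r_2 I_K$. To keep the translate inside $\mathscr K$ one needs these summands to lie in $\mathscr K$, i.e.\ closure of $\mathscr K$ under positive rescaling of the rate constant — which does not formally follow from ``additive semigroup'' alone (for instance $\{\,kI : k\ge 1\,\}$ is an additive semigroup that is not closed under such rescaling). I would settle this either by observing that it holds for every class relevant here — rescaling the rate constant leaves the interaction map and the complex factorization untouched (RIDK, CFK) and only rescales the coefficients of the constituent power laws (PYK), and it is trivial for mass action, the case used by TOWARDZ — or, for a verbatim reading of the hypothesis, by adjoining the (equally mild) requirement that $\mathscr K$ be closed under the splitting $kI = r_1 I + r_2 I$ with $r_1,r_2>0$. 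I expect this divide step to be the only genuine obstacle; shifting and merging are routine, with merging being the place the hypothesis is actually used.
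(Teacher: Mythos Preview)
The paper does not supply a formal proof of this proposition; it is stated as an immediate abstraction of the three preceding examples, all of which verify only that the \emph{merge} operation preserves membership in the class. The section heading (``Some observations about the merge operation'') and the surrounding text make clear that the authors regard merge as the sole operation where closure could fail, and the additive-semigroup hypothesis is tailored to that case alone.

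Your argument is therefore more complete than what the paper offers. You correctly structure the proof as an induction over the generating operations and handle shift and merge just as the paper implicitly intends. More importantly, you spot a genuine gap that the paper glosses over: the \emph{divide} operation replaces $K_q$ by the two pieces $r_1 I_K$ and $r_2 I_K$, and closure of $\mathscr{K}$ under such a splitting does not follow from the additive-semigroup axiom alone --- your counterexample $\{kI : k\ge 1\}$ is apt. It is telling that the very next subsection of the paper (``A partial solution of the `merge problem'\,'') introduces precisely the missing ingredient, closure under positive scalar multiples, but only as a separate hypothesis for classes that are \emph{not} additive semigroups, never as a repair to the present proposition. So your reading is the more careful one: as literally stated, the proposition needs either the extra rescaling closure you propose, or the tacit convention (present throughout the paper but never made explicit) that every kinetics class under discussion is closed under modification of rate constants.
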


\subsubsection{A partial solution of the ``merge problem''}

However, many kinetic sets do not have the additive semigroup property. In these sets, it depends on the individual pairs of reactions whether the merge operation results in a kinetic from the same set. A sufficient condition is the following:
 
\begin{proposition}
    Let $\mathscr{K}$ be a set of RID kinetics which contains positive multiples of all its elements and $(\mathscr{N},K)$ a kinetic system with $K \in \mathscr{K}$.  If the reactions $q, q'$ have kinetics with identical interaction maps $I = I'$, then their merge's kinetics is also in $K$.
\end{proposition}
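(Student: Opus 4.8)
The plan is to reduce the statement to the scalar-closure hypothesis on $\mathscr{K}$ by exploiting the RID decomposition directly. First I would use that $K \in \mathscr{K}$ is RID to write the two reaction kinetics in decomposed form, $K_q = k_q I$ and $K_{q'} = k_{q'} I'$, with positive rate constants $k_q, k_{q'} \in \mathbb{R}_{> 0}$ and interaction maps $I, I'$. By hypothesis $I = I'$, so adding the rate functions pointwise on $\Omega$ yields $K_q + K_{q'} = k_q I + k_{q'} I = (k_q + k_{q'})\,I$.

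The second step is to observe that this is a positive multiple of $K_q$: one has $(k_q + k_{q'})\,I = \lambda K_q$ with $\lambda = (k_q + k_{q'})/k_q > 0$. Since $K_q \in \mathscr{K}$ and $\mathscr{K}$ is assumed to contain all positive multiples of its elements, it follows that $K_q + K_{q'} = \lambda K_q \in \mathscr{K}$. By the definition of the merge operation, the kinetics assigned to the merged reaction is precisely $K_q + K_{q'}$ (the two reactions sharing a reaction vector being a standing precondition for merging), so the merge's kinetics lies in $\mathscr{K}$, which is the claim.

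There is essentially no obstacle here; the only points meriting a word of care are that the RID decomposition of a kinetics is not unique, so one fixes an arbitrary choice of rate constant and interaction map for each reaction and uses only their existence, and that the hypothesis $I = I'$ is exactly what permits the common interaction map to be factored out. Without that hypothesis, $k_q I + k_{q'} I'$ is a combination of two distinct interaction maps carrying different weights, which need not be expressible in the form (rate constant)$\times$(single interaction map), and one would instead have to invoke the stronger additive-semigroup closure used in the preceding Proposition.
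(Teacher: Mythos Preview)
Your argument is correct and is exactly the intended one: factor $K_q + K_{q'} = (k_q + k_{q'})I$ and recognise this as a positive multiple of $K_q$. The paper in fact states this proposition without proof, treating it as immediate from the definitions, so there is nothing further to compare against.
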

 
\begin{example}
    The set of power law kinetics PLK and many of its subsets e.g. PL-RDK have the ``semi-module'' property cited above, and hence ``merging via rate constants'' can be performed. Note however, that there are important subsets, such as factor span surjective kinetics, where reactions with different reactant complexes never have the identical interaction maps. Note in particular that the set of mass action kinetic is a factor span surjective subset of PL-NIK.
\end{example}
 
\begin{example}
    The set of Hill-type kinetics (HTK) is another important set which has the positive semi-module property, although it is also not an additive semigroup.
\end{example}

\section{Equilibria analysis of biochemical systems with positive deficiency network translates}
\label{Sec:3}
In this Section, we show that the examples of olfactory calcium signaling and adipose insulin signaling also provide new insights into equilibria although the zero deficiency property is absent.

\subsection{A mass action system for calcium signaling in olfactory cilia}
\label{4.2.3}
In \cite{SF2012}, Shinar and Feinberg constructed a mass action kinetic realization of a model of calcium signaling in olfactory cilia by Reidl et al. \cite{REIN2006}. \\
\begin{equation}
\nonumber
\begin{split}
R_{1}: & A \rightarrow  B \\
R_{2}: & B \rightarrow A \\ 
R_{3}: & B \rightarrow  D+B  \\
R_{4}: & C+4D \rightarrow E  \\
R_{5}: & E \rightarrow C+4D  \\
\end{split} \quad \quad
\begin{split}
R_{6}: & B+E \rightarrow  F  \\
R_{7}: & F \rightarrow A+E  \\
R_{8}: & A+E \rightarrow F  \\
R_{9}: & D \rightarrow 0.  
\end{split}
\end{equation} 
The network has deficiency $= 10 - 4 - 4 = 2$ and being t-minimal, has the KSSC property. It is also concordant. (See Definitions \ref{ODE:KSSC} and \ref{concordance} in Appendix \ref{fundamentals}). The following transformation produces a weakly reversible realization:

\begin{enumerate}
    \item Shifting $R_9$ by adding $B$ yields $R_{9}: D+B \rightarrow B$.
    \item Dividing $R_1$ yields $R_{1}: A \rightarrow B$ and $R_{10}: A+E \rightarrow B+E$ with rate constants $k_1$ and $k_2$, respectively. Note that $k_1 + k_2 = r_1$.
\end{enumerate}

The weakly reversible kinetic realization is hence the following:
\begin{equation}
\nonumber
\begin{split}
R_{1}: & A \rightarrow  B \\
R_{2}: & B \rightarrow A \\ 
R_{3}: & B \rightarrow  D+B  \\
R_{4}: & C+4D \rightarrow E  \\
R_{5}: & E \rightarrow C+4D  \\
\end{split} \quad \quad
\begin{split}
R_{6}: & B+E \rightarrow  F  \\
R_{7}: & F \rightarrow A+E  \\
R_{8}: & A+E \rightarrow F  \\
R_{9}: & D+B \rightarrow B  \\
R_{10}: & A+E \rightarrow B+E.  \\
\end{split}
\end{equation} \\
The kinetic order matrix is given by:
$$\begin{blockarray}{cccccccc}
  & A & B & C & D & E & F & \text{rate cons.} \\
\begin{block}{c|cccccc|c}
R_1 & 1 & 0 & 0 & 0 & 0 & 0 & k_1 \\
R_2 & 0 & 1 & 0 & 0 & 0 & 0 & r_2 \\
R_3 & 0 & 1 & 0 & 0 & 0 & 0 & r_3 \\
R_4 & 0 & 0 & 1 & 4 & 0 & 0 & r_4 \\
R_5 & 0 & 0 & 0 & 0 & 1 & 0 & r_5 \\
R_6 & 0 & 1 & 0 & 0 & 0 & 0 & r_6 \\
R_7 & 0 & 0 & 0 & 0 & 0 & 1 & r_7 \\
R_8 & 1 & 0 & 0 & 0 & 1 & 0 & r_8 \\
R_9 & 0 & 0 & 1 & 0 & 0 & 0 & r_9 \\
R_{10} & 1 & 0 & 0 & 0 & 0 & 0 & k_2 \\
\end{block}
\end{blockarray}.$$

The transformed system has deficiency $=8-3- 4=1$, and has the KSSC property being weakly reversible and factor span surjective.  The concordance test in CRNToolbox shows that it is discordant. Despite its discordance and positive deficiency, the realization turns out to be useful for the equilibria analysis of the system. We begin our analysis  with the fact that a basis of $\tilde{S}^\perp$ is given by $$\{ [1 \quad 1 \quad 1 \quad \frac{-1}{4} \quad 0 \quad 1 ] ^\top \}.$$

This implies that $\tilde{s} = 5$, hence the transform has kinetic deficiency $= 8 – 3 – 5 = 0$.
It follows from the results in \cite{MURE2014} that the transform is unconditionally complex balanced, i.e., it is complex balanced for any set of rate constants. Being complex balanced and PL-RDK, the transform is also CLP. (See Definition \ref{def:CLP} in Appendix \ref{A2:robustness}). \\

It follows from the species hyperplane criterion in Lao et al. \cite{LLMM2022} that the system has:
\begin{enumerate}
    \item BCR in species E and
    \item No BCR (and consequently, no ACR) in the remaining species.
\end{enumerate}

Two weaknesses of the transform do not allow the conclusion of the PLP property (and consequently ACR in species E):
\begin{enumerate}
\item Its deficiency is non-zero, hence we cannot immediately conclude that it is absolutely complex balanced, and hence by Jose et al. \cite{MTJ2022}, we cannot conclude PLP and apply the species hyperplane criterion to conclude ACR in E.
\item 	Its linkage classes are not independent. Otherwise, we note that, first, that the system is monostationary, i.e., there is at most one positive equilibrium in any stoichiometric class. This claim follows from the fact that the original system has weakly monotonic kinetics (being mass action), and hence injective and furthermore, that it and the transform have identical stoichiometric classes and, second, the system is factor span surjective.
\end{enumerate}
The following result from \cite{HEME2023} could be used:
\begin{proposition}
    Let $(\mathscr{N},K)$ be a cycle terminal PL-FSK system with ILC and non-empty $E_+$. Then $(\mathscr{N},K)$ is multi-PLP implies $(\mathscr{N},K)$ is multi-stationary.
\end{proposition}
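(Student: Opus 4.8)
The plan is to unpack the two hypotheses in tandem and bridge them with a Birch-type transversality argument. First I would recall the relevant definitions from \cite{HEME2023}: $(\mathscr{N},K)$ is \emph{multi-PLP} when its positive equilibria set decomposes as $E_+(\mathscr{N},K) = \bigcup_{k=1}^{\ell} E_k$ with $\ell \geq 2$, where the $E_k$ are pairwise disjoint PLP-sets sharing one common parametrizing subspace $\widehat{P} \subseteq \mathbb{R}^m$ (the kinetic flux subspace of the system), i.e.\ $E_k = \{\, x \in \mathbb{R}^m_{>0} : \log x - \log x_k^\ast \in \widehat{P}^{\perp} \,\}$ for some $x_k^\ast \in E_+$; whereas $(\mathscr{N},K)$ is \emph{multi-stationary} exactly when $|E_+(\mathscr{N},K) \cap (x + S)| \geq 2$ for some stoichiometric class $x + S$. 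Thus the goal is to produce a single stoichiometric class that is met by at least two of the $E_k$.

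The core step is to use the structural hypotheses to control the position of $\widehat{P}$ relative to $S$. Since $(\mathscr{N},K)$ is cycle terminal and PL-FSK with independent linkage classes and $E_+ \neq \emptyset$, the kinetic flux subspace $\widehat{P}$ is well defined, and I would invoke that factor span surjectivity together with the independence of the linkage classes yields that $\widehat{P}$ and $S$ are sign-compatible with $\dim \widehat{P} = \dim S$ in the sense of Müller--Regensburger (cf.\ \cite{MURE2014}). By the generalized Birch theorem this forces the stoichiometric projection restricted to each PLP-set to be a bijection onto the positive stoichiometric classes; in particular, fixing $x_1^\ast \in E_1$ (which exists because $E_+ \neq \emptyset$), every $E_k$ meets the class $x_1^\ast + S$ in exactly one point $z_k$, with $z_1 = x_1^\ast$. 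If instead the hypotheses only yield $\dim \widehat{P} \le \dim S$, each $E_k$ still meets $x_1^\ast + S$ in a nonempty set, which already suffices and makes the conclusion stronger.

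Finally I would conclude: the points $z_1,\dots,z_\ell$ all lie in $E_+(\mathscr{N},K) \cap (x_1^\ast + S)$, and they are pairwise distinct because $z_j \in E_j$ while the $E_k$ are pairwise disjoint; since $\ell \geq 2$, the class $x_1^\ast + S$ contains at least two positive equilibria, so $(\mathscr{N},K)$ is multi-stationary. The main obstacle is the middle step: verifying that the common parametrizing subspace is transversal to $S$ in the precise (equal-dimension, sign-compatible) sense needed for the Birch-type intersection statement — this is exactly where cycle terminality, the factor span surjective property, and the independence of the linkage classes must all be used, and also the place where one must be careful about whether each \emph{individual} PLP-component, rather than only their union, surjects onto the stoichiometric classes. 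If \cite{HEME2023} already defines ``multi-PLP'' relative to a fixed stoichiometric class, this step collapses and the proposition is essentially immediate; otherwise it carries all the content, while the remaining steps are bookkeeping.
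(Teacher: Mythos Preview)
The paper does not supply a proof of this proposition: it is quoted verbatim as ``the following result from \cite{HEME2023}'', with no argument given. So there is no paper-proof to compare against, only the setup surrounding it.

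That setup, however, resolves your hedge in the final paragraph. In the paper's Definition of poly-PLP the multiplicity is $\mu := |E_+ \cap Q|$, and in the companion theorem (also from \cite{HEME2023}) this is instantiated with $P$ a \emph{positive stoichiometric class} and $E_+ = \bigcup_{x^\ast \in E_+ \cap P} Q(x^\ast)$. Thus ``multi-PLP'' here literally means $|E_+ \cap P| \ge 2$ for a stoichiometric class $P$, which is the definition of multi-stationarity. So the intended argument is the immediate one you flagged at the end, not a transversality argument.

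Your main line of attack --- the Birch-type step --- has a genuine gap as written. You assert that cycle terminality, PL-FSK, and ILC force the kinetic flux subspace $\widehat{P}=\tilde S$ to be sign-compatible with $S$ and of equal dimension in the M\"uller--Regensburger sense. None of those hypotheses controls sign vectors: factor span surjectivity says only that distinct reactant complexes have distinct kinetic-order rows, and ILC is a deficiency-additivity condition; neither constrains $\sigma(\tilde S)$ relative to $\sigma(S)$. Without that, the generalized Birch theorem does not apply and you cannot conclude that each individual LP-component $E_k$ meets the class $x_1^\ast + S$. The argument is rescued not by repairing this step but by recognizing (as you did conditionally) that the indexing of the LP-components is already by a stoichiometric class, which makes the Birch machinery unnecessary.
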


\begin{corollary}
    Any complex balanced, monostationary PL-FSK system with ILC is absolutely complex balanced. 
\end{corollary}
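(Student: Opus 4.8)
\emph{Proof proposal.} The plan is to read the preceding Proposition in contrapositive form and feed it into the known log-parametrization structure of complex balanced power-law systems. First I would check that every hypothesis of the Proposition is available for $(\mathscr{N},K)$: complex balancedness forces the network $\mathscr{N}$ to be weakly reversible (a PL-FSK system is CFK, so complex balance is defined and the Horn--Jackson/Müller--Regensburger-type implication applies), hence $\mathscr{N}$ is cycle terminal; and it guarantees a complex balanced, in particular positive, equilibrium, so $E_+(\mathscr{N},K)\neq\emptyset$. Together with the standing hypotheses that $(\mathscr{N},K)$ is PL-FSK (hence PL-RDK) and has independent linkage classes, this places $(\mathscr{N},K)$ squarely within the scope of the Proposition.

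Next I would apply the Proposition contrapositively: it states that multi-PLP implies multi-stationary, so if $(\mathscr{N},K)$ is \emph{not} multi-stationary --- i.e. monostationary, as assumed --- then it is \emph{not} multi-PLP; equivalently, $E_+(\mathscr{N},K)$ is a single log-parametrized set rather than a non-trivial union of such sets.

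Finally I would combine this with the fact, recalled earlier in the paper, that a complex balanced PL-RDK (hence PL-FSK) system is CLP: its set of complex balanced equilibria, say $E_{\mathrm{cb}}(\mathscr{N},K)$, is a log-parametrized, nonempty subset of $E_+(\mathscr{N},K)$, and for a cycle-terminal PL-FSK system with ILC it is one of the log-parametrized pieces whose union makes up $E_+(\mathscr{N},K)$. Since $(\mathscr{N},K)$ is not multi-PLP, $E_+(\mathscr{N},K)$ is a single such piece; as $E_{\mathrm{cb}}(\mathscr{N},K)$ is one of them and is nonempty, that single piece must be $E_{\mathrm{cb}}(\mathscr{N},K)$. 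Hence $E_+(\mathscr{N},K)=E_{\mathrm{cb}}(\mathscr{N},K)$, i.e. every positive equilibrium is complex balanced, which is exactly the assertion that $(\mathscr{N},K)$ is absolutely complex balanced.

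The hard part will be the last step, namely the precise bookkeeping behind ``not multi-PLP $\Rightarrow E_+=E_{\mathrm{cb}}$''. One needs to know that the log-parametrized decomposition of $E_+(\mathscr{N},K)$ for a cycle-terminal PL-FSK system with ILC uses the same parametrizing subspace as the CLP description of $E_{\mathrm{cb}}(\mathscr{N},K)$ and has $E_{\mathrm{cb}}(\mathscr{N},K)$ as one of its components, so that collapsing to a single component gives genuine equality rather than merely $E_+\supseteq E_{\mathrm{cb}}$ with a strictly larger parametrizing subspace. If this identification is not already on record in the references underlying the Proposition, it would have to be argued directly --- e.g. by a Birch-type dimension comparison of $E_{\mathrm{cb}}(\mathscr{N},K)$ and $E_+(\mathscr{N},K)$ on each stoichiometric class --- with the remainder of the proof being a purely formal application of the Proposition.
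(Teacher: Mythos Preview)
The paper does not supply a proof of this Corollary; it is simply stated as a consequence of the preceding Proposition (both imported from \cite{HEME2023}). Your argument is the natural one and is essentially correct: verify the hypotheses of the Proposition via complex balancedness (weak reversibility, hence cycle terminality, and $E_+\neq\emptyset$), take the contrapositive to get that the system is not multi-PLP, and then identify $E_+$ with $Z_+$.

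Your flagged ``hard part'' is in fact already settled by the references the paper cites, so you can drop the caveat. The Hernandez--Mendoza theorem quoted just below the Corollary states that, for a cycle terminal PL-FSK system with ILC, $E_+=\bigcup_{x^*\in E_+\cap P} Q(x^*)$ with $Q(x^*)=\{x:\log x-\log x^*\in\tilde S^\perp\}$, i.e.\ the flux subspace is explicitly $\tilde S$. The M\"uller--Regensburger result recalled in the appendix says that for a complex balanced PL-RDK system $Z_+$ is likewise the log-coset through any $x^*\in Z_+$ with the \emph{same} parametrizing subspace $\tilde S^\perp$. Hence $Z_+=Q(x^*)$ for any complex balanced $x^*$, so $Z_+$ is literally one of the components in the poly-PLP decomposition of $E_+$. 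Once ``not multi-PLP'' forces a single component, $E_+=Z_+$ follows immediately---no extra Birch-type dimension count is needed.
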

 
Nevertheless, it has been useful to establish non-ACR of the remaining 5 species of the system.

\subsection{A mass action system for metabolic insulin signaling in adipocytes}
\label{Sec:MAK_adipocytes}

In \cite{LUML2022}, Luvenia et al. constructed a MAK realization of a model of metabolic insulin signaling in adipocytes by Sedaghat et al. \cite{SEDA2002}.

\begin{equation}
\nonumber
\begin{split}
R_{1} : & X_{2} \rightarrow X_{3} \\
R_{2} : & X_{3} \rightarrow X_{2} \\
R_{3} : & X_{5} \rightarrow X_{4} \\
R_{4} : & X_{4} \rightarrow X_{5} \\
R_{5} : & X_{3} \rightarrow X_{5} \\
R_{6} : & X_{5} \rightarrow X_{2} \\
R_{7} : & X_{2} \rightarrow X_{6} \\
R_{8} : & X_{6} \rightarrow X_{2} \\
R_{9} : & X_{4} \rightarrow X_{7} \\
R_{10} : & X_{7} \rightarrow X_{4} \\
R_{11} : & X_{5} \rightarrow X_{8} \\
R_{12} : & X_{8} \rightarrow X_{5} \\
R_{13} : & 0 \rightarrow X_{6} \\
R_{14} : & X_{6} \rightarrow 0 \\
R_{15} : & X_{7} \rightarrow X_{6} \\
R_{16} : & X_{8} \rightarrow X_{6} \\
R_{17} : & X_{9} + X_{4} \rightarrow X_{10} + X_{4} \\
R_{18} : & X_{9} + X_{5} \rightarrow X_{10} + X_{5} \\
\end{split} \quad \quad
\begin{split}
R_{19} : & X_{10} \rightarrow X_{9} \\
R_{20} : & X_{10} + X_{11} \rightarrow X_{12} \\
R_{21} : & X_{12} \rightarrow X_{10} + X_{11} \\
R_{22} : & X_{14} + X_{12} \rightarrow X_{13} + X_{12} \\
R_{23} : & X_{13} \rightarrow X_{14} \\
R_{24} : & X_{15} \rightarrow X_{13} \\
R_{25} : & X_{13} \rightarrow X_{15} \\
R_{26} : & X_{16} + X_{13} \rightarrow X_{17} + X_{13} \\
R_{27} : & X_{17} \rightarrow X_{16} \\
R_{28} : & X_{18} + X_{13} \rightarrow X_{19} + X_{13} \\
R_{29} : & X_{19} \rightarrow X_{18} \\
R_{30} : & X_{20} \rightarrow X_{21} \\
R_{31} : & X_{21} \rightarrow X_{20} \\
R_{32} : & X_{20} + X_{17} \rightarrow X_{21} + X_{17} \\
R_{33} : & X_{20} + X_{19} \rightarrow X_{21} + X_{19} \\
R_{34} : & 0 \rightarrow X_{20} \\
R_{35} : & X_{20} \rightarrow 0. \\ 
\end{split}
\end{equation} \\

Lubenia et al. \cite{LUML2022} considered an independent deficiency-oriented decomposition $\mathscr{N}=\mathscr{N}_A \cup \mathscr{N}_B$ where
$$\mathscr{N}_A = \{ R_1, \cdots R_{14}, R_{20},R_{21},R_{24},R_{25},R_{34},R_{35} \} \text{ and } \mathscr{N}_B = \mathscr{N} - \mathscr{N}_{A}.$$
Here, $\mathscr{N}_A$ is both weakly reversible and nonconservative, whereas subnetwork $\mathscr{N}_B$ is not weakly reversible and is conservative. They are both t-minimal and concordant, making them monostationary. The following operation produces a weakly reversible realization of $\mathscr{N}_B$:

\begin{enumerate}
    \item Dividing $R_{19}$ yields $R_{19A} : X_{10} + X_4 \rightarrow X_{9} + X_4$ and $R_{19B} : X_{10} + X_5 \rightarrow X_{9} + X_5$.
    \item Shifting $R_{23}$ by adding $X_{12}$ yields $R_{23} : X_{13} + X_{12} \rightarrow X_{14} + X_{12}$.
    \item Shifting $R_{27}$ and $R_{29}$ by adding $X_{13}$ yields $R_{27} : X_{17} + X_{13}  \rightarrow X_{16} + X_{13}$ and $R_{29} : X_{19} + X_{13} \rightarrow X_{18} + X_{13}$.
    \item Dividing $R_{31}$ yields $R_{31A} : X_{21} \rightarrow X_{20}$ and  $R_{31B} : X_{21} + X_{17} \rightarrow X_{20} + X_{17}$ and $R_{31C} : X_{21} + X_{19} \rightarrow X_{20} + X_{19}$.
\end{enumerate}

The weakly reversible kinetic realization is hence the following:

\begin{equation}
\nonumber
\begin{split}
R_{17} : & X_{9} + X_{4} \rightarrow X_{10} + X_{4} \\
R_{18} : & X_{9} + X_{5} \rightarrow X_{10} + X_{5} \\
R_{19A} : & X_{10} + X_4 \rightarrow X_{9} + X_4 \\
R_{19B} : & X_{10} + X_5 \rightarrow X_{9} + X_5 \\
R_{22} : & X_{14} + X_{12} \rightarrow X_{13} + X_{12} \\
R_{23} : & X_{13} + X_{12} \rightarrow X_{14} + X_{12} \\
R_{26} : & X_{16} + X_{13} \rightarrow X_{17} + X_{13} \\
R_{27} : & X_{17} + X_{13}  \rightarrow X_{16} + X_{13}  \\
\end{split} \quad
\begin{split}
R_{28} : & X_{18} + X_{13} \rightarrow X_{19} + X_{13} \\
R_{29} : & X_{19} + X_{13} \rightarrow X_{18} + X_{13} \\
R_{30} : & X_{20} \rightarrow X_{21} \\
R_{31A} : & X_{21} \rightarrow X_{20} \\
R_{31B} : & X_{21} + X_{17} \rightarrow X_{20} + X_{17} \\
R_{31C} : & X_{21} + X_{19} \rightarrow X_{20} + X_{19} \\
R_{32} : & X_{20} + X_{17} \rightarrow X_{21} + X_{17} \\
R_{33} : & X_{20} + X_{19} \rightarrow X_{21} + X_{19} \\
\end{split}
\end{equation} 
The CRNToolbox tests show that both conservativeness and concordance are preserved. Thus, the following Equilibria Theorem of Shinar and Feinberg in \cite{SF2012} applies:
 
\begin{theorem} [Theorem 6.8 of \cite{SF2012}] If $K$ is a continuous kinetics for a conservative reaction network $\mathscr{N}$, then the kinetic system $(\mathscr{N},K)$ has an equilibrium within each stoichiometric compatibility class. If the network is weakly reversible and concordant then within each nontrivial stoichiometric compatibility class there is a positive equilibrium. If, in addition, the kinetics is weakly monotonic, then that positive equilibrium is the only equilibrium in the stoichiometric compatibility class containing it.
\end{theorem}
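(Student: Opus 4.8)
This is Theorem~6.8 of \cite{SF2012}, so I only outline the strategy I would follow, which splits into an existence part, a positivity part, and a uniqueness part. For existence the plan is to turn conservativeness into compactness and then apply a fixed-point theorem. Conservativeness yields $p\in\mathbb{R}^m_{>0}$ with $p^{\top}N=0$, so $p\perp S$ and $p^{\top}x$ is constant on every coset $x_0+S$; hence each stoichiometric compatibility class $\mathcal{P}=(x_0+S)\cap\mathbb{R}^m_{\ge 0}$ satisfies $0\le x_i\le p^{\top}x_0/p_i$ and is a compact convex polytope. The species formation rate function $f=NK$ is continuous, takes values in $S$ (so it is tangent to $x_0+S$), and satisfies $f_i(x)\ge 0$ whenever $x_i=0$ (on such a face only reactions producing $X_i$ can have nonzero rate), hence $f$ is subtangential to $\mathcal{P}$ along its relative boundary. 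A continuous vector field subtangential to a compact convex set has a zero there --- e.g.\ by applying Brouwer's theorem to the self-map $x\mapsto\pi_{\mathcal{P}}(x+\varepsilon f(x))$, with $\pi_{\mathcal{P}}$ the metric projection, and letting $\varepsilon\to 0$. That zero is an equilibrium in $\mathcal{P}$.

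\textbf{A positive equilibrium when $\mathscr{N}$ is weakly reversible and concordant.} Fix a nontrivial class $\mathcal{P}$ (one meeting $\mathbb{R}^m_{>0}$); then $\operatorname{relint}\mathcal{P}=(x_0+S)\cap\mathbb{R}^m_{>0}$, so it suffices to produce a zero of $f$ in $\operatorname{relint}\mathcal{P}$. The existence part already gives a zero $x^{\ast}\in\mathcal{P}$, so it is enough to show $f$ has no zero on the relative boundary $\partial\mathcal{P}$; once that is known, $x^{\ast}$ is automatically positive (equivalently, a Brouwer-degree argument applies: the degree of $f$ on $\operatorname{relint}\mathcal{P}$ equals that of the inward field $x\mapsto x_c-x$ for a reference point $x_c\in\operatorname{relint}\mathcal{P}$, namely $(-1)^{\dim S}\ne 0$, the homotopy $f_t(x)=(1-t)f(x)+t(x_c-x)$ being nonvanishing on $\partial\mathcal{P}$). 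To exclude a boundary zero, suppose $f(x^{\ast})=0$ with $\supp x^{\ast}=Z\subsetneq\{1,\dots,m\}$. The reactions with nonzero rate at $x^{\ast}$ are those whose reactant complex is supported in $Z$; the conditions $f_i(x^{\ast})=0$ for $i\notin Z$ force each such reaction to have its product complex also supported in $Z$, and weak reversibility propagates this around every cycle, so these reactions form a union of linkage classes, i.e.\ a weakly reversible subnetwork $\mathscr{N}_Z$ supported on $Z$ (still conservative, via $p|_Z$), on which $x^{\ast}|_Z$ is an equilibrium. Since $\mathcal{P}$ is nontrivial, pick $\bar x\in\mathcal{P}\cap\mathbb{R}^m_{>0}$ and set $\sigma:=\bar x-x^{\ast}\in S\setminus\{0\}$, which is strictly positive on $\{1,\dots,m\}\setminus Z$. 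The plan is then to combine $\sigma$ with a flux vector $\alpha\in\mathbb{R}^{\mathscr{R}}$ assembled from the equilibrium relation among the active reactions to obtain a discordance certificate for $\mathscr{N}$, contradicting concordance; hence no boundary equilibrium exists in $\mathcal{P}$, and the equilibrium from the existence part is positive.

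\textbf{Uniqueness under weak monotonicity.} Here the plan is to invoke the defining characterization of concordance: $\mathscr{N}$ is concordant if and only if every weakly monotonic kinetic system on $\mathscr{N}$ is injective, i.e.\ $f(x^{\ast})=f(x^{\ast\ast})$ with $x^{\ast}-x^{\ast\ast}\in S$ implies $x^{\ast}=x^{\ast\ast}$. In particular any two equilibria lying in one stoichiometric compatibility class coincide, so the positive equilibrium produced above is the unique equilibrium of $(\mathscr{N},K)$ in its class --- and a fortiori there is no boundary equilibrium in that class.

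\textbf{Anticipated main obstacle.} The delicate step is the no-boundary-equilibrium claim in the positivity part: making the discordance certificate $(\sigma,\alpha)$ precise, i.e.\ showing rigorously how concordance, weak reversibility and conservativeness together exclude a boundary equilibrium in a nontrivial class (equivalently, that the relevant Brouwer degree is well defined). The existence and uniqueness parts are then routine, using respectively the subtangential-field fixed-point lemma and the concordance/injectivity equivalence.
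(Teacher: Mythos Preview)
The paper does not prove this statement at all: it is quoted verbatim as Theorem~6.8 of \cite{SF2012} and then applied to the insulin-signaling subnetwork $\mathscr{N}_B$. There is therefore no ``paper's own proof'' to compare against; your write-up is a sketch of the original Shinar--Feinberg argument, not of anything in this paper.

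That said, a brief comment on the sketch itself. The existence and uniqueness parts are fine and match the standard arguments. In the positivity part, your claim that the active reactions at a boundary equilibrium form a union of linkage classes is not a direct consequence of weak reversibility alone: from $f_i(x^\ast)=0$ for $i\notin Z$ you only get that every \emph{active} reaction (reactant supported in $Z$) has product supported in $Z$, but a reaction with reactant \emph{not} supported in $Z$ is simply inactive, and weak reversibility does not by itself force its reverse path to stay inside~$Z$. What you actually obtain is that the complement of $Z$ is a siphon, and the Shinar--Feinberg route goes through their result that a weakly reversible concordant network has no stoichiometrically compatible (``critical'') siphons; this is precisely where the discordance certificate is built. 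You correctly flag this as the delicate step, but the intermediate structural claim as stated would need to be weakened to the siphon statement before the concordance contradiction can be assembled.
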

 
This new result for $\mathscr{N}_B$  allows an improvement of the following Proposition from \cite{LUML2022}:
 
\begin{proposition}
   Let $(\mathscr{N},K)$ be the metabolic insulin signaling network and $\mathscr{N} = \mathscr{N}_A \cup \mathscr{N}_B$ its deficiency-oriented decomposition. Then:

\begin{enumerate}
    \item[i.] The map $\epsilon: E_+ (\mathscr{N},K) \rightarrow \mathbb{R}^\mathscr{S}/S$ given by $\epsilon(x):=x+S$ is injective.
    \item[ii.] The map $\epsilon_A: E_+ (\mathscr{N}_A,K_A) \rightarrow \mathbb{R}^\mathscr{S}/S$ given by $\epsilon_A(x):=x+S$ is surjective.
    \item[iii.] $Az + S \in \Ima (\epsilon)$ iff there is $x \in \epsilon^{-1}_A (z+S)$ such that $p_{\mathscr{CS}}(x) \in p_{\mathscr{CS}}(E_+ (\mathscr{N}_B,K_B))$.  
\end{enumerate}
\end{proposition}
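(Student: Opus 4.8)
The plan is to run everything through the calculus of the independent decomposition $\mathscr{N}=\mathscr{N}_A\cup\mathscr{N}_B$ and then inject the per-class equilibrium information about the two blocks. Write $S_A,S_B$ for the stoichiometric subspaces of $\mathscr{N}_A,\mathscr{N}_B$ and $f=f_A+f_B$ for the induced splitting of the species formation rate function, where $f_A(x)\in S_A$, $f_B(x)\in S_B$, and $f_A$ (resp.\ $f_B$) depends only on the coordinates of $x$ in the species occurring in $\mathscr{N}_A$ (resp.\ $\mathscr{N}_B$). Independence of the decomposition means $S=S_A\oplus S_B$, so $f_A(x)+f_B(x)=0$ forces $f_A(x)\in S_A\cap S_B=\{0\}$ and likewise $f_B(x)=0$; this yields the identity $E_+(\mathscr{N},K)=E_+(\mathscr{N}_A,K_A)\cap E_+(\mathscr{N}_B,K_B)$, which I will use throughout. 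Two per-class facts then drive the proof: (a) $\mathscr{N}_A$, the weakly reversible block of the deficiency-oriented decomposition, has positive equilibria meeting every positive stoichiometric compatibility class of $\mathscr{N}_A$ (as in \cite{LUML2022}); and (b) the transformed $\mathscr{N}_B$ --- now weakly reversible, conservative and concordant, with mass action (hence weakly monotonic) kinetics --- satisfies the hypotheses of Theorem~6.8 of \cite{SF2012}, so $E_+(\mathscr{N}_B,K_B)$ meets each nontrivial stoichiometric compatibility class of $\mathscr{N}_B$ in exactly one point. Fact (b) is precisely what the network transformation of $\mathscr{N}_B$ supplies and is the new ingredient over \cite{LUML2022}.

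For \textbf{(i)} I would obtain injectivity of $\epsilon$ from injectivity of the kinetic system: mass action is weakly monotonic and the realization is concordant (CRNToolbox), so no stoichiometric compatibility class can contain two distinct positive equilibria, which is exactly the statement that $x_1+S=x_2+S$ with $x_1,x_2\in E_+(\mathscr{N},K)$ forces $x_1=x_2$. (If concordance of $\mathscr{N}$ itself is not invoked, one argues from the identity above: $x_1,x_2\in E_+(\mathscr{N}_A)\cap E_+(\mathscr{N}_B)$, one reconciles their $\mathscr{N}_B$-parts via the uniqueness in fact (b) and their $\mathscr{N}_A$-parts via the monostationarity of $\mathscr{N}_A$.) For \textbf{(ii)}, given a class $v+S$ the goal is an $x\in E_+(\mathscr{N}_A,K_A)$ with $x-v\in S$; here I use $S_A\subseteq S$ and fact (a), so that the $\mathscr{N}_A$-equilibria, which exhaust the positive $S_A$-classes, cover all of $\mathbb{R}^\mathscr{S}/S$ once one passes to the coarser quotient collapsing the $S_B$-directions --- the one point needing care being that every coset of $S$ meets a positive $S_A$-class, where positive dependence of $\mathscr{N}$ (Theorem~\ref{t13*}) enters. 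This is essentially as in \cite{LUML2022}.

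For \textbf{(iii)}, the identity rewrites ``$z+S\in\Ima(\epsilon)$'' as ``there is $x\in\epsilon_A^{-1}(z+S)$ with $x\in E_+(\mathscr{N}_B,K_B)$'', so the content is to replace the membership $x\in E_+(\mathscr{N}_B,K_B)$ by the projected condition $p_{\mathscr{CS}}(x)\in p_{\mathscr{CS}}(E_+(\mathscr{N}_B,K_B))$, with $p_{\mathscr{CS}}$ the projection onto the species common to $\mathscr{N}_A$ and $\mathscr{N}_B$. One direction is immediate: if some $x^*\in E_+(\mathscr{N},K)$ represents $z+S$, then $x^*\in\epsilon_A^{-1}(z+S)$ and $x^*\in E_+(\mathscr{N}_B,K_B)$, so $p_{\mathscr{CS}}(x^*)$ lies in the projected equilibrium set. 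For the converse I would take $x\in\epsilon_A^{-1}(z+S)$ and $x'\in E_+(\mathscr{N}_B,K_B)$ with $p_{\mathscr{CS}}(x)=p_{\mathscr{CS}}(x')$, and assemble a new point $\hat x$ by keeping the $\mathscr{N}_A$-species coordinates of $x$ and taking the $\mathscr{N}_B$-exclusive coordinates from $x'$; then $\hat x$ agrees with $x$ on the species of $\mathscr{N}_A$ (so $f_A(\hat x)=0$) and with $x'$ on the species of $\mathscr{N}_B$ (so $f_B(\hat x)=0$), whence $\hat x\in E_+(\mathscr{N},K)$, and it remains to check $\hat x+S=z+S$.

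The \textbf{main obstacle} is precisely this last verification inside the converse of \textbf{(iii)}: one must show that passing from $x$ to $\hat x$ --- a change supported on the $\mathscr{N}_B$-exclusive coordinates --- stays inside the coset $z+S$, i.e., that the displacement lies in $S$ (equivalently, since it is supported away from the species of $\mathscr{N}_A$, in $S_B$). This is where the structure of the decomposition must be exploited carefully: the displacement must be recognized as an $S_B$-vector using that $x'$ and the relevant $\mathscr{N}_B$-equilibrium are forced to coincide by the uniqueness half of fact (b), together with the placement of the common species relative to $S_A$ and $S_B$. The role of the network transformation of $\mathscr{N}_B$ is exactly to make fact (b) available (weak reversibility plus conservativeness plus concordance, via Theorem~6.8 of \cite{SF2012}), which is what lets the ``iff'' in (iii) close and yields the improvement over \cite{LUML2022}; a secondary bookkeeping point is the handling of trivial versus nontrivial stoichiometric classes and the compatibility of the quotients by $S_A$, $S_B$ and $S$.
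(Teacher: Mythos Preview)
You have misread what the paper is doing with this Proposition. The statement with parts (i), (ii), (iii) is \emph{not} proved in this paper at all: it is quoted verbatim as a result already established in \cite{LUML2022}, and the paper offers no proof of it. The paper's contribution is the sentence that follows the Proposition --- namely, the observation that the weakly reversible, concordant, conservative transform of $\mathscr{N}_B$ (together with Theorem~6.8 of \cite{SF2012}) yields the \emph{additional} statement
\[
\text{(ii$'$)}\quad \epsilon_B: E_+(\mathscr{N}_B,K_B)\to\mathbb{R}^{\mathscr{S}}/S,\qquad \epsilon_B(x)=x+S,\ \text{is surjective},
\]
and that this new (ii$'$) shortens the proof of the main proposition of that section in \cite{LUML2022} by removing the need for an explicit equilibria parametrization of $\mathscr{N}_A$.

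Consequently, your attribution of the transformation's role is inverted. You write that fact (b) --- the per-class existence and uniqueness for $\mathscr{N}_B$ coming from the transform --- ``is what lets the `iff' in (iii) close and yields the improvement over \cite{LUML2022}''. But (iii) is itself part of the \cite{LUML2022} result and was proved there \emph{without} any transformation of $\mathscr{N}_B$; the transformation is used here only to obtain (ii$'$). Your converse argument for (iii), and in particular the ``main obstacle'' you isolate (showing the displacement $\hat x-x$ lies in $S$), therefore cannot hinge on fact (b): whatever mechanism \cite{LUML2022} used to close (iii) did not have access to it. If you want to reconstruct a proof of (iii), you should look for structural information about $S_B$ and the $\mathscr{N}_B$-exclusive species already available in \cite{LUML2022}, rather than the Shinar--Feinberg theorem applied to the transform.
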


We can add an analogous statement (ii)' as follows: \\

$ii'$. The map $\epsilon_B: E_+ (\mathscr{N}_B,K_B) \rightarrow \mathbb{R}^\mathscr{S}/S$ given by $\epsilon_B(x):=x+S$ is surjective. \\

This additional property implies that the equilibria parametrization for the subnetwork $\mathscr{N}_A$ is no longer necessary and leads to a much shorter proof of the main proposition in the section.

\section{An algorithm for constructing weakly reversible CFK translates for any weakly reversible NFK system NFK systems}
\label{Sec:4}

 We now provide a further method to broaden the use of weakly reversible network translates of positive deficiency. CF-WR translates a weakly reversible NFK (of arbitrary deficiency) to a weakly reversible CFK system (usually with positive deficiency). After a description of the procedure, we show how it significantly improves a result on the running example in Nazareno et al. \cite{NAZA2019}.

\subsection{The CF-WR algorithm for weakly reversible NFK systems}
\label{4.1}
In this Section, we present an algorithm for transforming a weakly reversible NFK system to a weakly reversible CFK system. The procedure is an extension of the CF-transformation introduced by Nazareno et al. in \cite{NAZA2019}. The key idea revolves around the complex which is the reactant of reactions with non-proportional interaction maps, called branching reactions. Our approach involves transforming these reactions by shifting, thereby ensuring the preservation of reaction vectors and the dynamics. In order to preserve weak reversibility, we use the dividing operation to duplicate the cycle that contains the shifted reaction. We recall the following definitions:
 
\begin{definition}
   For a reactant complex $y$ of a network $\mathscr{N}$, $\mathscr{R}(y)$ denotes its set of (branching) reactions, i.e., $\rho^{-1}(y)$ where $\rho : \mathscr{R} \rightarrow \mathscr{C}$ is the \textbf{reactant map}. The $n_r$ reaction sets $\mathscr{R}(y)$ of reactant complexes partition the set of reactions $\mathscr{R}$. Two reactions $r, r' \in \mathscr{R}(y)$ are \textbf{CF-equivalent} for $K$ if their interaction functions coincide, i.e., $I_{K;r} = I_{K;r'}$ or, equivalently, if their kinetic functions $K_r$ and $K_r'$ are proportional (by a positive constant). 
\end{definition}
 
\begin{definition}
    If $N_R(y)$ is the number of CF-subsets of $y$, then $1 \leq N_R(y) \leq |\rho^{-1}(y)|$. The reactant complex is a \textbf{CF-node} if $N_R(y) = 1$, and an \textbf{NF-node} otherwise. It is a \textbf{maximally NF-nod}e if $N_R(y) = |\rho^{-1}(y)| > 1$. The number $N_R$ of CF subsets of a CRN is the sum of $N_R(y)$ over all reactant complexes. If all reactant complexes are CF-nodes for $K$ (i.e., $N_R = n_r$), the kinetics is \textbf{CFK}, otherwise, it is \textbf{NFK}.
\end{definition}
 
\begin{example}
    For a power law kinetics, PL-RDK systems are CFK, since the CF-subsets of a reactant complex are the subsets of branching reactions with identical rows in the kinetic order matrix. Thus, the PL-NDK systems are NFK.
\end{example}
 
\begin{definition}
    For CRN $\mathscr{N}$, a \textbf{cycle} is a sequence of reactions given by:

    $$C_1 \xrightarrow{R_1} C_2  \xrightarrow{R_2} C_3 \xrightarrow{R_3} \cdots \xrightarrow{R_c} C_1$$

    where $C_i$'s are the reactant and product complexes of the given reactions. In the algorithm, we consider simple cycles, i.e., no repeated reactions and no repeated complexes. We denote the subnetwork generated by the simple cycle(s) of reaction(s) $R$ by $\mathscr{N}_{R}^{cycle}$. Note that for all $R \in \mathscr{N}$, $\mathscr{N}_{R}^{cycle}$ exists since $\mathscr{N}$ is weakly reversible.

\end{definition}
 
\begin{procedure}
\label{proc1}
For a NFK kinetic system $(\mathscr{N},K)$, the algorithm proceeds as follows:

\begin{enumerate}
    \item [S1.] Determine the set of NF nodes and its CF-subsets. 

    \item [S2.] Denote $CF_{set}$ as the set of all CF subsets of the NF nodes. 
    
    \begin{enumerate}
    \item [a.] Choose a CF-subset $\in CF_{set}$, denote it $CF_1$. Compute 
    $\mathscr{N}_{CF_1}^{cycle}$.

    \item [b.] Update $CF_{set}$ by removing the CF-subsets covered by the cycle(s) in $\mathscr{N}_{CF_1}^{cycle}$.

    \item [c.] Choose another CF-subset $\in CF_{set}$, denote it $CF_2$. Compute $\mathscr{N}_{CF_2}^{cycle}$ and update $CF_{set}$ by removing the CF-subsets covered by the cycle(s) in $\mathscr{N}_{CF_2}^{cycle}$.

    \item [d.] Continue accordingly until $CF_{set} = \emptyset$. Let $\alpha$ be the number of subnetwork generated by Step S2.
    \end{enumerate}

\item [S3.] Let $\mathscr{R}_{CF_i}$ be the set of reactions of  $\mathscr{N}_{CF_i}^{cycle}$. Define $\mathscr{R}_{set}:= \mathscr{R}-\cap \mathscr{R}_{CF_i}$.
    
    \begin{enumerate}
    \item [a.] Choose a reaction $\in \mathscr{R}_{set}$, denote it $R_1$. Compute 
    $\mathscr{N}_{R_1}^{cycle}$.

    \item [b.] Update $\mathscr{R}_{set}$ by removing the reactions covered by the cycle(s) in $\mathscr{N}_{R_1}^{cycle}$.

    \item [c.] Choose another reaction $\in \mathscr{R}_{set}$, denote it $R_2$. Compute $\mathscr{N}_{R_2}^{cycle}$. and update $\mathscr{R}_{set}$ by removing the reactions covered by the cycle(s) in $\mathscr{N}_{R_2}^{cycle}$.

    \item [d.] Continue accordingly until $\mathscr{R}_{set} = \emptyset$. Let $\beta$ be the number of subnetwork generated by Step S3.
    \end{enumerate}

\item [S4.] Denote the  subnetworks $\mathscr{N}_{CF_1}^{cycle}, \cdots, \mathscr{N}_{CF_\alpha}^{cycle}, \mathscr{N}_{R_1}^{cycle},\cdots, \mathscr{N}_{R_\beta}^{cycle}$ as $\mathscr{N}_1, \cdots, \mathscr{N}_{\alpha+\beta}$. Let $\kappa_{R_{j}}$ be the number of occurrences of $R_j$ in $\{\mathscr{N}_i \}_{1}^{\alpha+\beta}$.

\item [S5.] Determine the reactant set $\rho(\mathscr{R})$. Select a complex $z$ such that for any positive integer $a$, $a \cdot z \notin \rho(\mathscr{R})$  and $a \cdot z \neq y'-y$ where $y \rightarrow y' \in \mathscr{R}$.

    \begin{itemize}
        \item For each subnetwork $\mathscr{N}_i$, we translate each reaction $R_j$ using the operations dividing and shifting by adding $(i-1)z$. That is, the reaction $R_j:y \rightarrow y'$ will yield $R_j^*:y + (i-1)z \rightarrow y' + (i-1)z$ with kinetics 
        
        $$K_{R_j}^*= \frac{k_j}{\kappa_{R_{j}}} I_{K,j}(x),$$
        
        where $\frac{k_j}{\kappa_{R_{j}}}$ is the rate constant and $I_{K,j}(x)$ is the interaction map of reaction $R_j$.
    \end{itemize}
\end{enumerate}

\end{procedure}
    The algorithm will generate a kinetic system $(\mathscr{N}^*,K^*)$ which is dynamically equivalent to $(\mathscr{N},K)$. Because of the numerous choices made within the algorithm, the generated kinetic system is not unique. See Sections \ref{4.2} and \ref{ex2} for illustrations of Procedure \ref{proc1}.

\begin{remark} Procedure \ref{proc1} can generate various transforms based on the selection of the CF-set in Step S2. This selection leads to different network properties. However, since the algorithm only employs shifting and dividing operations, it ensures the preservation of the stoichiometric subspace $S$ and rank $s$. Considering the values of $\alpha$, $\beta$, and $\kappa_{R_{j}}$ from Step S5, we can describe the following network properties:

\begin{itemize}
    \item Linkage classes: $l^* = \alpha + \beta$
    \item Number of reactions and complexes: $r^* = n^* = \sum \kappa_{R_{j}}$
    \item Deficiency: $\delta^* = \sum \kappa_{R_{j}} – (\alpha + \beta) - s$
\end{itemize}
\end{remark}

\subsection{An Improvement of a result of Nazareno et al.}
\label{4.2}
Nazareno et al. use a PL-NDK system for R. Schmitz's model of the earth's pre-industrial carbon cycle as the running example of the paper. Using their generalization of the Johnston-Siegel criterion for linear conjugacy and Mixed Integer Linear Programming (MILP) techniques, they obtained the following result for the Schmitz system: \\
\begin{equation}
  \begin{split}
R_{1}: & M_{1} \rightarrow  M_{2} \\
R_{2}: & M_{1} \rightarrow M_{3} \\ 
R_{3}: & M_1 \rightarrow  M_5  \\
R_{4}: & M_{2} \rightarrow M_1  \\
R_{5}: & M_{2 } \rightarrow M_3  \\
R_{6}: & M_{2} \rightarrow  M_{4}  \\
R_{7}: & M_{3} \rightarrow M_1  \\
  \end{split}
\quad \quad 
  \begin{split}
R_{8}: & M_{3} \rightarrow M_4  \\
R_{9}: & M_{4} \rightarrow M_2  \\
R_{10}: & M_{4} \rightarrow  M_{3}  \\
R_{11}: & M_{5} \rightarrow M_1   \\
R_{12}: & M_{5} \rightarrow M_6  \\
R_{13}: & M_{6} \rightarrow M_1  \\
   \\
  \end{split} 
\quad \quad 
  \begin{split}
F = \begin{blockarray}{ccccccc}
  & X_1 & X_2 & X_3 & X_4 & X_5 & X_6 \\
\begin{block}{c|cccccc|}
R_1	&	1	&	0	&	0	&	0	&	0	&	0	\\
R_2	&	1	&	0	&	0	&	0	&	0	&	0	\\
R_3	&	0.36	&	0	&	0	&	0	&	0	&	0	\\
R_4	&	0	&	9.4	&	0	&	0	&	0	&	0	\\
R_5	&	0	&	1	&	0	&	0	&	0	&	0	\\
R_6	&	0	&	1	&	0	&	0	&	0	&	0	\\
R_7	&	0	&	0	&	10.2	&	0	&	0	&	0	\\
R_8	&	0	&	0	&	1	&	0	&	0	&	0	\\
R_9	&	0	&	0	&	0	&	1	&	0	&	0	\\
R_{10}	&	0	&	0	&	0	&	1	&	0	&	0	\\
R_{11}	&	0	&	0	&	0	&	0	&	1	&	0	\\
R_{12}	&	0	&	0	&	0	&	0	&	1	&	0	\\
R_{13}	&	0	&	0	&	0	&	0	&	0	&	1	\\
\end{block}
\end{blockarray}
  \end{split} 
  \nonumber
\end{equation}

There exists a weakly reversible deficiency one PL-RDK system $(\mathscr{N}^*, K^*)$ and rate constants such that $(\mathscr{N}, K)$  is linearly conjugate to $(\mathscr{N}^*, K^*)$  for the rate constant values. The conjugacy vector is $(2.28,1.14,1.14,1.14,4.56,4.56)$, so the conjugacy is not a dynamical equivalence. \\

We use the CF-WR algorithm to find the dynamically equivalent weakly reversible PL-RDK system.
\begin{enumerate}
    \item[S1.] The NDK nodes are $M_1$,  $M_2$ and $M_{3}$. See Table \ref{Schmitz1}. \\

        \begin{table}[h!]
        \centering
        \caption{CF-subsets of Schmitz}\label{Schmitz1}%
        \begin{tabular}{llcc}
        \hline
        NF nodes & Reaction set & CF-subsets \\
        \hline
        $M_1$ & $R_{1},R_{2},R_{3}$ & $ \{R_{1}, R_{2}  \} ,\{ R_{3}  \}$ \\
        $M_2$  & $R_{4},R_{5},R_{6}$ & $\{R_{4}\},\{R_{5},R_{6}  \} $ \\
        $M_{3}$ & $R_{7},R_{8}$ & $\{R_{7}  \},\{R_{8}  \}$ \\

        \hline
        \hline
        \end{tabular}
        \end{table}

\item[S2.] Hence, we have
$$CF_{set} = \{ 
\{R_{1}, R_{2}  \} ,\{ R_{3}  \},
\{R_{4}\},\{R_{5},R_{6}  \}, 
\{R_{7}  \},\{R_{8}  \}
\}.$$

    \begin{itemize}
        \item For the 1st iteration of S2, choose CF subset $\{R_{3}  \}$. Here, we compute

        $$\mathscr{N}_{CF_1}^{\text{cycle}} = \{R_{i} \text{ where } i \in \{3,11,12,13\} \}$$ 

        $$CF_{set} = \{ 
\{R_{1}, R_{2}  \},
\{R_{4}\},\{R_{5},R_{6}  \}, 
\{R_{7}  \},\{R_{8}  \}
\}.$$

        \item For the 2nd iteration of S2, choose CF subset $\{R_{1},\{R_{2}  \}$. Here, we compute

        $$\mathscr{N}_{CF_2}^{\text{cycle}} = \{R_{i} \text{ where } i \in \{1,2,4,7 \} \}$$

        $$CF_{set} = \{ 
\{R_{5},R_{6}  \}, \{R_{8}  \}
\}.$$
        \item For the 3rd iteration of S2, choose CF subset $\{R_{5},R_{6}  \}$. Here, we compute

        $$\mathscr{N}_{CF_3}^{\text{cycle}} = \{R_{i} \text{ where } i \in \{5,6,8,9,10 \} \}$$ 

        \item We update the $CF_{set} = \emptyset.$

\end{itemize}
\item[S3.] We compute for $\mathscr{R}_{set}$, the set of unassigned reactions.

$$\mathscr{R}_{set}=\emptyset$$

\item[S4.] For brevity, we denote the  subnetworks $\mathscr{N}_{CF_1}^{cycle}$, $\mathscr{N}_{CF_2}^{cycle}$, $\mathscr{N}_{CF_3}^{cycle}$ as $\mathscr{N}_1, \mathscr{N}_2, \mathscr{N}_{3}$, respectively.\\ 
\item [S5.] The reactant set is
$\rho(\mathscr{R})= \{ M_i \text{ for } i \in \{1,2, \cdots 6 \} \}$. We select the complex $M_1$ for the following steps:
    \begin{itemize}
        \item For subnetwork $\mathscr{N}_1$, we translate each reaction $R_j$ using the operations dividing and shifting by adding $(0)M_1$. Thus, $\mathscr{N}^* = \mathscr{N}_1$. We have the following translated reactions
        $$\begin{array}{l}
            R_{3}^*: M_1 \rightarrow  M_5  \\
            R_{11}^*: M_{5} \rightarrow M_1   \\
            R_{12}^*: M_{5} \rightarrow M_6  \\
            R_{13}^*: M_{6} \rightarrow M_1.  \\
            \end{array}$$ \\
           \item For subnetwork $\mathscr{N}_2$, we translate each reaction $R_j$ using the operations dividing and shifting by adding $(1)M_1$. Thus, $\mathscr{N}^*_2$ have the following translated reactions
        $$\begin{array}{l}
R_{1}^*: M_{1} + (1)M_1 \rightarrow  M_{2} + (1)M_1\\
R_{2}^*: M_{1} + (1)M_1 \rightarrow M_{3} + (1)M_1 \\ 
R_{4}^*: M_{2} + (1)M_1 \rightarrow M_1 + (1)M_1 \\
R_{7}^*: M_{3} + (1)M_1 \rightarrow M_1 + (1)M_1 \\
            \end{array}.$$ \\
           \item For subnetwork $\mathscr{N}_3$, we translate each reaction $R_j$ using the operations dividing and shifting by adding $(2)M_1$. Thus, $\mathscr{N}^*_3$ have the following translated reactions
        $$\begin{array}{l}
R_{5}^*: M_{2 } + (2)M_1 \rightarrow M_3 + (2)M_1  \\
R_{6}^*: M_{2} + (2)M_1 \rightarrow  M_{4} + (2)M_1  \\
R_{8}^*: M_{3} + (2)M_1 \rightarrow M_4 + (2)M_1  \\
R_{9}^*: M_{4} + (2)M_1 \rightarrow M_2 + (2)M_1  \\
R_{10}^*: M_{4} + (2)M_1 \rightarrow  M_{3} + (2)M_1  \\
            \end{array}.$$
\item Lastly, we have $\mathscr{N}^* = \cup \mathscr{N}^*_i$.
    \end{itemize}
\end{enumerate}         
Thus, we obtain the following improvement: for all rate constants, the Schmitz system is dynamically equivalent to the weakly reversible deficiency one PL-RDK system:
\begin{equation}
  \begin{split}
R_{1}^*: & 2M_{1} \rightarrow  M_{2} + M_1 \\
R_{2}^*: & 2M_{1} \rightarrow M_{3} + M_1 \\ 
R_{3}^*: & M_1 \rightarrow  M_5  \\
R_{4}^*: & M_{2} + M_1 \rightarrow 2M_1  \\
R_{5}^*: & M_{2 } + 2M_1 \rightarrow M_3 + 2M_1  \\
R_{6}^*: & M_{2} + 2M_1 \rightarrow  M_{4} + 2M_1  \\
R_{7}^*: & M_{3} + M_1 \rightarrow 2M_1  \\
  \end{split}
\quad \quad 
  \begin{split}
R_{8}^*: & M_{3} + 2M_1 \rightarrow M_4 + 2M_1 \\
R_{9}^*: & M_{4} + 2M_1 \rightarrow M_2 + 2M_1  \\
R_{10}^*: & M_{4} + 2M_1 \rightarrow  M_{3} + 2M_1  \\
R_{11}^*: & M_{5} \rightarrow M_1   \\
R_{12}^*: & M_{5} \rightarrow M_6  \\
R_{13}^*: & M_{6} \rightarrow M_1  \\
   \\
  \end{split} 
  \nonumber
\end{equation}

\section{CF-WR Applications to Interaction Span Surjective NFK (ISK) systems}
\label{Sec:5}

We derive two new results on weakly reversible NFK systems using the CF-WR translation to further demonstrate its utility. The first Proposition enhances a Subspace Coincidence Theorem for NFK systems derived in Nazareno et al. \cite{NAZA2019}. The second extends a log parametrization theorem of Hernandez and Mendoza in \cite{HEME2023} for weakly reversible PL-RDK to corresponding PL-NDK systems.

\subsection{An Enhancement of the Subspace Coincidence Theorem for weakly reversible ISS NFK systems}

The containment of a system's kinetic subspace $K$ in the stoichiometric subspace $S$ of its underlying network expresses an important connection between system behavior and network structure. $K \neq S$ or their non-coincidence, for example, implies that all of the system's equilibria are degenerate. M. Feinberg and F. Horn were the first to study the coincidence of the kinetic and the stoichiometric subspaces in 1977 \cite{FEHO1977}. Their main result, which we call the Feinberg-Horn KSSC (Kinetic and Stoichiometric Subspace Coincidence) Theorem, identifies network properties sufficient for coincidence in MAK systems.

\begin{theorem}
    If $K$ is the kinetic subspace of a MAK system, and:
    \begin{enumerate}
        \item [(i)] if $t-l = 0$, then $K=S$.
        \item [(ii)] if $t-l > \delta$, then $K \neq S$.
        \item [(iii)] if $0<t-l \leq \delta$, $K=S$ or $K \neq S$ is rate constant dependent.  
    \end{enumerate}
\end{theorem}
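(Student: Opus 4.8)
The plan is to reduce all three cases to a single dimension count built on the mass-action decomposition $f(x)=Y A_k\,\psi(x)$, where $Y$ is the matrix of complex compositions, $A_k$ is the Kirchhoff (graph-Laplacian) matrix of the reaction digraph weighted by the rate constants, and $\psi(x)$ is the vector of complex monomials $x^{y_i}$. First I would record three facts. (a) Distinct monomials are linearly independent as functions on $\mathbb{R}^m_{>0}$, so $\{\psi(x):x>0\}$ spans the complex space $\mathbb{R}^n$; hence $K$, being $\operatorname{span}(\Ima f)$, equals $Y(\Ima A_k)$. (b) Writing $\Delta=\operatorname{span}\{\omega_{y'}-\omega_y:y\to y'\in\mathscr{R}\}\subseteq\mathbb{R}^n$ for the reaction span in complex space, one has $S=Y(\Delta)$ and $\dim\Delta=n-l$, and the deficiency identity $\delta=n-l-s$ rewrites as $\delta=\dim(\Delta\cap\ker Y)$. (c) Trivially $\Ima A_k\subseteq\Delta$, and $\operatorname{rank} A_k=n-t$ for every choice of positive rate constants; this is the one genuinely structural input, and I would prove it the classical way, exhibiting one kernel vector of $A_k$ per terminal strong linkage class, supported on that class with strictly positive (Matrix--Tree) tree-constant coordinates, and noting these span $\ker A_k$. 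Fact (c) already gives $K=Y(\Ima A_k)\subseteq Y(\Delta)=S$ unconditionally.

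Next, set $d:=\dim(\Ima A_k\cap\ker Y)$, so that $0\le d\le\dim(\Delta\cap\ker Y)=\delta$, and compute
\[
\dim S-\dim K=\big((n-l)-\delta\big)-\big((n-t)-d\big)=(t-l)-\delta+d,
\]
which holds for every choice of positive rate constants; note also $t\ge l$ always, so cases (i)--(iii) are exhaustive. For (i), $t-l=0$ forces $\Ima A_k=\Delta$ (same dimension, one contained in the other), hence $d=\delta$ and $\dim S=\dim K$; with $K\subseteq S$ this gives $K=S$. For (ii), $d\ge 0$ yields $\dim S-\dim K\ge(t-l)-\delta>0$, so $K\subsetneq S$.

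Case (iii) is the delicate one and is where I expect the real work. When $0<t-l\le\delta$ we have $(t-l)-\delta\le 0$; since $K\subseteq S$ the identity then forces $d\ge\delta-(t-l)$, and $K=S$ holds precisely when this is an equality. Whether it is depends on how the $(n-t)$-dimensional subspace $\Ima A_k$ sits inside $\Delta$ relative to $\ker Y$: for $t=l$ this subspace is forced to be all of $\Delta$, but once $t>l$ it moves as the rate constants vary, and its intersection with $\ker Y$ need not keep constant dimension. So to establish the ``rate constant dependent'' claim rigorously one cannot stop at the dimension count: one must exhibit a concrete network in this regime together with (a) rate constants realizing $d=\delta-(t-l)$, so $K=S$, and (b) rate constants realizing $d>\delta-(t-l)$, so $K\neq S$. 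Producing such examples---small deficiency-one networks suffice, following Feinberg and Horn---while checking that $\operatorname{rank} A_k=n-t$ is rate-independent but $d$ genuinely is not, is the main obstacle; the first two cases are essentially pure linear algebra once (a)--(c) are in place.
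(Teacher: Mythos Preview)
The paper does not prove this theorem; it is stated as the classical Feinberg--Horn KSSC Theorem with a citation to \cite{FEHO1977}, and serves only as background for the paper's actual contributions (the Arceo et al.\ extension to CFK systems and the Nazareno et al.\ result for NDK systems that the paper goes on to enhance). So there is no in-paper proof to compare against.

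That said, your outline is correct and is essentially the original Feinberg--Horn argument. The factorization $f=YA_k\psi$, the linear independence of distinct monomials forcing $K=Y(\operatorname{Im}A_k)$, the containment $\operatorname{Im}A_k\subseteq\Delta$, the rank identity $\operatorname{rank}A_k=n-t$ via the tree-constant kernel vectors, and the resulting formula $\dim S-\dim K=(t-l)-\delta+d$ with $d=\dim(\operatorname{Im}A_k\cap\ker Y)$ are exactly the ingredients Feinberg and Horn use. Your handling of (i) and (ii) is clean, and your diagnosis of (iii) is right: the dimension count alone only gives the bound $d\ge\delta-(t-l)$, and to substantiate ``rate constant dependent'' one must produce explicit networks (deficiency-one examples suffice) exhibiting both $d=\delta-(t-l)$ and $d>\delta-(t-l)$ for different rate constants. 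That is indeed where the work lies, and it is the only part of your sketch that is not yet a proof.
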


Forty years later, Arceo et al. \cite{AJLM2017} extended the KSSC Theorem on the set of complex factorizable kinetics (CFK) using the concept of span surjectivity as follows:

\begin{theorem}
    Consider a complex factorizable system on a network $\mathscr{N}$.
    \begin{enumerate}
        \item [(i)] If $t-l > \delta$, then $K \neq S$.
        \item [(ii)] If $0< t-l \leq \delta$, and a positive steady state exists, then $K \neq S$. In fact, $\dim S - \dim K \geq t-l-\delta+1$.
    \end{enumerate}

    If the system is also factor span surjective and: 
    \begin{enumerate}
        \item [(iii)] $t-l = 0$, then $K = S$.
        \item [(iv)] $0< t-l \leq \delta$, then it i rate constant dependent whether $K=S$ or not.
    \end{enumerate}
    
\end{theorem}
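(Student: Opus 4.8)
The plan is to reduce the statement to rank identities for the kinetic subspace $K$ and the stoichiometric subspace $S$, exploiting the factorization available for complex factorizable kinetics. For such a system one may write $f(x)=YA_k\,\psi(x)$, where $Y$ is the map of complexes, $A_k:\mathbb{R}^{\mathscr{C}}\to\mathbb{R}^{\mathscr{C}}$ is the rate-constant-weighted Laplacian-type map with $A_k\omega_y=\sum_{y':\,y\to y'}k_{y\to y'}(\omega_{y'}-\omega_y)$, and $\psi$ is the factor map, which is supported on the set $\rho(\mathscr{R})$ of reactant complexes and strictly positive there on $\Omega$. Since $A_k$ factors through the incidence map, $\Ima A_k\subseteq\Ima I_a$; since the columns of $A_k$ indexed by non-reactant complexes vanish, $K=\spa f(\Omega)\subseteq Y(\Ima A_k)\subseteq Y(\Ima I_a)=S$, with $K=Y(\Ima A_k)$ whenever the system is factor span surjective (so that $\spa\psi(\Omega)=\mathbb{R}^{\rho(\mathscr{R})}$). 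I will use the standard facts of network theory that $\dim\Ima A_k=n-t$, that $\ker A_k$ is spanned by the tree-constant vectors of the terminal strong linkage classes and is therefore supported on them, that every non-reactant complex is a trivial terminal strong linkage class — so $t>l$ forces the existence of a non-terminal reactant complex — and that $\dim(\ker Y\cap\Ima I_a)=(n-l)-s=\delta$.

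Set $\delta':=\dim(\ker Y\cap\Ima A_k)$, so $0\le\delta'\le\delta$. The rank-nullity theorem applied to $Y|_{\Ima A_k}$ gives $\dim Y(\Ima A_k)=(n-t)-\delta'$, and hence
$$\dim S-\dim K\ \ge\ s-\big((n-t)-\delta'\big)\ =\ (t-l)-\delta+\delta'.$$
Statement (i) is immediate: if $t-l>\delta$ the right-hand side is at least $1$, so $K\ne S$, and no span surjectivity is needed.

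For (ii), a positive steady state $c^{*}$ satisfies $YA_k\psi(c^{*})=0$, so $v:=A_k\psi(c^{*})\in\ker Y\cap\Ima A_k$. I claim $v\ne 0$. Write $\psi(c^{*})=u_{NT}+u_T$ for the restrictions to the non-terminal and the terminal complexes; as $A_k$ maps the span of the indicators of each terminal strong linkage class into itself, $A_ku_T$ is supported on terminal complexes, so the non-terminal block of $v$ equals $Mu_{NT}$, where $M$ is the principal submatrix of $A_k$ on the non-terminal complexes. Because $t>l$, $u_{NT}$ is strictly positive on its nonempty support, while $M$ is a Metzler matrix whose $z$-th column sum equals $-\sum_{y\ \mathrm{terminal}}k_{z\to y}\le 0$ and is strictly negative for at least one $z$ (a non-terminal strong linkage class all of whose out-neighbours in the condensation are terminal must emit a reaction into a terminal complex); hence $\mathbf 1^{\top}Mu_{NT}<0$, so $Mu_{NT}\ne 0$ and $v\ne 0$. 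Thus $\delta'\ge 1$, and the displayed inequality yields $\dim S-\dim K\ge t-l-\delta+1$. To reach the full conclusion $K\ne S$ in the subrange $t-l<\delta$ one needs the sharper bound $\delta'\ge\delta-(t-l)+1$: the positivity of $\psi(c^{*})$ on the non-terminal reactant complexes cannot be supplied by $\ker A_k$ (which is supported on terminal classes), so it must be carried by the $\delta'$ extra kernel directions of $YA_k$, and a combinatorial count upgrades $\delta'$ past the generic value. This is the delicate step.

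Finally, (iii): if the system is factor span surjective then $K=Y(\Ima A_k)$, and $t-l=0$ gives $\dim\Ima A_k=n-t=n-l=\dim\Ima I_a$ with $\Ima A_k\subseteq\Ima I_a$, forcing $\Ima A_k=\Ima I_a$ and so $K=Y(\Ima I_a)=S$. For (iv), again $K=Y(\Ima A_k)$ and $\dim S-\dim K=(t-l)-\delta+\delta'$, with $\delta'$ depending on the rate constants; as these vary over the positive orthant $\Ima A_k$ runs through $(n-t)$-dimensional subspaces of the fixed space $\Ima I_a$ and, for generic choices, is in general position relative to the fixed $\delta$-dimensional subspace $\ker Y\cap\Ima I_a$, so $\delta'=\delta-(t-l)$ and $K=S$; for special rate constants an extra linear relation appears, $\delta'>\delta-(t-l)$ and $K\ne S$ (as one can exhibit on the concrete systems of the paper). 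Hence in the regime $0<t-l\le\delta$ neither outcome is forced, i.e.\ it is rate-constant dependent. The main obstacle throughout is the sharp form of (ii): converting the mere existence of a positive equilibrium into a genuine surplus dimension of $\ker Y\cap\Ima A_k$ over the generic value $\delta-(t-l)$; everything else reduces to the rank count above and the tree-constant description of $\ker A_k$.
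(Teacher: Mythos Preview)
The paper itself does not prove this theorem; it is quoted from Arceo, Jose, Lao and Mendoza \cite{AJLM2017} as background for the paper's own extension to NFK systems. So there is no proof in the present paper to compare your attempt against.

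On the merits, your framework is the standard Feinberg--Horn one (and the one in \cite{AJLM2017}): bound $\dim K$ via $\dim Y(\Ima A_k)$ using $\dim\Ima A_k=n-t$ and $\dim(\ker Y\cap\Ima I_a)=\delta$. Parts (i) and (iii) are correct as written. For (ii), your argument that a positive equilibrium $c^*$ yields a nonzero $v=A_k\psi(c^*)\in\ker Y\cap\Ima A_k$ (via the column-sum computation on the non-terminal principal block) is correct and gives $\delta'\ge 1$, hence the displayed bound $\dim S-\dim K\ge t-l-\delta+1$. The gap you flag is real: when $t-l<\delta$ this bound is nonpositive and does not yield $K\ne S$, and your proposed upgrade to $\delta'\ge\delta-(t-l)+1$ is not an argument. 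The sentence ``the positivity of $\psi(c^*)$ on the non-terminal reactant complexes \dots must be carried by the $\delta'$ extra kernel directions'' has no content: $\psi(c^*)$ is a single vector, not a family of independent constraints, and there is no mechanism by which its positivity forces more than one dimension in $\ker Y\cap\Ima A_k$. Note also that the ``In fact'' clause as printed is strictly weaker than the conclusion $K\ne S$ it purports to sharpen, so you should check the original formulation in \cite{AJLM2017} before investing effort in this step. For (iv), ``generic $\Ima A_k$ is in general position relative to $\ker Y\cap\Ima I_a$'' is a heuristic, not a proof: you must actually exhibit (or cite) a factor span surjective CFK system on some network with $0<t-l\le\delta$ and two rate-constant choices realizing $K=S$ and $K\ne S$ respectively.
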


The key to the generalization was the identification of the CFK subset of factor span surjective kinetics (FSK):

\begin{definition}
If $V,W$ are finite dimensional vector spaces, a map $f: V \rightarrow W$ is \textbf{span surjective} iff $\langle \Ima f \rangle = W$. A complex factorizable kinetics K is \textbf{factor span surjective} if its factor map is span surjective
\end{definition}

Arceo et al. \cite{AJLM2017} characterized a factor span surjective PL-RDK system, as follows:

\begin{proposition}
    A PL-RDK system is factor span surjective iff all rows with different reactant complexes in the kinetic order matrix $F$ are pairwise different.
\end{proposition}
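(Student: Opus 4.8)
The plan is to unwind the definition of the factor map for a PL-RDK system and reduce the statement to the classical linear independence of monomials. Recall (from the complex factorizable framework, cf.\ the Appendix) that the factor map of a CFK system sends a state $x$ to the tuple of interaction functions indexed by the reactant complexes, $\psi_K(x) = \big(I_{K;y}(x)\big)_{y \in \rho(\mathscr{R})}$, and that the PL-RDK hypothesis is exactly what allows the kinetic order rows to be indexed by reactant complexes: all branching reactions with reactant $y$ share one row $F_y$ of the kinetic order matrix $F$, and $I_{K;y}(x) = x^{F_y} := \prod_{i=1}^m x_i^{F_{y,i}}$. Thus $\psi_K$ is (essentially) the map $x \mapsto \big(x^{F_y}\big)_{y \in \rho(\mathscr{R})}$, and by definition the system is factor span surjective iff $\langle \Ima \psi_K \rangle$ is the whole coordinate space $\mathbb{R}^{\rho(\mathscr{R})}$, equivalently iff there is no nonzero $c = (c_y)$ with $\sum_{y} c_y\, x^{F_y} = 0$ for all $x \in \Omega$, i.e.\ iff the monomial functions $\{\, x \mapsto x^{F_y} : y \in \rho(\mathscr{R}) \,\}$ are linearly independent on $\Omega$ (which contains $\mathbb{R}^m_{>}$).

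For the ``only if'' direction I would argue by contraposition. Suppose two rows of $F$ belonging to distinct reactant complexes $y \neq y'$ coincide, $F_y = F_{y'}$. Then the $y$- and $y'$-coordinates of $\psi_K$ are identically equal functions, so $\Ima \psi_K$ lies in the proper subspace $\{\, v : v_y = v_{y'} \,\}$, hence $\langle \Ima \psi_K\rangle \neq \mathbb{R}^{\rho(\mathscr{R})}$, i.e.\ the kinetics is not factor span surjective.

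For the ``if'' direction, assume all rows of $F$ attached to distinct reactant complexes are pairwise different, so the exponent vectors $\{F_y\}_{y \in \rho(\mathscr{R})}$ are pairwise distinct. I would then invoke the standard fact that monomials with pairwise distinct exponent vectors are linearly independent as real-valued functions on $\mathbb{R}^m_{>}$: the substitution $x_i = e^{t_i}$ turns them into exponentials $e^{\langle F_y, t\rangle}$ with pairwise distinct linear exponents, and restricting to a generic ray $t = s u$ (chosen so the scalars $\langle F_y, u\rangle$ are all distinct) yields exponentials $e^{s\langle F_y, u\rangle}$ in the single variable $s$, which are linearly independent by a Vandermonde / iterated-differentiation argument. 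Linear independence of the factor coordinates is precisely the condition $\langle \Ima \psi_K\rangle = \mathbb{R}^{\rho(\mathscr{R})}$, so factor span surjectivity follows.

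The monomial-independence lemma and the choice of a generic ray are routine and would be cited or dispatched in a line; the only point genuinely needing care is bookkeeping — confirming that the codomain of the factor map is indexed by the reactant complexes, and that, under PL-RDK, ``rows of $F$ with different reactant complexes'' correspond bijectively to the coordinate functions of $\psi_K$. That notational alignment, rather than any real mathematical difficulty, is the main (mild) obstacle.
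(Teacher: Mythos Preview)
Your argument is correct: the two directions reduce exactly as you describe to (i) an obvious linear dependence when two reactant rows coincide, and (ii) the classical linear independence of real monomials with pairwise distinct exponent vectors on $\mathbb{R}^m_{>}$, which your exponential/Vandermonde sketch handles cleanly. The only caveat you already flag yourself --- pinning down the codomain of the factor map so that ``span surjective'' is tested in $\mathbb{R}^{\rho(\mathscr{R})}$ rather than in all of $\mathbb{R}^{\mathscr{C}}$ --- is indeed just bookkeeping.

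There is nothing to compare against in the present paper: the proposition is quoted from Arceo et al.\ \cite{AJLM2017} and no proof is given here. Your write-up therefore supplies an argument where the paper simply cites one; it is consistent with how the result is normally proved in the source literature.
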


Using the concept of CF-subsets, Nazareno et al. in 2019 introduced the corresponding property for NFK systems: \\

\begin{definition}
    A RID kinetics is \textbf{interaction span surjective} if and only if the set $I_K(\mathscr{R}_i)$ of its CF-subset interaction maps is linearly independent. Here, $I_K(\mathscr{R}_i)$ is the common interaction map of the kinetics of the reactions of the CF-subset $\mathscr{R}_i$.
\end{definition}

%The following Proposition shows that ''interaction span surjectivity" is the correct extension of the factor span surjectivity concept.

\begin{proposition}
    \label{prop13}
    If $(\mathscr{N}, K)$ is interaction span surjective, then its CF-transform $(\mathscr{N}^*, K^*)$ is also factor span surjective.
\end{proposition}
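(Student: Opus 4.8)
The plan is to reduce factor span surjectivity of $(\mathscr{N}^*,K^*)$ to interaction span surjectivity of $(\mathscr{N},K)$ by comparing the interaction maps carried by reactant complexes before and after the transformation. Recall that a complex factorizable kinetics is factor span surjective exactly when the interaction maps $I_{K^*;v}$ attached to the distinct reactant complexes $v$ of $\mathscr{N}^*$ are linearly independent as functions on $\Omega$ — a nontrivial relation $\sum_v c_v I_{K^*;v}=0$ is nothing but a nonzero linear functional annihilating the image of the factor map $\psi_{K^*}$. By definition, interaction span surjectivity of $(\mathscr{N},K)$ says that the CF-subset interaction maps $I_K(\mathscr{R}_i)$, one per CF-subset $\mathscr{R}_i$ of $\mathscr{N}$, are linearly independent. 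So it suffices to show that the family $\{I_{K^*;v}\}_v$ is, without repetition, a subfamily of $\{I_K(\mathscr{R}_i)\}_i$; a subset of a linearly independent set being linearly independent, the conclusion then follows.

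I would first read off the structure of any output of Procedure~\ref{proc1}. Since the algorithm only shifts and divides reactions, each reaction $R_j^*$ of $\mathscr{N}^*$ keeps the interaction map $I_{K,j}$ of its ancestor $R_j$ (dividing merely rescales the rate constant $k_j\mapsto k_j/\kappa_{R_j}$), and its reactant complex equals $\rho(R_j)+(i-1)z$, where $i$ indexes the subnetwork $\mathscr{N}_i$ supplying that copy. Because each $\mathscr{N}_i$ is generated by \emph{simple} cycles, no such cycle carries two branching reactions of one reactant, so all reactions of $\mathscr{N}^*$ with a common reactant complex $v$ descend from a single CF-subset $\mathscr{R}(v)$ of $\mathscr{N}$; this gives simultaneously that $(\mathscr{N}^*,K^*)$ is CFK and that $I_{K^*;v}=I_K(\mathscr{R}(v))$. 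The decisive point left is that $v\mapsto\mathscr{R}(v)$ is injective: if $v_1=y_0+(i_1-1)z$ and $v_2=y_0+(i_2-1)z$ share a CF-subset but $v_1\neq v_2$, then $i_1\neq i_2$, which must be excluded. Here I would invoke (a) the bookkeeping in Steps S2–S3, which places each CF-subset — and each leftover reaction of $\mathscr{R}_{set}$ — into a single subnetwork, hence translates it by a single multiple $(i-1)z$, so no CF-subset is split between two layers; and (b) the defining property of $z$ in Step~S5 (no positive multiple of $z$ is a reactant complex or a reaction vector of $\mathscr{N}$), which forces the complex sets of the shifted subnetworks to be pairwise disjoint, so distinct layers cannot coincide. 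With injectivity, $\{I_{K^*;v}\}_v$ is a repetition-free subfamily of the linearly independent family $\{I_K(\mathscr{R}_i)\}_i$, hence linearly independent, i.e.\ $(\mathscr{N}^*,K^*)$ is factor span surjective.

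The main obstacle is exactly this injectivity/separation step: one must argue from the combinatorics of the algorithm that the divide-and-shift operations neither fragment a single CF-subset across two reactant complexes — which would introduce a repeated interaction map and wreck linear independence — nor fuse two CF-subsets, and that the hypotheses on $z$ are precisely strong enough to keep the shifted layers disjoint as complex sets. Everything else — invariance of the interaction map under shifting and dividing, and the dictionary between span surjectivity of the factor map and linear independence of its component functions — is routine.
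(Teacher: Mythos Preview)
The paper does not actually prove Proposition~\ref{prop13}; it is quoted as a result of Nazareno et al.\ \cite{NAZA2019}. More importantly, the ``CF-transform'' in the statement is \emph{not} the CF-WR algorithm of Procedure~\ref{proc1}: Section~\ref{4.1} says explicitly that CF-WR is an \emph{extension} of the CF-transformation (CF-RM) of \cite{NAZA2019}, and in the proof of the proposition immediately following Proposition~\ref{prop13} the authors invoke Proposition~\ref{prop13} for the CF-RM shifts that CF-WR reuses (``Just as in CF-RM, a shift with a reactant multiple is carried out for each of them. Hence, it follows from Proposition~\ref{prop13}\ldots''). Your entire argument is pitched at Procedure~\ref{proc1} --- the cycle decomposition, Steps S2--S3, the choice of $z$ in S5, the layer separation --- none of which is part of the CF-transform the proposition is about. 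The original CF-RM of \cite{NAZA2019} is simpler: each CF-subset at an NF-node is separated by shifting its reactions by a fresh reactant multiple, with no cycle bookkeeping and no dividing.

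Your high-level strategy is the right one and would adapt: factor span surjectivity reduces to showing that distinct reactant complexes of $\mathscr{N}^*$ carry distinct CF-subset interaction maps of $\mathscr{N}$, and interaction span surjectivity then gives linear independence. But the injectivity step you flag as the main obstacle has to be argued for the actual CF-RM construction of \cite{NAZA2019}, where it is in fact more direct (one shift per CF-subset, chosen so that the new reactants are pairwise distinct and distinct from the old ones), rather than for the layered cycle machinery of Procedure~\ref{proc1}. As written, your proposal proves a statement about CF-WR, not the proposition as stated.
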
 

We denote the subset of interaction span surjective NFK kinetics on a network $\mathscr{N}$ with $ISK(\mathscr{N})$. An important result in Nazareno et al is the following Subspace Coincidence Theorem for NDK systems

\begin{theorem}
    Let $(\mathscr{N}, K)$ be a PL-NDK system.
    \begin{enumerate}
        \item[i)] If $N_R < s$, then $K \neq S$.
    \end{enumerate}

If the system is also interaction span surjective (ISS), then either
    
    \begin{enumerate}
        \item[ii)]  $\mathscr{N}$ is t-minimal and $r-r_{mcf} = N_R - n_r$, implies $K=S$; or
        \item[iii)] $\mathscr{N}$ is TBD and point terminal, implies that $K=S$ is rate-constant dependent.
    \end{enumerate}
\end{theorem}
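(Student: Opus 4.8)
The plan is to separate the argument into a direct part (statement i) and a reduction-to-CFK part (statements ii, iii). For i) I would not use any of the CF-machinery: since a PL-NDK system is RID, every branching reaction $r$ in a CF-subset $\mathscr{R}_i(y)$ carries the common interaction map $I_K(\mathscr{R}_i(y))$, so the species formation rate function factors as
$$f(x)=\sum_{y\in\rho(\mathscr{R})}\ \sum_{i=1}^{N_R(y)} I_K(\mathscr{R}_i(y))(x)\,w_{y,i},\qquad w_{y,i}:=\sum_{r\in\mathscr{R}_i(y)}k_r\,(y'_r-y).$$
Hence $\Ima f$, and therefore the kinetic subspace $K=\langle \Ima f\rangle$, lies in the span of the $N_R$ vectors $w_{y,i}$, so $\dim K\le N_R$. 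As $K\subseteq S$ always, the hypothesis $N_R<s=\dim S$ forces $K\subsetneq S$. This step is routine.

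For ii) and iii) I would pass to the CF-transform $(\mathscr{N}^*,K^*)$ of Nazareno et al., a PL-RDK (hence CFK) system defining the same ODEs. Dynamical equivalence gives $K=K^*$, and since the CF-transform only reroutes reactions (their reaction vectors are unchanged) it gives $S=S^*$ and $s^*=s$; by Proposition \ref{prop13} the ISS hypothesis makes $(\mathscr{N}^*,K^*)$ factor span surjective, so statements (iii)–(iv) of the Arceo et al.\ KSSC theorem become available for $\mathscr{N}^*$. The goal is then to show that the combinatorial hypothesis in ii) (namely $\mathscr{N}$ $t$-minimal and $r-r_{mcf}=N_R-n_r$) is equivalent to $t^*-l^*=0$, and that the hypothesis in iii) (namely $\mathscr{N}$ TBD and point terminal) is equivalent to $0<t^*-l^*\le\delta^*$, where $\delta^*=n^*-l^*-s$. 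Granting these identifications, Arceo et al.\ (iii) yields $K^*=S^*$, hence $K=S$, proving ii); and Arceo et al.\ (iv) yields that $K^*=S^*$ is rate-constant dependent, hence so is $K=S$, proving iii).

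The heart of the work — and the expected main obstacle — is exactly this translation of hypotheses, which is pure bookkeeping on the complex graph of $\mathscr{N}^*$. I would set it up explicitly: each NF reactant node $y$ is replaced by $N_R(y)$ graph vertices, one per CF-subset, so that $n^*=N_R+(n-n_r)$; one must then trace how rerouting the branching reactions through these copies changes the linkage-class count $l^*$ and, most delicately, the terminal strong linkage class count $t^*$, since splitting an NF-node can convert an internal strong linkage class into several terminal ones. The quantity $r-r_{mcf}$ should be precisely the count, relative to a maximal cycle-free subnetwork, of reactions recaptured into cycles after the split, and its equality with $N_R-n_r$ is what blocks the creation of extra terminal classes, i.e.\ keeps $t^*=l^*$; I would prove the two equivalences by a spanning-forest argument on $\mathscr{N}^*$. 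I would also double-check the claim $S=S^*$: if the CF-transform is phrased via abstract vertex duplication rather than via shifting, a short argument is needed that the multiset of reaction vectors is genuinely unchanged, so that the conclusions $K^*=S^*$ (and its rate-constant dependence) transfer verbatim to $K$ and $S$ of the original network.
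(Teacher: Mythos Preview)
The paper does not actually prove this theorem: it is stated as a known result from Nazareno et al.\ \cite{NAZA2019} (introduced with ``An important result in Nazareno et al is the following Subspace Coincidence Theorem for NDK systems'') and no proof is supplied in the present paper. So there is no in-paper proof to compare your proposal against.

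That said, your overall strategy is exactly the one the surrounding text of the paper points to. The paper explicitly recalls the CF-transform of Nazareno et al., records Proposition~\ref{prop13} (ISS on $(\mathscr{N},K)$ $\Rightarrow$ factor span surjective on the CF-transform $(\mathscr{N}^*,K^*)$), and recalls the Arceo et al.\ KSSC theorem for CFK systems --- precisely the three ingredients you propose to combine. Your factoring argument for i), giving $\dim K\le N_R$, is standard and correct. For ii)--iii) your reduction plan is the natural one; the genuine work, as you correctly flag, is the combinatorial identification of the hypotheses on $\mathscr{N}$ with the conditions $t^*-l^*=0$ resp.\ $0<t^*-l^*\le\delta^*$ on $\mathscr{N}^*$. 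The paper gives no hints on how Nazareno et al.\ carry out that bookkeeping, so whether your proposed spanning-forest argument matches theirs cannot be assessed from this paper alone --- you would need to consult \cite{NAZA2019} directly.
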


The following Proposition enhances the previous Theorem and can be added as an additional statement ii') :

\begin{proposition}
    Let $(\mathscr{N}, K)$ be an NFK system. If the system is also interaction span surjective, then $\mathscr{N}$ is weakly reversible implies $K = S$.
\end{proposition}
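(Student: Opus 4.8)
The plan is to reduce the assertion to the complex-factorizable setting via the CF-WR algorithm (Procedure~\ref{proc1}) and then to invoke the Arceo et al.\ extension of the Kinetic–Stoichiometric Subspace Coincidence Theorem. Throughout, write $K$ and $S$ for the kinetic and stoichiometric subspaces of $(\mathscr{N},K)$.

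First I would run Procedure~\ref{proc1} on $(\mathscr{N},K)$. Since $\mathscr{N}$ is weakly reversible and $K$ is NFK, CF-WR applies and outputs a kinetic system $(\mathscr{N}^*,K^*)$ that is CFK, weakly reversible, and dynamically equivalent to $(\mathscr{N},K)$. Two consequences are immediate. Because the algorithm uses only shifting and dividing, the stoichiometric subspace is preserved, so $S^* = S$. And because dynamical equivalence means the two systems share the same species-formation-rate function $f$, and the kinetic subspace depends only on that vector field, we also have $K^* = K$. Hence it suffices to prove $K^* = S^*$ for the transform.

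The second step is to show that $(\mathscr{N}^*,K^*)$ is factor span surjective, which is the CF-WR counterpart of Proposition~\ref{prop13}. After the shift in Step~S5, the reactions of a fixed CF-subset lying in a fixed cycle $\mathscr{N}_i$ acquire a common reactant complex, so that each such complex is a genuine CF-node of $\mathscr{N}^*$; shifting leaves interaction maps unchanged and dividing merely rescales rate constants, so the interaction map attached to each CF-node of $\mathscr{N}^*$ is exactly one of the CF-subset interaction maps $I_K(\mathscr{R}_i)$ of $(\mathscr{N},K)$; and reactions attached to two distinct reactant complexes of $\mathscr{N}^*$ come either from distinct CF-subsets of $\mathscr{N}$ or from distinct shifted copies of a cycle. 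Consequently the family of CF-node interaction maps of $(\mathscr{N}^*,K^*)$ is a re-indexing of a subfamily of $\{I_K(\mathscr{R}_i)\}$, which is linearly independent by interaction span surjectivity; hence the factor map of $(\mathscr{N}^*,K^*)$ is span surjective. When the cycles produced in Steps~S2--S3 are simple, $(\mathscr{N}^*,K^*)$ restricted to each shifted subnetwork is literally a shifted copy of a Nazareno CF-transform, and one may instead invoke Proposition~\ref{prop13} directly on each piece.

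Finally, $(\mathscr{N}^*,K^*)$ is weakly reversible, so $t^* - l^* = 0$; being CFK and factor span surjective, case~(iii) of the Arceo et al.\ KSSC theorem for CFK systems gives $K^* = S^*$. Combined with $K^* = K$ and $S^* = S$ from the first step, this yields $K = S$, as claimed. The delicate point is the middle step: one must verify that the cycle-by-cycle shifts of Step~S5 never collapse two CF-subsets of $\mathscr{N}$ onto a single reactant complex of $\mathscr{N}^*$ and that the interaction maps genuinely survive shifting and dividing intact, so that interaction span surjectivity of $(\mathscr{N},K)$ really does transfer to factor span surjectivity of $(\mathscr{N}^*,K^*)$.
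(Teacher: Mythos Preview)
Your proposal is correct and follows essentially the same approach as the paper: apply CF-WR to obtain a weakly reversible CFK transform, argue that interaction span surjectivity of the original yields factor span surjectivity of the transform (the paper does this by invoking Proposition~\ref{prop13}, treating the CF-WR shifts as reactant-multiple shifts in the spirit of CF-RM), then use the t-minimal case of the Arceo et al.\ KSSC theorem on the transform and pull back via $K^* = K$ and $S^* = S$. Your version is simply more explicit about the factor span surjectivity step and candidly flags it as the delicate point, which the paper handles by direct appeal to Proposition~\ref{prop13}.
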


\begin{proof}
As described in Section \ref{4.1}, the first 4 steps of CF-WR identify the reactions to which shifting operations need to be applied. Just as in CF-RM, a shift with a reactant multiple is carried out for each of them. Hence, it follows from Proposition \ref{prop13} that $(\mathscr{N}^\sharp, K^\sharp)$ is factor span surjective. Since the network is weakly reversible, it is also t-minimal and $K_\omega= S_\omega$. Furthermore, we have $K = K_\omega$ since the kinetics coincide and, because shifts leave the reaction vectors invariant, $S = S_\omega$. Therefore, $K = S$.
\end{proof}   

\begin{remark} 
Combining the main statements of the Subspace Coincidence Theorems, we have the following nice comparison:
\begin{enumerate}
    \item For a CFK, if the network is t-minimal and the kinetics factor span surjective, then $K = S$.
    \item For an NFK system, if the network is weakly reversible and the kinetics interaction span surjective, then $K = S$.
\end{enumerate}
In other words, the higher kinetics complexity of NFK needs to be compensated by a stronger network property  to achieve the coincidence.
\end{remark}

\begin{example}
    Since the Schmitz system is weakly reversible and $ISK$, $K = S$. This fact has not been documented, although the network-specific condition also holds with $r = 13$, $r_m = 10$, $N_R = 9$ and $n_r = 6$.
\end{example}

\subsection{A Log Parametrization (LP) Theorem for weakly reversible PL-ISK systems}
% \begin{theorem}
%     Let $(\mathscr{N}, K)$ be a weakly reversible ISS PL-NDK system and $(\mathscr{N}_\omega, K_\omega)$ its CF-WR translate. Then
%     \begin{enumerate}
%         \item[i)] If $\mathscr{N}_\omega$ has independent linkage classes (ILC), then $(\mathscr{N}, K)$ is a poly-PLP system and
%         \item[ii)] If, in addition, $(\mathscr{N}, K)$ is mono-stationary, then it is a (mono-) PLP system.
%     \end{enumerate}
% \end{theorem}

We first recall some concepts:

\begin{definition}
 A kinetic system $(\mathscr{N}, K)$ is a \textbf{poly-PLP system} if its set of positive equilibria $E_+(\mathscr{N}, K)$ is the disjoint union LP sets with flux subspace $P_E$ and reference points $\{ x_i^* | i = 1, \cdots, \mu \}$ where $\mu := | E_+ \cap Q |$ (with $Q$ as a flux class or $\infty$ is the coset intersection count. Analogously, a system is \textbf{poly-CLP} if its set of complex balanced equilibria $Z_+(\mathscr{N}, K)$ is the disjoint union of LP sets. A poly-PLP (poly-CLP) system is called multi-PLP (multi-CLP) if $\mu \geq 2$, otherwise PLP (CLP). We also denote a poly-PLP (poly-CLP) system with $\mu$-PLP ($\mu$-CLP) if we want to emphasize the value of $\mu$.  
\end{definition}

In \cite{HEME2023}, Hernandez and Mendoza studied cycle terminal PL-FSK systems and derived the following Theorem, extending a result of Boros for mass action systems:

\begin{theorem}
Let $(\mathscr{N}, K)$ be a cycle terminal PL-FSK system that satisfies $\delta = \delta_1 +
\cdots + \delta_l$. Suppose $E_+ \neq \emptyset$, $P$ a positive stoichiometric class such that $E_+ \cap P \neq \emptyset$ and for $x^* \in \mathbb{R}^m_{>0}$ and let

$$Q(x^*) = \{ x \in \mathbb{R}^m_{>0} | \log (x) - \log(x^*) \in \tilde{S}^{\perp} \}$$

then $(\mathscr{N}, K)$  is a poly-PLP system with flux subspace $P_E = \tilde{S}$ and reference points of the form $x^{*,j}$, i.e., 
\[
    E_+ = \bigcup_{x^* \in E_+ \cap P} Q(x^*).
\]

\end{theorem}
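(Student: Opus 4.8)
The plan is to recast the system in the generalized mass action (GMA) framework and to exploit two structural facts: the decomposition condition $\delta = \delta_1 + \cdots + \delta_l$ controls how net complex flux distributes over linkage classes, while factor span surjectivity governs the position of the kinetic order subspace $\tilde{S}$ relative to the stoichiometric subspace $S$. Since a PL-FSK system is in particular PL-RDK, hence a GMA system, its species formation rate function factors as $f(x) = Y A_k \psi(x)$, where $\psi(x)_y = x^{\tilde{y}}$ is the factor map ($\tilde{y}$ the kinetic order vector of a complex $y$, which exists for every complex because the network is cycle terminal), $A_k$ is the kinetic (Laplacian) matrix, and $Y$ sends a complex to its reaction vector. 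Both $A_k$ and $Y$ are block structured along the $l$ linkage classes $\Lambda_1, \ldots, \Lambda_l$, and $\tilde{S} = \langle \tilde{y}' - \tilde{y} : y \to y' \in \mathscr{R} \rangle$.

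The core step is to show that every positive equilibrium is \emph{linkage-class balanced}: for $x \in E_+$, the restriction of $A_k \psi(x)$ to each block $\Lambda_i$ already maps to zero under $Y$, not merely the total. I would obtain this from $\delta = \delta_1 + \cdots + \delta_l$, which by a routine rank count is equivalent to $S = S_1 \oplus \cdots \oplus S_l$, the direct sum of the linkage-class stoichiometric subspaces; since the $i$-th block of $A_k \psi(x)$ is the $i$-th incidence block applied to the vector of reaction rates, its image under $Y$ lies in $S_i$, and the identity $0 = f(x) = \sum_i Y\bigl((A_k \psi(x))|_{\Lambda_i}\bigr)$ in a direct sum forces each summand to vanish.

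Granted linkage-class balancing, the log-linear structure of $E_+$ follows by a scaling computation. For $v \in \tilde{S}^{\perp}$ the scalar $v \cdot \tilde{y}$ is constant on each $\Lambda_i$ (any two kinetic complexes in a linkage class differ by an element of $\tilde{S}$), so $\psi(x \cdot e^{v})|_{\Lambda_i} = e^{c_i}\psi(x)|_{\Lambda_i}$ for constants $c_i$; the block structure gives $(A_k \psi(x e^{v}))|_{\Lambda_i} = e^{c_i}(A_k \psi(x))|_{\Lambda_i}$, and applying $Y$ block by block, linkage-class balance of $x$ yields $f(x e^{v}) = 0$. Hence $x \in E_+$ implies $Q(x) \subseteq E_+$, so $E_+$ is a union of cosets of $\tilde{S}^{\perp}$ in logarithmic coordinates, i.e. a union of LP sets with flux subspace $P_E = \tilde{S}$, and the reverse containment of each such coset is the same computation run backwards. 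It then remains to reindex this union over $E_+ \cap P$: using factor span surjectivity one checks that $\tilde{S}^{\perp}$ is in general position with respect to the stoichiometric classes, so each LP-set component of $E_+$ meets the fixed class $P$ in a point that lies in $E_+ \cap P$, while conversely $Q(x^*) \subseteq E_+$ for every $x^* \in E_+ \cap P$. This yields $E_+ = \bigcup_{x^* \in E_+ \cap P} Q(x^*)$ with $Q(x^*) = \{ x \in \mathbb{R}^m_{>0} : \log x - \log x^* \in \tilde{S}^{\perp} \}$, as claimed.

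I expect the main obstacle to be this last reindexing step rather than the balancing argument: one has to formulate and prove precisely in what sense $\tilde{S}^{\perp}$ is transverse to the stoichiometric classes so that \emph{every} component of $E_+$, not merely some, meets the prescribed class $P$ — this is exactly where factor span surjectivity must be leveraged and where the statement reaches beyond the weakly reversible setting, in which $E_+$ collapses to a single complex balanced coset and the union is trivial. The linkage-class balancing step, although it is the conceptual crux, is a fairly mechanical consequence of $\delta = \delta_1 + \cdots + \delta_l$ once it is translated into $S = S_1 \oplus \cdots \oplus S_l$.
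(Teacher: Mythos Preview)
The present paper does not prove this theorem: it is quoted as a result of Hernandez and Mendoza \cite{HEME2023} (see the sentence immediately preceding it in Section~\ref{Sec:5}), and is invoked only as input to the subsequent theorem on weakly reversible PL-ISK systems. There is therefore no proof in this paper against which to compare your proposal.

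As a standalone assessment: your first two stages are the standard route and are sound. Translating $\delta = \sum_i \delta_i$ into $S = \bigoplus_i S_i$ and reading off block-wise vanishing of $Y A_k \psi(x)$ is exactly how linkage-class balancing is obtained, and the $\tilde{S}^{\perp}$-scaling argument showing $Q(x) \subseteq E_+$ for each $x \in E_+$ is the usual GMA computation (this part uses only PL-RDK and cycle terminality, not FSK). You are also right to flag the reindexing over $E_+ \cap P$ as the genuine obstacle: establishing that \emph{every} LP coset in $E_+$ meets the fixed stoichiometric class $P$ is a Birch-type transversality statement, and ``factor span surjectivity puts $\tilde{S}^{\perp}$ in general position'' is a slogan, not yet an argument. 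Whether and how FSK alone delivers this would have to be checked against the proof in \cite{HEME2023}; it is not supplied here.
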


A Corollary says that if $(\mathscr{N}, K)$  is monostationary, then it is in fact a (mono-) PLP system.  Combining this result with CF-WR, we obtain the following new result:

\begin{theorem}
Let $(\mathscr{N}, K)$ be a weakly reversible ISS PL-NDK system and $(\mathscr{N}_\omega, K_\omega)$ its CF-WR translate. Then

\begin{enumerate}
\item[i.] If $\mathscr{N}_\omega$ has independent linkage classes (ILC), then $(\mathscr{N}, K)$ is a poly-PLP system and

\item[ii.] If, in addition, $(\mathscr{N}, K)$ is mono-stationary, then it is a (mono-) PLP system.
\end{enumerate}
\end{theorem}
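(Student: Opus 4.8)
The plan is to verify that the CF-WR translate $(\mathscr{N}_\omega, K_\omega)$ meets every hypothesis of the log parametrization theorem of Hernandez and Mendoza stated above, apply that theorem to $(\mathscr{N}_\omega, K_\omega)$, and then transport the conclusion back to $(\mathscr{N}, K)$ using dynamic equivalence.

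First I would collect the structural features of the translate. By Procedure \ref{proc1} and the Remark following it, $(\mathscr{N}_\omega, K_\omega)$ is obtained from $(\mathscr{N}, K)$ using only shifting and dividing, so it is dynamically equivalent to $(\mathscr{N}, K)$ and retains $S$ and its rank $s$; the duplication of cycles in the algorithm keeps it weakly reversible; and, since $(\mathscr{N}, K)$ is PL-NDK, the translate is PL-RDK, hence CFK. Weak reversibility forces $\mathscr{N}_\omega$ to be cycle terminal, since every complex lies on a directed cycle and is therefore a reactant. Because $(\mathscr{N}, K)$ is ISS, the argument used in the proof of the enhanced Subspace Coincidence Proposition above -- namely Proposition \ref{prop13} applied to the reactant-multiple shifts that CF-WR performs -- shows that $(\mathscr{N}_\omega, K_\omega)$ is factor span surjective, i.e. PL-FSK. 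Finally, the ILC hypothesis on $\mathscr{N}_\omega$ is equivalent to the ranks of the linkage-class stoichiometric subspaces summing to $s$, hence to $\delta_\omega = \delta_{\omega,1} + \cdots + \delta_{\omega,l}$. Thus $(\mathscr{N}_\omega, K_\omega)$ is a cycle terminal PL-FSK system whose deficiency decomposes over its linkage classes, so the Hernandez--Mendoza theorem applies.

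If $E_+(\mathscr{N}_\omega, K_\omega) = \emptyset$ the claim is vacuous ($\mu = 0$), so assume it is nonempty and fix a positive stoichiometric class $P$ with $E_+ \cap P \neq \emptyset$. The theorem then gives $E_+(\mathscr{N}_\omega, K_\omega) = \bigcup_{x^* \in E_+ \cap P} Q(x^*)$ with flux subspace $P_E = \tilde{S}$ (that of the translate), so $(\mathscr{N}_\omega, K_\omega)$ is poly-PLP. Since the two systems are dynamically equivalent their ODE systems coincide, hence $E_+(\mathscr{N}, K) = E_+(\mathscr{N}_\omega, K_\omega)$; the same log-parametrized decomposition therefore exhibits $(\mathscr{N}, K)$ as a poly-PLP system, which proves (i). For (ii), CF-WR leaves $S$ and the species set unchanged, so the two systems share the same stoichiometric compatibility classes, and mono-stationarity of $(\mathscr{N}, K)$ means $|E_+ \cap P| \le 1$ for the class $P$ above; then the union defining $E_+$ has at most one summand, so $E_+(\mathscr{N}, K)$ is empty or a single LP set, which by definition is a (mono-) PLP system.

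The only real work is in the second paragraph: one must be careful to check that CF-WR's output is not merely CFK but factor span surjective (this is precisely where the ISS hypothesis, rather than a weaker one, is needed) and cycle terminal, and that the ILC requirement is imposed on the translate $\mathscr{N}_\omega$ rather than on $\mathscr{N}$. Once these are in hand the passage from $(\mathscr{N}_\omega, K_\omega)$ to $(\mathscr{N}, K)$ is immediate, because dynamic equivalence preserves the solution set and hence the entire LP description verbatim.
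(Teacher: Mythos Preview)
Your proposal is correct and follows essentially the same approach as the paper's proof: verify that the CF-WR translate is cycle terminal PL-FSK (via weak reversibility and Proposition~\ref{prop13} applied to the ISS hypothesis), invoke ILC to satisfy the deficiency-additivity hypothesis of the Hernandez--Mendoza theorem, conclude poly-PLP for the translate, and transport back by dynamic equivalence. Your write-up is considerably more explicit than the paper's terse version --- in particular you spell out that ILC is equivalent to $\delta_\omega = \sum_i \delta_{\omega,i}$, you handle the vacuous case $E_+ = \emptyset$, and for (ii) you argue directly from $|E_+ \cap P| \le 1$ (using that CF-WR preserves $S$ and hence the stoichiometric classes) rather than citing a separate corollary --- but the underlying argument is the same.
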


\begin{proof}
Since the kinetics of $\mathscr{N}$ is interaction span surjective, it is factor span surjective on $\mathscr{N}_\omega$.  Since$\mathscr{N}_\omega$ is weakly reversible, it is cycle terminal. By assumption, it has ILC. Hence, $(\mathscr{N}_\omega, K_\omega)$ is a poly-PLP system. Due to their dynamical equivalence, their equilibria sets coincide, showing that $(\mathscr{N}, K)$ too is also poly-PLP. By the Corollary, if the system is mono-stationary, then it is (mono-) PLP.
\end{proof}

\begin{remark}
In the case of ii), the ACR species of the system are easy to determine since the Species Hyperplane Criterion for PLP systems holds \cite{LLMM2022}. And thus, $K = S$.
\end{remark}

\begin{example}
 Consider the following subnetwork of the Schmitz network: \\
\begin{center}
\begin{tikzpicture}[baseline=(current  bounding  box.center)]
\tikzset{vertex/.style = {draw=none,fill=none}}
\tikzset{edge/.style = {bend right,->,> = latex', line width=0.20mm}} 
% vertices
\node[vertex] (1) at  (0,1.5) {$M_5$};
\node[vertex] (2) at  (4,1.5) {$M_2$};
\node[vertex] (3) at  (2,0) {$M_1$};
\node[vertex] (4) at  (6,0) {$M_4$};
\node[vertex] (5) at  (0,-1.5) {$M_6$};
\node[vertex] (6) at  (4,-1.5) {$M_3$};
\draw [edge]  (1) to[""] (3);
\draw [edge]  (3) to[""] (1);
\draw [edge]  (1) to[""] (5);
\draw [edge]  (5) to[""] (3);

\draw [edge]  (2) to[""] (3);
\draw [edge]  (4) to[""] (2);
\draw [edge]  (6) to[""] (4);
\draw [edge]  (3) to[""] (6);
\end{tikzpicture}
\end{center}

It has one NDK node $M_1$, and its CF-WR translate $\mathscr{N}^\sharp$ is obtained by shifting each reaction in the right cycle by $M_1$. $\mathscr{N}^\sharp$ has two linkage classes which are clearly independent. Furthermore, since the network is concordant and the kinetics weakly monotonic, it is mono-stationary. So that the PL-NDK subsystem fulfills all the assumptions of the Theorem and hence is a PLP system. 
\end{example}

\begin{remark}
    The log parametrization of the Schmitz subnetwork was first shown in 2019 by Fortun et al. \cite{FLRM2019}, where it was viewed as a Deficiency Zero Theorem for a class of PL-NDK systems.
\end{remark}

\section{From network translations to network transformations}
\label{Sec:6}

\subsection{The extended set of network operations}

In Section \ref{Sec:2}, we introduce the three (reaction) network operations that translate a kinetic system $(\mathscr{N}, K)$ to $(\mathscr{N}', K)'$. Some of the consequences of this translation are: ODE systems are identical, the sets $E_+(\mathscr{N}, K)$ and $E_+(\mathscr{N}', K')$ are the same, and the kinetic subspaces $\underline{K}$ and $\underline{K'}$ coincide. Here, we extend the set of network operations that generate dynamically equivalent systems.
 
\begin{definition}[Scaling and Splitting operations]
Let $q: y \xrightarrow{r} y'$ be a reaction with kinetics $K_{q}=rI_K$ in a kinetic system $(\mathscr{N}, K)$.
\begin{enumerate}
    \item \textbf{Scaling} the reaction results to  $q':my \xrightarrow{r/m} my'$ with kinetics $K_{q'}=\frac{r}{m}I_K$.
    
    \item \textbf{Splitting (via rate constant difference)} the reaction results to the pair reactions $q': y + a \xrightarrow{r_1} y' + a$ and $q'': y' + b \xrightarrow{r_2} y + b$ with kinetics $K_{q'}=r_1 I_{K}$ and $K_{q''}=r_2 I_{K}$, respectively. Here, $r =r_1 - r_2$, $a = a_1 X_1 + ... + a_m X_m$, $b = b_1 X_1 + ... + b_m X_m$ and $a_i$'s, $b_i$'s $\in \mathbb{Z}$.

    \item \textbf{Splitting (via reaction vector)} the reaction results to the pair reactions $q': x \xrightarrow{r} x' $ and $q'': z \xrightarrow{r} z'$ with the same kinetics (that is, $K_{q}=K_{q'}=K_{q''}$) and $y'-y = (x'-x) + (z'-z)$.
\end{enumerate}
 
\end{definition}
 
 %In the succeeding example for scaling, we stretch the reaction vector of $R_1$ to yield the transformed reaction $2X \rightarrow 2Y$. 

\begin{definition}[Merging via reaction vector and adding dummy reactions]
Let $q_1: y \xrightarrow{r_1} y'$ and $q_2: z \xrightarrow{r_2} z'$ be reactions with the same kinetics $K_{q}$.
\begin{enumerate}
    \item \textbf{Merging (via reaction vector)} results to the reaction $q': x \xrightarrow{r} x'$ with kinetics $K_{q}$ where $x'-x = (y'-y) + (z'-z)$.

    \item \textbf{Adding dummy reactions} results to additional pair reaction $p': v \xrightarrow{r} v'$ and $p'': w \xrightarrow{r} w'$ with the same arbitrary kinetics $K_{p}$ where $(v'-v) + (w'-w) = 0$.
\end{enumerate}
\end{definition}

For examples, see Table \ref{tabexx}. It is important to note that a finite number of summands (instead of just two) are allowed in splitting/merging by reaction vector. %Most of the operations presented in this section are already studied in previous works, specifically their application in mass action. For instance, shifting (or translation) are discussed in the works of Johnston et al. \cite{JOHN2014,TONJOHN2018, JOHNMUPA2018}, scaling in \cite{BOROSYU2019} and dividing in \cite{MESH2022, HONG2023}. 

\begin{table}[h]
\caption{Examples for the additional (reaction) network operations}
\label{tabexx}
\begin{tabular*}{\textwidth}{@{\extracolsep\fill} p{0.12\linewidth} llll}
\toprule%
& \multicolumn{2}{@{}c@{}}{Original Kinetic System} & \multicolumn{2}{@{}c@{}}{WR Dynamic Equivalent System} \\\cmidrule{2-3}\cmidrule{4-5}%
Operation & Network & Kinetics & Network & Kinetics \\

\midrule
Scaling &
$\begin{array}{l}
R_1: X \rightarrow Y\\
R_2: 2Y \rightarrow 2X
\end{array}$
 & $\begin{array}{l}
K_1 =r_1 I_{K,1}  \\
K_2 =r_2 I_{K,2}
\end{array}$ & $\begin{array}{l}
R_1^*: 2X \rightarrow 2Y \\
R_2: 2Y \rightarrow 2X
\end{array}$ & $\begin{array}{l}
K_1 =\frac{r_1}{2} I_{K,1}  \\
K_2 =r_2 I_{K,2}
\end{array}$ \\
\midrule
Splitting via rate const. diff. &
$\begin{array}{l}
R_1: 2X \rightarrow X+Y\\
R_2: X+Y \rightarrow 2Y\\
R_3: 2Y \rightarrow X+Y
\end{array}$
 & $\begin{array}{l}
K_1 =r_1 I_{K,1}  \\
K_2 =r_2 I_{K,2} \\
K_3 =r_3 I_{K,3}
\end{array}$ & $\begin{array}{l}
R_1: 2X \rightarrow X+Y\\
R_{2,1}^*: X+Y \rightarrow 2Y\\
R_{2,2}^*: X+Y \rightarrow 2X\\
R_3: 2Y \rightarrow X+Y
\end{array}$
 & $\begin{array}{l}
K_1 =r_1 I_{K,1}  \\
K_{2,1} =r_{21} I_{K,2} \\
K_{2,2} =r_{22} I_{K,2} \\
K_3 =r_3 I_{K,3}
\end{array}$ \\

\midrule
Splitting via react. vect. &
$\begin{array}{l}
R_1: X+Y \rightarrow 2Y\\
R_2: Y \rightarrow X
\end{array}$
 & $\begin{array}{l}
K_1 =r_1 I_{K,1}  \\
K_2 =r_2 I_{K,2}
\end{array}$ & $\begin{array}{l}
R_1: X+Y \rightarrow 2Y\\
R_2^*: 2Y \rightarrow 2X + Y \\
R_3^*: 2X + Y \rightarrow X + Y
\end{array}$& $\begin{array}{l}
K_1 =r_1 I_{K,1}  \\
K_2 =r_2 I_{K,2} \\
K_3 =r_2 I_{K,2}
\end{array}$ \\
\midrule
Merging via react. vect. &
$\begin{array}{l}
R_1: X+Y \rightarrow 2Y \\
R_2: 2X \rightarrow 2Y \\
R_3: 4Y \rightarrow 3X+Y \\
\end{array}$
 & $\begin{array}{l}
K_1 =r_1 I_{K,1}  \\
K_2 =r_1 I_{K,1} \\
K_3 =r_3 I_{K,3}  \\ 
\end{array}$ & $\begin{array}{l}
R_1^*: 3X+Y \rightarrow 4Y \\
R_3: 4Y \rightarrow 3X+Y \\
\end{array}$
 & $\begin{array}{l}
K_1 =r_1 I_{K,1} \\
K_3 =r_3 I_{K,3}  \\ 
\end{array}$ \\

\midrule

Adding dummy react. &
$\begin{array}{l}
R_1: X+Y \rightarrow Z \\
R_2: Z \rightarrow 2Z \\
R_3: X+2Z \rightarrow Y \\
R_4: Y \rightarrow 2X+Y \\
\end{array}$
 & $\begin{array}{l}
K_1 =r_1 I_{K,1}  \\
K_2 =r_2 I_{K,3} \\
K_3 =r_3 I_{K,3}  \\ 
K_4 =r_4 I_{K,4}  \\ 
\end{array}$ & $\begin{array}{l}
R_1: X+Y \rightarrow Z \\
R_2: Z \rightarrow 2Z \\
R_3: X+2Z \rightarrow Y \\
R_4: Y \rightarrow 2X+Y \\
R_5^*: 2Z \rightarrow X+Y \\
R_6^*: 2X+Y \rightarrow X+2Z \\
\end{array}$
 & $\begin{array}{l}
K_1 =r_1 I_{K,1}  \\
K_2 =r_2 I_{K,3} \\
K_3 =r_3 I_{K,3}  \\ 
K_4 =r_4 I_{K,4}  \\ 
K_5 =r I_{K,a}  \\ 
K_6 =r I_{K,a}  \\ 
\end{array}$\\

\bottomrule
\end{tabular*}

\end{table}

We identify two classes of operations depending on whether they leave the set of reaction vectors of the network invariant or not. See Table \ref{tabclass}. 

\begin{table}[h!]
\centering
\caption{Classifications of the network operations}\label{tabclass}%
\begin{tabular}{lcc}
\hline
Network Operations & Reaction Vector Invariant & Stoichiometric Subspace Invariant \\
\hline
\hline
Shifting & YES & YES \\
\hline
Scaling & NO & YES \\
\hline
Dividing & YES & YES \\
\hline
Merging & YES & YES \\
\hline
Splitting via rate const. diff. & NO & YES \\
\hline
Splitting via reac. vect. & NO & NO \\
\hline
Merging via reac. vect. & NO & NO \\
\hline
Adding dummy reactions & NO & NO \\
\hline
\hline
\end{tabular}
\end{table}

There are 3 operations which preserve the set of reaction vectors of the network: shifting, dividing and merging. These form one class which will be called network translation operations. Another classification is given by the criterion whether they leave the stoichiometric subspace of the network or not. In this case, 5 operations are $S$-invariant: the  3 network translation operations, scaling and splitting by rate constant difference. These will be called network transformation operations. Additionally, the operations, splitting via reaction vectors and adding dummy reactions enlarge the original subspace, that is, $S \subseteq S'$ while the operation merging via reaction vectors makes $S' \subseteq S$. From these observations, we have the following chain of subsets of (reaction) network operation for a kinetic system $(\mathscr{N}, K)$ with the same set of species: network operations, network transformation operation and network translation operation.

\subsection{Relationships between network operations}

Let $(\mathscr{N}, K)$ and $(\mathscr{N}', K')$ be dynamically equivalent kinetic systems, i.e., $\mathscr{S} = \mathscr{S}'$ and $f = NK = N'K'= f'$. If $\mathscr{R}$ and $\mathscr{R}'$ are their reaction sets, we have the following definition:

\begin{definition}
    Two kinetics $K_q, K_q'$ for reactions $q, q' \in \mathscr{R} \cup \mathscr{R}'$ are \textbf{equivalent} if their ratio for all $x$ in their domain, is a positive real number. %An equivalent condition is that they have \textcolor{red}{RID} representations with the same interaction function $I$ (we also say for short, they have the same interaction map $I$). 
    The reactions $q, q'$ are \textbf{kinetically equivalent} if their kinetics are equivalent.
\end{definition}

This concept extends previously introduced equivalences for branching reactions (leading to the concept of CF-subsets) and reactions in a kinetic system in general (leading to the concept of interaction span surjectivity).

Any splitting operation assigns to a reaction $q: y \rightarrow y'$  with kinetics $K_q$ in $\mathscr{N}$ two reactions $q_1$ and $q_2$ in $\mathscr{N}^*$ with kinetics $K_1$ and $K_2$ respectively. The following table shows the relationships between these objects:

\begin{table}[h!]
\caption{Relationship of the kinetics of split reactions}\label{tab1}%
\begin{tabularx}{\linewidth}{X|X|X|X|X}
\hline
Operation & Kinetics of $q$ & Kinetics of $q_1$ & Kinetics of $q_2$ & Requirements \\
\hline
\hline
Dividing (rate constant sum splitting) & 
$K = rI_K$ & $K_1 = \lambda K = (\lambda r)I_K  = r (\lambda I_K)$ & $K_2 = (1-\lambda)K = ((1-\lambda)r)I_K  = r ((1-\lambda) I_K)$ & 
$\begin{array}{l}
  0 < \lambda < 1, \\
   0 < r
\end{array}$
 \\
\hline
Rate constant difference splitting & $K = rI_K$ & $K_1 = r_1 I_K$ & $K_2 = r_2 I_K$ & 
$\begin{array}{l}
  r = r_1 – r_2,  \\
  r_1 > r_2 > 0 
\end{array}$ 
\\
\hline
Reaction vector splitting & $K = rI_K$ & $K_1 = rI_K$ & $K_2 = K = rI_K$ & $r > 0$ \\
\hline
Adding dummy pair & $K = rI_K$ &$ K_1 = rI_K$ & $K_2 = K = rI_K$ & = Extension of reaction vector splitting to zero vector or trivial reaction $y \rightarrow y$  \\
\hline
\hline
\end{tabularx}
\end{table}

\begin{remark}
    Note that ``dividing'' can be viewed as a special case of ``reaction vector splitting'' by taking $x_i – x = \lambda (y'- y)$ and $z'- z = (1- \lambda )(y'- y)$ since $\lambda K(y'- y) = K(x'- x)$ and $((1- \lambda )K)(y'- y) = K (z'- z)$.
\end{remark}
 
We can relate the two merging operations ``reaction vector merge'' and ``merge'' (from Hong et al. \cite{HONG2023}) as follows: recall that ``merge'' assigns to two reactions with  kinetics $K_1$ and $K_2$ a new reaction with the same reaction vector and kinetics $K_1 + K_2$. We can interpret this in the context of ``reaction vector merging'' as follows: let $y'- y$ be the reaction vector of the ``merge'' and set $x'- x = z'- z = (y'- y)/2$.  For the  kinetics $K =(K_1 + K_2)/2$,  we have $K(y'- y) = K_1(x'- x) + K_2(z'- x)$. This generalizes ``reaction vector mering'' to arbitrary kinetics in the above special case of reaction vectors. The ``merge'' kinetics is twice ``reaction vector merging'' kinetics, making them equivalent.  \\

This can be generalized to the case when the reaction vectors are positively proportional and leads to the following generalization of the merge and reaction vector merge operations:
 
\begin{definition}
    Let the $x \rightarrow x'$ and $z \rightarrow z'$ be reactions with kinetics $K_1$ and $K_2$ respectively. Let their reaction vectors be $x'- x = \lambda(z'- z)$, $\lambda > 0$. Their ``reaction vector merge'' is the reaction $y \rightarrow y'$ with $y'- y = (x' - x) + (z'- z)$ and kinetics $K = \frac{\lambda K_1 + K_2}{\lambda+1}$. The kinetics $(\lambda K_1 + K_2)$ is called the \textbf{$\lambda$-merge} of the reactions and generalizes the merge reaction ($\lambda$ = 1).
\end{definition}

The proportionality relation implies that $x'- x =\frac{\lambda}{\lambda+ 1}(y'-y)$ and $z'- z = \frac{1}{\lambda+ 1}(y'-y)$. Multiplying each term with their respective kinetics delivers the claim.

\subsection{Concept and classes of network transformations}

A network transformation of kinetic systems is a dynamic equivalence “aligned" with the stoichiometric structure of the underlying networks. We study 3 classes: S-contracting, S-expanding and their intersection, S-invariant transformations.

\begin{remark}
The term “network transformation” was first used in Nazareno et al. only for S-invariant transformations.
\end{remark}

We now broaden the concept as follows:
 
\begin{definition}
    A dynamic equivalence between $(\mathscr{N},K)$ and $(\mathscr{N}',K')$ is called 
    \begin{enumerate}
        \item[i.] \textbf{S-extending transformation} if $S \subset S'$, 
        \item[ii.] \textbf{S-including transformation} if $S' \subset S$, 
        \item[iii.] \textbf{S-invariant transformation} if $S$ = $S'$ and
        \item[iv.] \textbf{Network transformation} (or simply transformation) if one of i), ii) or iii) holds.
    \end{enumerate}
\end{definition}

 $(\mathscr{N}',K')$ is correspondingly called an S-extension, S-inclusion, S-invariance of $(\mathscr{N},K)$ and network transform (or transform) of $(\mathscr{N},K)$. 
We have the following chain of containment for any kinetic system $(\mathscr{N},K)$: 
\begin{itemize}
    \item the set of all kinetic systems with the same set of species (denoted by $SID(\mathscr{N},K)$), 
    \item its subset of realizations of $(\mathscr{N},K)$, i.e., all systems dynamically equivalent to $(\mathscr{N},K)$ (denoted by $REL(\mathscr{N},K)$) and 
    \item the subset of network transformations of $(\mathscr{N},K)$ (denoted by $TRF(\mathscr{N},K)$).
\end{itemize}

$TRF_>(\mathscr{N},K)$, $TRF_=(\mathscr{N},K)$ and $TRF_<(\mathscr{N},K)$ denote the subsets of S-extensions, S-invariances and S-inclusions respectively.

\begin{remark}
    The network translations, first studied by M. Johnston in 2014 \cite{JOHN2014} and most recently by Hong et al. \cite{HONG2023} are S-invariant transformations with the special property that they leave the set of reaction vectors unchanged.
\end{remark}

We introduce several sets of kinetic systems generated by network operations on $(\mathscr{N},K)$.

\begin{definition}
    Let $(\mathscr{N},K)$ be a kinetic system. 
    \begin{enumerate}
        \item [i.] $MNO_>(\mathscr{N},K) :=$ the set of kinetic systems generated by the S-invariant and S-extending operations.
        \item [ii.] $MNO_=(\mathscr{N},K) :=$ the set of kinetic systems generated by the S-invariant operations.
        \item [iii.] $MNO_<(\mathscr{N},K) :=$ the set of kinetic systems generated by S-invariant and S-including operations.
        \item [iv.] $MNO(\mathscr{N},K) :=$ the set of kinetic systems generated by all operations.
    \end{enumerate}
\end{definition}
\begin{lemma}
\label{inv:dependence}
    Let $(\mathscr{N},K)$ and $(\mathscr{N}',K')$ be kinetic systems with the same positive equilibria set. Then $(\mathscr{N}',K')$ is positive dependent if and only if $(\mathscr{N},K)$ is positive dependent.
\end{lemma}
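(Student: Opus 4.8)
The plan is to reduce the statement to Theorem \ref{t13*}, which already records the equivalence between the structural property ``positive dependence of $\mathscr{N}$'' and the dynamical property ``$(\mathscr{N},K)$ has a positive equilibrium'' (statements i and iv). Since the hypothesis is exactly that the two systems share the set of positive equilibria, it then suffices to show that each system is positive dependent precisely when that common set is nonempty, and the conclusion follows formally.

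First I would dispatch the elementary half. Suppose $x^{*} \in E_+(\mathscr{N},K)$. Then $x^{*}$ is strictly positive, so its support contains the support of every reactant complex, and the kinetics axiom forces $K_q(x^{*})>0$ for all $q \in \mathscr{R}$; hence $K(x^{*})$ is a strictly positive vector of $\mathbb{R}^{\mathscr{R}}$. Since $x^{*}$ is an equilibrium, $NK(x^{*})=0$, i.e.\ $K(x^{*}) \in \ker N$, so $\ker N$ contains a positive vector and, by Proposition \ref{prop:posKerN*}, $\mathscr{N}$ is positive dependent. Running the identical argument for $(\mathscr{N}',K')$ shows that whenever the common set $E_+(\mathscr{N},K)=E_+(\mathscr{N}',K')$ is nonempty, both $\mathscr{N}$ and $\mathscr{N}'$ are positive dependent, so the asserted equivalence holds with both sides true.

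For the converse I would invoke the implication (i)$\Rightarrow$(iv) of Theorem \ref{t13*}: positive dependence of a kinetic system guarantees a positive equilibrium. Thus if $(\mathscr{N},K)$ is positive dependent then $E_+(\mathscr{N},K)\neq\emptyset$, hence $E_+(\mathscr{N}',K')\neq\emptyset$ by the assumed equality of the equilibrium sets, and by the previous paragraph $(\mathscr{N}',K')$ is positive dependent; the reverse direction follows by symmetry. Equivalently, one may route this through statement (v): a positive dependent system admits a weakly reversible network translate, which is dynamically equivalent and therefore shares the common equilibrium set, again giving $E_+ \neq \emptyset$ on both sides.

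The only delicate point is keeping the two halves of the structural--dynamical bridge straight: ``a positive equilibrium yields a positive element of $\ker N$'' is immediate, whereas ``positive dependence yields a positive equilibrium'' is the genuinely non-trivial input we are taking from Theorem \ref{t13*}. Once both are in place, the hypothesis that the equilibrium sets coincide closes the argument with no further computation, and since the reasoning never refers to the reaction vectors of the two networks it applies to any pair of kinetic systems sharing a positive equilibrium set, not only to network transforms of one another.
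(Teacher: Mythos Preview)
Your approach mirrors the paper's: reduce to the equivalence (i)$\Leftrightarrow$(iv) of Theorem~\ref{t13*} and transfer positive dependence through the common equilibrium set. The gap is in your reading of (iv). That statement does \emph{not} assert that a positive-dependent network has $E_+(\mathscr{N},K)\neq\emptyset$ for the given kinetics; it only asserts that \emph{some} choice of rate constants yields a positive equilibrium. So from ``$\mathscr{N}$ positive dependent'' you cannot conclude $E_+(\mathscr{N},K)\neq\emptyset$ for the specific $K$ in the hypothesis, and hence you cannot invoke the equality $E_+(\mathscr{N},K)=E_+(\mathscr{N}',K')$. Your alternative route through (v) has the same defect: a weakly reversible translate is dynamically equivalent to $(\mathscr{N},K)$, but weak reversibility alone does not force $E_+\neq\emptyset$ for an arbitrary kinetics. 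The paper's own proof is worded more carefully (``there are rate constants such that $(\mathscr{N},K)$ has a positive equilibrium $x^*$'') but then tacitly assumes that this $x^*$, obtained after a rate-constant change, lies in $E_+$ of the \emph{original} $(\mathscr{N}',K')$; so the same issue is present there.

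Indeed, as literally stated the lemma is false: take $\mathscr{N}=\{A\to B,\ B\to A\}$ with constant kinetics $K_1\equiv 1,\ K_2\equiv 2$ (so $NK(x)=(1,-1)$ for every $x$ and $E_+=\emptyset$), and $\mathscr{N}'=\{A\to B\}$ with $K_1'\equiv 1$ (again $E_+=\emptyset$). The two systems share the empty equilibrium set, yet $\mathscr{N}$ is positive dependent while $\mathscr{N}'$ is not. The argument---yours and the paper's---becomes sound once one strengthens the hypothesis to the intended application, namely that $(\mathscr{N}',K')$ is obtained from $(\mathscr{N},K)$ by network operations: then a rate-constant change on one side induces a corresponding change on the other, the equality of equilibrium sets persists, and the transfer goes through.
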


\begin{proof}
    After Proposition \ref{prop:posKerN*}, a network $\mathscr{N}$ is positive dependent $\Longleftrightarrow$ for any kinetic system $(\mathscr{N},K)$, there are rate constants such that $(\mathscr{N},K)$ has a positive equilibrium $x^*$, i.e., the coordinates of $K(x^*)$ provide a positive dependent relation for the reaction vectors of $\mathscr{N}$. Since $x^*$ is also an equilibrium of $(\mathscr{N}',K')$, the $K'(x^*)$ coordinates provide a positive dependence for the reaction vectors of $\mathscr{N}'$.
\end{proof}
Furthermore, by definition, we have the following proposition:
\begin{proposition}
    Let $(\mathscr{N},K)$ be a kinetic system. Then 
    \begin{enumerate}
        \item [i.] $MNO_=(\mathscr{N},K) \subset TRF_=(\mathscr{N},K)$,
        \item [ii.] $MNO_>(\mathscr{N},K) \subset TRF_>(\mathscr{N},K)$, and 
        \item [iii.] $MNO_<(\mathscr{N},K) \subset TRF_<(\mathscr{N},K)$.
    \end{enumerate}
\end{proposition}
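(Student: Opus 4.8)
The plan is to reduce the statement to two elementary properties of the individual network operations, after which all three inclusions follow by a short induction on the length of a generating sequence.

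First I would record, for each of the eight operations (shifting, scaling, dividing, merging, splitting via rate-constant difference, splitting via reaction vector, merging via reaction vector, adding dummy reactions), the two facts that are needed. \textbf{Property (a):} every such operation produces a kinetic system dynamically equivalent to its input. This is exactly what the defining formulas for the operations guarantee: in each case the sum over the new reactions of (reaction vector) times (rate function) equals the corresponding sum over the replaced reaction(s), so the species formation rate function $f = NK$ is unchanged (and the species set and domain are untouched). \textbf{Property (b):} the effect on the stoichiometric subspace is the one tabulated in Table \ref{tabclass} --- an S-invariant operation leaves $S$ fixed, an S-extending operation gives $S \subseteq S'$, and an S-including operation gives $S' \subseteq S$. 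Here I read the symbol $\subset$ in the definition of network transformation as $\subseteq$, consistently with the stated fact that the S-invariant transformations form the intersection of the S-extending and S-including classes.

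Next I would assemble these facts. Let $(\mathscr{N}',K') \in MNO_=(\mathscr{N},K)$, witnessed by a finite chain $(\mathscr{N},K) = (\mathscr{N}_0,K_0), (\mathscr{N}_1,K_1), \dots, (\mathscr{N}_t,K_t) = (\mathscr{N}',K')$ in which every step is an S-invariant operation, with stoichiometric subspaces $S = S_0, S_1, \dots, S_t = S'$. By Property (a) and transitivity of dynamic equivalence (an equivalence relation), $(\mathscr{N}',K')$ is dynamically equivalent to $(\mathscr{N},K)$; by Property (b), $S_0 = S_1 = \dots = S_t$, hence $S' = S$, so $(\mathscr{N}',K') \in TRF_=(\mathscr{N},K)$. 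For (ii), the same kind of chain now uses steps that are S-invariant or S-extending, so Property (b) yields $S = S_0 \subseteq S_1 \subseteq \dots \subseteq S_t = S'$, hence $S \subseteq S'$ by transitivity of $\subseteq$; together with dynamic equivalence this gives $(\mathscr{N}',K') \in TRF_>(\mathscr{N},K)$. Part (iii) is the mirror image: the admissible steps are S-invariant or S-including, so the chain of subspaces reads $S = S_0 \supseteq S_1 \supseteq \dots \supseteq S_t = S'$, and $(\mathscr{N}',K') \in TRF_<(\mathscr{N},K)$.

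I do not anticipate a real obstacle; as the paper remarks, the statement holds essentially ``by definition.'' The only point needing attention is the per-operation verification of Properties (a) and (b) --- in particular, for splitting and merging via reaction vector and for adding dummy reactions, checking that the new reaction vectors sum to the old one (hence dynamic equivalence is preserved) and deducing that the span can therefore only enlarge, only shrink, or stay fixed, respectively. This is the computational content behind Table \ref{tabclass} and amounts to one short calculation per operation rather than a conceptual difficulty.
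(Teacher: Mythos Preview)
Your proposal is correct and matches the paper's approach: the paper offers no proof at all, simply prefacing the proposition with ``by definition, we have the following proposition,'' and your induction on the length of the generating sequence together with the per-operation verification of dynamic equivalence and the subspace effect is exactly the routine unpacking of that phrase. Your care about reading $\subset$ as $\subseteq$ is appropriate given the paper's convention that S-invariant transformations are both S-extending and S-including.
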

\begin{remark}
    In general, $MNO(\mathscr{N},K)$ may not be contained in $TRF(\mathscr{N},K)$ since the combination of extending and including operations may not result in a network transformation.
\end{remark}
\subsection{An application to the study of Wnt signaling}
In this Section, we cite an initial application of the “splitting by reaction vector” by Hernandez et al. \cite{HELM2024} in their compararative study of Wnt signaling models. An application of the “scaling” operation is discussed in Section \ref{7.2}. Hernandez et al. compared 3 multi-stationary models of Wnt signaling which fit experimental data well but are not dynamically equivalent. As part of their Common Embedded Networks (CSEN) analysis, they introduced the concept of “proximate equivalence” of two kinetic systems, which denotes the dynamic equivalence of the non-flow reaction subnetworks of the two systems. They decomposed the embedded networks (relative to the subset of common species) and then decomposed them into the subnetwork of common reactions and the complementary subnetworks. For some model pairs, proximate equivalences could be constructed for the complementary subnetworks via network transformation consisting of reaction vector splittings, signifying greater similarity.

\section{Network transformations and kinetic system properties}
\label{Sec:7}

\subsection{A PL-NDK system for a model of biosynthesis in \textit{Populus xylem}}
\label{ex2}
We illustrate the variation of structural properties using the generalized mass action (GMA) system of monolignol biosynthesis in \cite{LEVO2010}. A kinetic realization of this system was constructed by Arceo et al. \cite{AJLM2017}. We specifically study a weakly reversible transform of its maximal positive dependent subnetwork, which we denote by MOLIB. The CRN of the kinetic realization (denoted as ECJ5-G in \cite{AJLM2017} is given by

 \begin{equation}
  \begin{split}
R_{1}: & X_{1}+X_{13} \rightarrow  2X_{1}+X_{13} \\
R_{2}: & X_{1} \rightarrow X_{2} \\ 
R_{3}: & X_{2}+ X_{5} \rightarrow  X_{3}+ X_{5}  \\
R_{4}: & X_{2} \rightarrow X_{5}  \\
R_{5}: & X_{2 } \rightarrow 0  \\
R_{6}: & X_{3} \rightarrow  X_{4}  \\
R_{7}: & X_{3} \rightarrow X_{6}  \\
R_{8}: & X_{3} \rightarrow 0  \\
R_{9}: & X_{4} \rightarrow 0  \\
R_{10}: & X_{5} \rightarrow  X_{6}  \\
R_{11}: & X_{5}+X_{11} \rightarrow  X_{7}+ X_{11}  \\
R_{12}: & X_{5}+X_{7} \rightarrow  X_{5}+ X_{8}  \\
R_{13}: & X_{6} \rightarrow  X_{8} 
  \end{split}
\quad \quad 
  \begin{split}
R_{14}: & X_{6}+X_{8} \rightarrow  X_{6}+ X_{9}  \\
R_{15}: & X_{7}+X_{9} \rightarrow  X_{9}+ X_{10}  \\
R_{16}: & X_{9} \rightarrow 0  \\
R_{17}: & X_{9} \rightarrow X_{11}  \\
R_{18}: & X_{10}+X_{11} \rightarrow  X_{11}  \\
R_{19}: & X_{11} \rightarrow 0   \\
R_{20}: & X_{11} \rightarrow X_{12}  \\
R_{21}: & X_{12} \rightarrow 0  \\
R_{22}: & 0 \rightarrow X_{13}  \\
R_{23}: & X_{4} \rightarrow H  \\
R_{24}: & X_{9} \rightarrow G  \\
R_{25}: & X_{12} \rightarrow S \\
  \end{split}
  \nonumber
\end{equation}

It is not positive dependent according to CRNToolbox so we obtain its maximal positive dependent subnetwork by removing the last 4 reactions and denote it by ECJ5-GMPD. The kinetic order matrix for ECJ5-GMPD is given by:
\includegraphics[width=\textwidth]{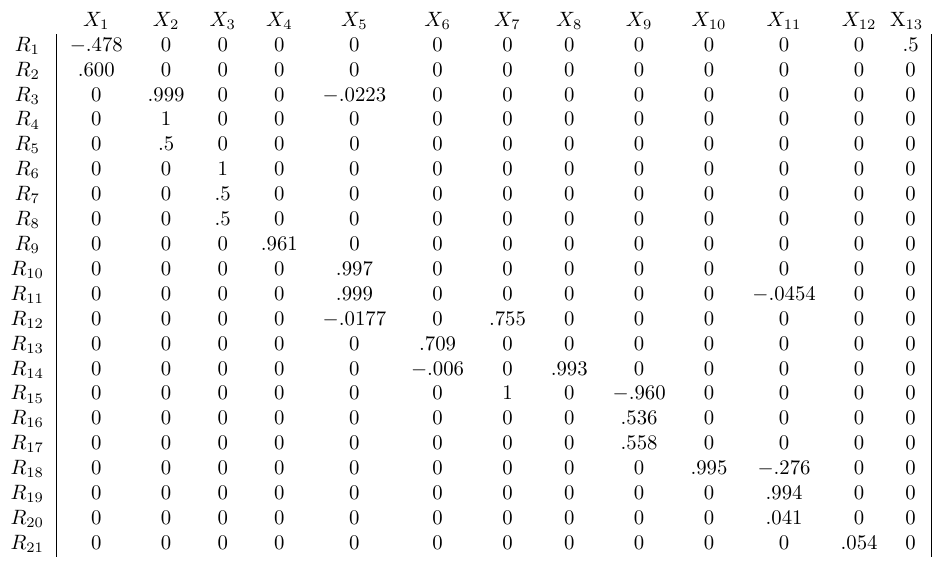}
As noted in [AJLM2017], it is a PL-NDK system with the following NDK nodes as shown in Table \ref{tabNDK}.
\begin{table}[h!]
\caption{NDK properties of ECJ5-GMPD}\label{tabNDK}%
\begin{tabular}{lcc}
\hline
Branching NDK complex & Reaction numbers of branching reactions & Comment \\
\hline
$X_2 + X_{13}$ & $R_3, R_4, R_5$ &  \\
$X_3 + X_{13}$ & $R_6, R_7, R_8$ & 
 $R_7, R_8$ have the same kinetic vectors  \\
$X_9 + X_{13}$ & $R_{16}, R_{17}$ &  \\
$X_{11} + X_{13}$ & $R_{19}, R_{20}$ &  \\
\hline
\hline
\end{tabular}
\end{table}
Using only the shifting operation, we obtain a weakly reversible transform MOLIB. 
\begin{equation}
\nonumber
\begin{split}
R_{1}: & X_{13} \rightarrow  X_{1}+X_{13} \\
R_{2}: & X_{1}+X_{13 } \rightarrow X_{2 }+X_{13} \\
R_{3}: & X_{2}+X_{13 } \rightarrow  X_{3}+X_{13} \\
R_{4}: & X_{2}+X_{13 } \rightarrow X_{5}+X_{13} \\ 
R_{5}: & X_{2}+X_{13 } \rightarrow  X_{13} \\ 
R_{6}: & X_{3}+X_{13 } \rightarrow  X_{4}+X_{13} \\ 
R_{7}: & X_{3}+X_{13 } \rightarrow X_{6 }+X_{13} \\
R_{8}: & X_{3}+X_{13 } \rightarrow X_{13} \\ 
R_{9}: & X_{4}+X_{13 } \rightarrow X_{13} \\ 
R_{10}: & X_{5}+X_{13 } \rightarrow  X_{6}+X_{13} \\ 
R_{11}: & X_{5}+X_{13} \rightarrow  X_{7}+ X_{13}
\end{split} \quad \quad
\begin{split}
R_{12}: & X_{7}+X_{13} \rightarrow  X_{8 }+X_{13} \\
R_{13}: & X_{6}+X_{13} \rightarrow  X_{8}+X_{13} \\ 
R_{14}: & X_{8}+X_{13 } \rightarrow  X_{9}+X_{13} \\ 
R_{15}: & X_{7}+X_{13 } \rightarrow  X_{10}+X_{13} \\
R_{16}: & X_{9}+ X_{13} \rightarrow  X_{13} \\
R_{17}: & X_{9}+ X_{13} \rightarrow X_{11}+ X_{13} \\
R_{18}: & X_{10}+X_{13} \rightarrow  X_{13} \\
R_{19}: & X_{11}+X_{13} \rightarrow  X_{13} \\ 
R_{20}: & X_{11}+X_{13} \rightarrow X_{12}+X_{13} \\
R_{21}: & X_{12}+X_{13} \rightarrow  X_{13} 
\end{split}
\end{equation}

In this translation, we shift by adding $X_{13}$ to reactions $R_{2}$, $R_{4}$, $R_{5}$, $R_{6}$, $R_{7}$, $R_{8}$, $R_{9}$, $R_{10}$, $R_{13}$, $R_{16}$, $R_{17}$, $R_{19}$, $R_{20}$ and $R_{21}$. We also shift by adding $-X_1$ to $R_1$, $X_{13}-X_5$ to $R_3$ and $R_{12}$, $X_{13}-X_{11}$ to $R_{11}$ and $R_{18}$, $X_{13}-X_{6}$ to $R_{14}$ and $X_{13}-X_{9}$ to $R_{15}$. Table \ref{netprop:MOLIB_ECJ} shows the variation in network numbers of the two kinetic systems.

\begin{table}[h!]
\centering
\caption{Network numbers of ECJ5-GMPD and MOLIB}\label{netprop:MOLIB_ECJ}%
\begin{tabular}{lcc}
\hline
Network numbers & ECJ5-GMPD & MOLIB \\
\hline
Number of species ($m$) & 13 & 13 \\
Number of complexes ($n$) & 24 & 13 \\
Number of irreversible reactions ($rirr$) & 21 & 21 \\
Number of reversible reactions ($rrev$) & 0 & 0 \\
Number of reactions ($r$) & 21 & 21 \\
Number of linkage classes ($l$) & 7 & 1 \\
Number of strong linkage classes ($sl$) & 24 & 1 \\
Number of terminal strong linkage classes ($t$) & 8 & 1 \\
Rank of network ($s$) & 12 & 12 \\
Deficiency ($\delta$) & 5 & 0 \\
\hline
\hline
\end{tabular}
\end{table}

The network transformation has a remarkable simplifying effect: the numbers of complexes and of linkage classes are both reduced, while the number of reactions and rank are maintained, resulting in a weakly reversible deficiency zero system. Table \ref{netprop:MOLIB_ECJ2} summarizes the difference in some basic network properties of the two systems.

\begin{table}[h!]
\centering
\caption{Network properties of ECJ5-GMPD and MOLIB}\label{netprop:MOLIB_ECJ2}%
\begin{tabular}{lcc}
\hline
Basic network properties & ECJ5-GMPD & MOLIB \\
\hline
ILC & NO & YES \\
Weakly Reversible & NO & YES \\
t-minimal & NO & YES \\
Positive Dependence & YES & YES \\
Endotactic & NO & YES \\
\hline
\hline
\end{tabular}
\end{table}

\subsection{Network concordance and network transformations}
\label{7.2}
Concordant networks were introduced by Shinar and Feinberg in 2012 \cite{SF2012} as an abstraction of continuous flow stirred tank reactors (CFSTRs), a widely used model in chemical engineering. In their view, concordance indicates “…architectures that by their very nature, enforce duller, more restrictive behavior despite what might be great intricacy in the interplay of many species, even independently of values that kinetic parameters might take”. Concordance can hence be seen as a new type of system stability. 

\subsubsection{Concepts and basic properties of concordance}

To precisely define concordance, consider the linear map  $L: \mathbb{R}^\mathscr{R} \rightarrow S$ defined by

$$L(\alpha) = \sum_{q: y \rightarrow y' \in \mathscr{R}} \alpha_q (y'-y).$$

\begin{definition}
    A reaction network $\mathscr{N}$ is \textbf{concordant} if there do not exist an $\alpha \in \ker (L)$ and a nonzero $\sigma \in S$ having the following properties:
\begin{enumerate}
    \item For each $y \rightarrow y'$ such that $\alpha_{ y \rightarrow y'} \neq 0$, $\supp(y)$ contains a species $A$ for which $\sgn(\sigma_A) = \sgn(\alpha_{ y \rightarrow y'})$, where $\sigma_A$ denotes the term in sigma involving the species $A$ and $\sgn(\cdot)$ is the signum function.
    \item For each $y \rightarrow y'$ such that $\alpha_{ y \rightarrow y'} = 0$, either $\sigma_A = 0$ for all $A \in \supp(y)$, or else $\supp(y)$ contains species $A$ and $A'$ for which $\sgn(\sigma_A)=\sgn(\sigma_{A'})$, but not zero.
\end{enumerate}

A network that is not concordant is \textbf{discordant}.
    
\end{definition}

\begin{example}
    The CRNToolbox shows that Example \ref{ex1} in Appendix \ref{fundamentals} is discordant. Here, we compute
$$S = 	\{ \sigma \in \mathbb{R}^m \mid \sigma_X = -\sigma_Y \} \text{ and } \ker L = \{ \alpha \in \mathbb{R}^r \mid \alpha_{R_1} = \alpha_{R_2} \}.$$

However, shifting $R_1$ by adding $-Y$ will make the translated network $X \Leftrightarrow Y$ concordant.  
\end{example}

Concordance is closely related to two classes of kinetics on a network: injective and weakly monotonic kinetics. We recall these notions from \cite{SF2012}.
 
\begin{definition}
    A kinetic system $(\mathscr{N},K)$ is \textbf{injective} if, for each pair of distinct stoicihometrically compatible vectors $x^*, x^{**} \in \mathbb{R}^\mathscr{S}_{\geq 0}$, at least one of which is positive,

$$\sum_{y \rightarrow y'} K_{y \rightarrow y'} (x^{**}) (y'-y) \neq \sum_{y \rightarrow y'} K_{y \rightarrow y'} (x^{*}) (y'-y).$$
\end{definition}
 
Note that an injective kinetic system is necessarily a monostationary system. Moreover, an injective kinetic system cannot admit two distinct stoichiometrically compatible equilibria, at least one of which is positive. 
 
\begin{definition}
    A kinetics $K$ for a reaction network $\mathscr{N}$ is \textbf{weakly monotonic} if, for each pair of vectors  $x^*, x^{**} \in \mathbb{R}^\mathscr{S}_{\geq 0}$, the following implications hold for each reaction $y \rightarrow y' \in \mathscr{R}$ such that $\supp(y) \subset \supp(x^*)$ and $\supp(y) \subset \supp(x^{**})$:
\begin{enumerate}
    \item $K_{y \rightarrow y'} (x^{**}) > K_{y \rightarrow y'} (x^{*}) \Longrightarrow$ there exists a species A $\in \supp(y)$ with $x_A^{**} > x_A^*$.
    \item $K_{y \rightarrow y'} (x^{**}) = K_{y \rightarrow y'} (x^{*}) \Longrightarrow x_A^{**} = x_A^*$ for all $A \in \supp(y)$ or else there are species $A,A' \in \supp(y)$ with  $x_A^{**} > x_A^*$ and $x_{A'}^{**} < x_{A'}^*$.
\end{enumerate}
    
\end{definition}

\begin{remark}
    Examples of weakly monotonic kinetic systems are mass action systems and a class of power law systems where all kinetic orders are non-negative called non-inhibitory kinetics (PL-NIK systems).
\end{remark}

The following results present the close relationship between concordant networks, injective and weakly monotonic kinetics:
 
\begin{proposition}[Proposition 4.8 of \cite{SF2012}] A weakly monotonic kinetic system $(\mathscr{N} ,K)$ is injective whenever its underlying reaction network is concordant. In particular, if the underlying reaction network is concordant, then the kinetic system cannot admit two distinct stoichiometrically compatible equilibria, at least one of which is positive.
\end{proposition}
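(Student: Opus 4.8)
The plan is to prove the contrapositive: if $(\mathscr{N},K)$ fails to be injective, I will exhibit a pair $(\alpha,\sigma)$ satisfying conditions (1) and (2) of the concordance definition, so that $\mathscr{N}$ must be discordant. Non-injectivity furnishes distinct, stoichiometrically compatible $x^{*},x^{**}\in\mathbb{R}^{\mathscr{S}}_{\geq 0}$, at least one of them positive, with $\sum_{y\to y'}K_{y\to y'}(x^{**})(y'-y)=\sum_{y\to y'}K_{y\to y'}(x^{*})(y'-y)$. I would set $\alpha_{y\to y'}:=K_{y\to y'}(x^{**})-K_{y\to y'}(x^{*})$ and $\sigma:=x^{**}-x^{*}$; then $\alpha\in\ker L$ by the displayed identity, $\sigma\in S$ by stoichiometric compatibility, and $\sigma\neq 0$ since $x^{*}\neq x^{**}$. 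Everything then reduces to the sign bookkeeping in conditions (1) and (2).

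Since these conditions are unchanged when $(\alpha,\sigma)$ is replaced by $(-\alpha,-\sigma)$, which amounts to interchanging $x^{*}$ and $x^{**}$, I would first normalize so that $x^{**}$ is the positive point; then $\supp(y)\subseteq\supp(x^{**})$ for every reactant complex $y$, so by the support property of the kinetics $K_{y\to y'}(x^{**})>0$ for every reaction. Consequently $\alpha_{y\to y'}=0$ can only happen when $K_{y\to y'}(x^{*})>0$ as well, hence (again by the support property) when $\supp(y)\subseteq\supp(x^{*})$; so every reaction with $\alpha_{y\to y'}=0$ has $\supp(y)$ inside both supports and the hypotheses of weak monotonicity are met. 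Its second implication then gives precisely condition (2): either $\sigma_{A}=0$ for all $A\in\supp(y)$, or $\supp(y)$ contains species $A,A'$ with $\sigma_{A}>0$ and $\sigma_{A'}<0$.

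For condition (1), consider a reaction with $\alpha_{y\to y'}\neq 0$. If $\alpha_{y\to y'}>0$: when $\supp(y)\subseteq\supp(x^{*})$, the first implication of weak monotonicity yields $A\in\supp(y)$ with $x_{A}^{**}>x_{A}^{*}$, i.e.\ $\sigma_{A}>0$; when $\supp(y)\not\subseteq\supp(x^{*})$, any $A\in\supp(y)$ with $x_{A}^{*}=0$ has $x_{A}^{**}>0$ (as $x^{**}$ is positive), so again $\sigma_{A}>0$. If $\alpha_{y\to y'}<0$: then $K_{y\to y'}(x^{*})>K_{y\to y'}(x^{**})>0$, forcing $\supp(y)\subseteq\supp(x^{*})$, and weak monotonicity with the roles of $x^{*},x^{**}$ swapped gives $A\in\supp(y)$ with $x_{A}^{*}>x_{A}^{**}$, i.e.\ $\sigma_{A}<0$. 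In all cases $\supp(y)$ contains a species $A$ with $\sgn(\sigma_{A})=\sgn(\alpha_{y\to y'})$, which is condition (1). Hence $(\alpha,\sigma)$ witnesses discordance, contradicting the hypothesis, so $(\mathscr{N},K)$ is injective. The ``in particular'' clause is then immediate: two distinct stoichiometrically compatible equilibria, at least one positive, would both be mapped to $0$ by the species formation rate function, violating injectivity.

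The main obstacle I anticipate is the handling of reactions whose reactant support is not contained in $\supp(x^{*})$: weak monotonicity is only postulated for reactions with $\supp(y)$ inside the supports of \emph{both} points, so these reactions have to be treated directly, and it is precisely here that the normalization $x^{**}>0$ together with the ``if and only if'' form of the kinetics' support condition does the work. The rest is routine once the correct $(\alpha,\sigma)$ has been identified.
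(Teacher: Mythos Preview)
The paper does not give its own proof of this proposition: it is quoted verbatim as Proposition~4.8 of \cite{SF2012} and used as a black box, with no argument supplied. So there is nothing in the paper to compare your attempt against.

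That said, your argument is essentially the standard Shinar--Feinberg proof from the original source and is correct. One small caveat worth making explicit: in the step ``$K_{y\to y'}(x^{*})>0$, hence $\supp(y)\subseteq\supp(x^{*})$'' (and the analogous step in the $\alpha_{y\to y'}<0$ case) you are using the reverse implication of the support condition on kinetics, i.e.\ $K_{q}(x)>0\Rightarrow\supp(y)\subseteq\supp(x)$. The present paper's Introduction only states the forward implication, but Shinar--Feinberg's definition (which is the ambient one for this proposition) does require the full ``if and only if''. You already flag this at the end of your proposal, so just make sure it is stated as an explicit hypothesis rather than slipped in.
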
 
 
\begin{theorem} [Theorem 4.11 of \cite{SF2012}] A reaction network has injectivity in all weakly
monotonic kinetic systems derived from it if and only if the network is concordant.
\end{theorem}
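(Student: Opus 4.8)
The statement is an equivalence, and one implication is already in hand: the assertion ``$\mathscr{N}$ concordant $\Rightarrow$ every weakly monotonic kinetic system on $\mathscr{N}$ is injective'' is precisely Proposition~4.8 of \cite{SF2012} recalled above. So the plan is to establish the converse, which I would prove by contraposition: assuming $\mathscr{N}$ is \emph{discordant}, I would construct a weakly monotonic kinetics $K$ on $\mathscr{N}$ for which $(\mathscr{N},K)$ fails to be injective.

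First, unpack discordance to obtain a witness, namely a vector $\alpha \in \ker L$ and a nonzero $\sigma \in S$ satisfying the two sign conditions of the definition. Fix any positive reference state $x^* \in \mathbb{R}^{\mathscr{S}}_{>0}$, and for a parameter $\varepsilon>0$ small enough that $x^{**} := x^* + \varepsilon\sigma$ is still positive, consider the pair $x^*,x^{**}$. Since $\sigma \in S$ these states are stoichiometrically compatible, since $\sigma \neq 0$ they are distinct, and a species $A$ satisfies $x^{**}_A > x^*_A$, $x^{**}_A = x^*_A$, or $x^{**}_A < x^*_A$ exactly when $\sigma_A$ is positive, zero, or negative.

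The core of the argument is to build a weakly monotonic $K$ with the property that $K_q(x^{**}) - K_q(x^*) = \varepsilon\,\alpha_q$ for every reaction $q: y \to y'$. Granting this,
\begin{align*}
\sum_{q:\,y\to y'} K_q(x^{**})\,(y'-y) - \sum_{q:\,y\to y'} K_q(x^{*})\,(y'-y)
&= \varepsilon \sum_{q:\,y\to y'} \alpha_q\,(y'-y) \\
&= \varepsilon\, L(\alpha) = 0,
\end{align*}
so $x^*$ and $x^{**}$ are two distinct, stoichiometrically compatible, positive states with the same species formation rate, i.e.\ $(\mathscr{N},K)$ is not injective, as required. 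For the construction I would take $K$ to be a power-law kinetics $K_q(x) = k_q \prod_{A \in \supp(y)} x_A^{F_{qA}}$ with all exponents $F_{qA} > 0$ supported exactly on $\supp(y)$ (such kinetics are weakly monotonic, being PL-NIK, and satisfy the positivity axiom of a kinetics). Writing $\ell_A := \log(x^{**}_A/x^*_A)$, which has the same sign as $\sigma_A$, one has $K_q(x^{**})/K_q(x^*) = \exp\!\big(\sum_{A\in\supp(y)} F_{qA}\,\ell_A\big)$, and it is exactly the two discordance conditions that allow the exponents to be chosen so that this exponent sum is positive, negative, or zero according as $\alpha_q > 0$, $\alpha_q < 0$, or $\alpha_q = 0$: Condition~1 supplies a species of $\supp(y)$ with $\sgn(\sigma_A) = \sgn(\alpha_q)$ on which to place dominant weight, and Condition~2 supplies, for a reaction with $\alpha_q = 0$ at which $\sigma$ is not identically zero on $\supp(y)$, a pair of species of $\supp(y)$ whose $\ell$-values have opposite signs, so that a balanced choice of exponents keeps $K_q$ unchanged between the two states; the rate constant $k_q$ is then selected to realise the prescribed increment exactly.

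I expect the main obstacle to be precisely this construction of $K$: it is not enough that the \emph{direction} of each reaction's rate change be compatible with weak monotonicity; one must realise the exact increments $\varepsilon\alpha_q$ \emph{simultaneously} for all reactions, globally as an honest weakly monotonic kinetics defined on $\Omega$, while keeping every rate value positive. Restricting to power laws reduces each reaction's constraint to choosing finitely many positive exponents together with a positive rate constant, so the difficulty collapses to elementary algebra driven by the sign data in $\alpha$ and $\sigma$; the bookkeeping---in particular the $\alpha_q = 0$ case, which is what forces the second discordance condition into play---is routine but must be carried out carefully. Everything else in the proof is linear algebra.
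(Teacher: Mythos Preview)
The paper does not prove this theorem; it is quoted verbatim as Theorem~4.11 of \cite{SF2012} and used as a black box, so there is no proof in the present paper to compare your proposal against.

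That said, your proposal is sound and is essentially the argument Shinar and Feinberg give in \cite{SF2012}: one direction is their Proposition~4.8, and for the converse one unpacks a discordance witness $(\alpha,\sigma)$, picks two nearby stoichiometrically compatible positive states $x^*$ and $x^{**}=x^*+\varepsilon\sigma$, and builds a PL-NIK kinetics whose rate increments reproduce $\alpha$ up to scale, so that $L(\alpha)=0$ forces equal species formation rates. Your handling of the three cases ($\alpha_q>0$, $\alpha_q<0$, $\alpha_q=0$) via the sign data supplied by the two discordance conditions, followed by a free choice of the rate constant $k_q$ to hit the exact increment, is the right mechanism; the bookkeeping you flag as the main obstacle is indeed routine once one restricts to power laws with positive exponents supported on $\supp(y)$.
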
 
 
\begin{proposition}[Proposition 10.5.8 of \cite{FEIN2019}] A network $\mathscr{N}$ is concordant iff every PL-NIK kinetics on $\mathscr{N}$ is injective. 
\end{proposition}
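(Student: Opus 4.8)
The plan is to establish the two implications separately; the forward one is immediate from results already recalled, and the reverse one carries the real content. \emph{For ``concordant $\Rightarrow$ every PL-NIK kinetics injective'':} by the Remark above, every PL-NIK kinetics on $\mathscr{N}$ is weakly monotonic, so if $\mathscr{N}$ is concordant then Proposition 4.8 of \cite{SF2012} — a weakly monotonic kinetic system on a concordant network is injective — applies to every PL-NIK kinetics on $\mathscr{N}$, which settles this direction.

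\emph{For ``every PL-NIK kinetics injective $\Rightarrow$ concordant'':} I would argue the contrapositive, building a non-injective PL-NIK system on a discordant $\mathscr{N}$. Fix a discordance witness, i.e.\ $\alpha \in \ker(L)$ and a nonzero $\sigma \in S$ satisfying conditions (1) and (2) of the concordance definition. Pick $\lambda > 0$ small enough that $1 + \lambda \sigma_i > 0$ for every species $i$, put $w_i := \log(1 + \lambda \sigma_i)$ (so $w \neq 0$), and take the test points $x^{*} := \mathbf{1}$ and $x^{**} := \mathbf{1} + \lambda\sigma$; these are distinct, both strictly positive, and stoichiometrically compatible since $x^{**} - x^{*} = \lambda\sigma \in S$. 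For any power-law kinetics with kinetic-order matrix $F \geq 0$ whose rows are supported on the respective reactant complexes one has $\prod_i (x_i^{**})^{F_{q,i}} = \exp(F_q \cdot w)$ and $\prod_i (x_i^{*})^{F_{q,i}} = 1$, so, writing $m_q := \exp(F_q \cdot w) - 1$, the species formation rate function satisfies $f(x^{**}) - f(x^{*}) = \sum_{q:\,y \rightarrow y'} k_q\, m_q\, (y' - y)$. It therefore suffices to choose $F$ and rate constants $k_q > 0$ with $k_q m_q = \alpha_q$ for every reaction $q$, for then $f(x^{**}) - f(x^{*}) = \sum_{q:\,y \rightarrow y'} \alpha_q (y' - y) = L(\alpha) = 0$ and the system is not injective.

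The step I expect to be the main obstacle is constructing $F$ so that $\sgn(F_q \cdot w)$ tracks $\sgn(\alpha_q)$, and this is precisely what conditions (1) and (2) deliver. When $\alpha_q \neq 0$, condition (1) furnishes a species $A \in \supp(y)$ with $\sgn(\sigma_A) = \sgn(\alpha_q)$, hence $\sgn(w_A) = \sgn(\alpha_q)$; giving $A$ a dominant positive kinetic order and the other species of $\supp(y)$ sufficiently small positive orders forces $\sgn(F_q \cdot w) = \sgn(w_A) = \sgn(\alpha_q)$, so $m_q$ is nonzero with the sign of $\alpha_q$ and $k_q := \alpha_q / m_q > 0$ works. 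When $\alpha_q = 0$, condition (2) says either $\sigma$ vanishes on $\supp(y)$, whence $w$ vanishes there and $F_q \cdot w = 0$ for every admissible $F_q$, or $\supp(y)$ contains two species on which $\sigma$ has opposite signs, whence I can choose positive orders on $\supp(y)$ balancing the positive against the negative coordinates of $w$ restricted to $\supp(y)$ so that again $F_q \cdot w = 0$; in both cases $m_q = 0$ and any $k_q > 0$ serves. Finally I would check that the kinetics so constructed genuinely lies in PL-NIK — all kinetic orders are nonnegative and each row is supported on its reactant complex — the only degenerate case being a reaction out of the zero complex, where $F_q$ must be $0$ and $m_q = 0$ automatically, consistent since condition (1) forbids $\alpha_q \neq 0$ there.
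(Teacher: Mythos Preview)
The paper does not supply its own proof of this proposition; it is quoted verbatim as Proposition 10.5.8 of \cite{FEIN2019} and used as a black box. So there is no in-paper argument to compare against.

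That said, your proof is sound and is essentially the standard argument. The forward direction is exactly as you say. For the reverse direction, your construction is correct: the discordance witness $(\alpha,\sigma)$ is precisely designed so that the sign data in conditions (1) and (2) let you engineer nonnegative kinetic-order rows $F_q$ supported on $\supp(y)$ with $\sgn(\exp(F_q\cdot w)-1)=\sgn(\alpha_q)$, after which the rate constants $k_q$ absorb the magnitudes. Two small points worth tightening. First, in the $\alpha_q=0$ ``opposite signs'' sub-case you write ``positive orders on $\supp(y)$''; you actually want \emph{nonnegative} orders supported on $\supp(y)$ (for instance $F_{q,A}=-w_{A'}>0$, $F_{q,A'}=w_A>0$, and zero on the remaining reactant species), which is still PL-NIK. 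Second, the paper's statement of condition (2) contains a typo (it reads $\sgn(\sigma_A)=\sgn(\sigma_{A'})$ rather than opposite signs); your proof correctly uses the intended Shinar--Feinberg version with opposite signs, which is what makes the balancing possible.
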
 
 
\begin{example}
    Application of CRNToolbox's Concordance Test to ECJ5-GMPD verified its concordance. It is only the 4th published biochemical system with a concordant underlying network—previously, only the olfactory calcium signaling system \cite{SF2012}, Lee's Wnt signaling model \cite{FEIN2019} and the metabolic insulin signaling system of Sedaghat et al. \cite{SEDA2002} were known to have the property. Furthermore, ECJ5-GMPD is the first biochemical example with non-mass action kinetics.
\end{example}

\subsubsection{A sufficient condition for the invariance of concordance under network transformations}
We conclude our discussion of network property variation under network transformation with several new results on concordance. We first define a special classes of shifting and dividing operations: 
\begin{definition}
    A shift of a reaction $y \rightarrow y'$ to $y+y^\sharp
 \rightarrow y'+y^\sharp$ is called \textbf{reactant support preserving (rsp)} if $\supp (y + y^\sharp)  =  \supp y$. A dividing operation is rsp if both associated shifts are rsp.
\end{definition}

Since $\supp y \subset \supp (y + y^\sharp)$ always holds, the property $supp y = supp(y + y^\sharp)$ is equivalent to $\supp (y + y^\sharp) \subset  \supp y$. As one can easily compute, this can be further simplified and we have:

\begin{proposition}
    For any shifting operation, $\supp (y + y^\sharp) \subset \supp y$ if and only if $\supp (y^\sharp) \subset \supp y$.    
\end{proposition}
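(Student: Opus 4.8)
The plan is to reduce the claimed equivalence to a coordinate-wise observation on the species lying outside $\supp y$. Writing $y = \sum_i y_i X_i$ and $y^\sharp = \sum_i y^\sharp_i X_i$, I would first record the elementary fact that for every index $i \notin \supp y$ one has $y_i = 0$, hence $(y + y^\sharp)_i = y^\sharp_i$. From this I would deduce that the sets $\supp(y + y^\sharp)$ and $\supp y^\sharp$ agree outside $\supp y$; more precisely, $\supp(y + y^\sharp) \setminus \supp y = \supp y^\sharp \setminus \supp y$, both being the set $\{\, i : i \notin \supp y,\ y^\sharp_i \neq 0 \,\}$.

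The second step is the purely set-theoretic conclusion: a subset $A$ of the species index set satisfies $A \subseteq \supp y$ if and only if $A \setminus \supp y = \emptyset$. Applying this once with $A = \supp(y + y^\sharp)$ and once with $A = \supp y^\sharp$, and invoking the equality of difference sets from the first step, yields at once $\supp(y + y^\sharp) \subseteq \supp y \iff \supp y^\sharp \subseteq \supp y$. If one prefers to avoid introducing difference sets, the same argument can be run as a direct contrapositive: an index $i \in \supp y^\sharp \setminus \supp y$ produces $i \in \supp(y+y^\sharp)\setminus\supp y$ since $(y+y^\sharp)_i = y^\sharp_i \neq 0$, and symmetrically an index $i \in \supp(y+y^\sharp)\setminus\supp y$ forces $y^\sharp_i = (y+y^\sharp)_i \neq 0$.

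There is essentially no obstacle here; the statement is a one-line consequence of $y_i = 0 \Rightarrow (y + y^\sharp)_i = y^\sharp_i$. The only point worth flagging in the write-up is that no sign or integrality condition on the shift vector $y^\sharp$ is used, so the proposition holds for an arbitrary shift $y^\sharp \in \mathbb{Z}^m$ (indeed in $\mathbb{R}^m$) --- which is exactly the generality intended by ``for any shifting operation'' --- and in particular the argument does not rely on the complex $y + y^\sharp$ being well defined as a nonnegative vector. This also makes transparent why the preceding sentence of the paper is legitimate: combined with the always-valid inclusion $\supp y \subseteq \supp(y+y^\sharp)$ in the relevant (nonnegative-shift) situation, it characterizes the reactant-support-preserving shifts by the simple condition $\supp y^\sharp \subseteq \supp y$.
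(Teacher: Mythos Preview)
Your argument is correct and is precisely the kind of coordinate-wise computation the paper has in mind: the paper itself offers no detailed proof beyond the sentence ``The proofs are straightforward computations,'' and you have written out exactly those computations via the identity $(y+y^\sharp)_i = y^\sharp_i$ for $i\notin\supp y$. Your closing remark that the inclusion $\supp y \subseteq \supp(y+y^\sharp)$ requires a nonnegativity assumption on $y^\sharp$ is a fair caveat, since the paper's shifting operation permits integer (possibly negative) shift vectors.
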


The proofs are straightforward computations. The next Proposition identifies two building blocks of network transformations that conserve concordance:

\begin{proposition}
    Let $(\mathscr{N}^\sharp, K^\sharp)$ be a network transform of $(\mathscr{N}, K)$. Let  the transformation consists only of

    \begin{enumerate}
        \item[a.] rsp-shifting,
        \item[b.] rsp-dividing or
        \item[c.] scaling
    \end{enumerate}

Then 
    \begin{enumerate}
        \item[i.] $r \leq r^\sharp$ and
        \item[ii.] if $(\mathscr{N}, K)$ is concordant, them $(\mathscr{N}^\sharp, K^\sharp)$ is also concordant
    \end{enumerate}
\end{proposition}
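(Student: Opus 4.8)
The plan is to reduce statement (ii) to a single network operation and then argue by induction. Since the transformation in question is a finite composition of rsp-shifts, rsp-divides and scalings, and each of these three operations (applied to the \emph{current} network) either preserves the reaction count (shifting, scaling) or increases it by exactly one (dividing), statement (i) is immediate. For (ii) it suffices to prove the single-step claim: \emph{if $\mathscr{N}$ is concordant and $\mathscr{N}^\sharp$ is obtained from $\mathscr{N}$ by one rsp-shift, one rsp-divide or one scaling, then $\mathscr{N}^\sharp$ is concordant}; the general case follows by induction on the number of operations, applying the single-step claim at each stage to the intermediate network. Two preliminary remarks streamline everything: concordance is a purely network-theoretic property, so the kinetics $K,K^\sharp$ are irrelevant here; and all three operations are $S$-invariant (Table~\ref{tabclass}), so $S^\sharp=S$ throughout and any $\sigma\in S$ is automatically in $S^\sharp$ and vice versa.

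Each single step I would handle by contraposition: assuming $\mathscr{N}^\sharp$ is discordant with witness $(\alpha^\sharp,\sigma)$, $\alpha^\sharp\in\ker L^\sharp$ and $0\neq\sigma\in S^\sharp=S$, I will construct a discordance witness $(\alpha,\sigma)$ for $\mathscr{N}$ with the \emph{same} $\sigma$, contradicting concordance of $\mathscr{N}$. The structural point that makes this work is that the two concordance conditions for a reaction $y\to y'$ depend on the reactant complex only through its support $\supp y$. For an rsp-shift of $q_0:y_0\to y_0'$ to $y_0+z\to y_0'+z$, the affected reaction vector is unchanged and $\supp(y_0+z)=\supp y_0$ by the rsp hypothesis, so under the natural bijection $\mathscr{R}\leftrightarrow\mathscr{R}^\sharp$ one has $L^\sharp=L$, hence $\ker L^\sharp=\ker L$ and the two concordance predicates literally coincide; taking $\alpha$ to be the image of $\alpha^\sharp$ finishes this case. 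For a scaling of $q_0$ to $my_0\to my_0'$ with $m>0$, the affected reaction vector is positively rescaled and $\supp(my_0)=\supp y_0$; using the bijection with the $q_0$-coordinate rescaled by $m$ (which preserves signs since $m>0$), again $L^\sharp$ and $L$ have isomorphic, sign-compatible kernels, and the predicates match.

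For rsp-dividing of $q_0:y_0\to y_0'$ into $q_1:y_0+a\to y_0'+a$ and $q_2:y_0+b\to y_0'+b$ — both carrying the reaction vector $y_0'-y_0$, both rsp so $\supp(y_0+a)=\supp(y_0+b)=\supp y_0$ — I would use the summation map $\pi:\mathbb{R}^{\mathscr{R}^\sharp}\to\mathbb{R}^{\mathscr{R}}$ with $\alpha_{q_0}:=\alpha^\sharp_{q_1}+\alpha^\sharp_{q_2}$ and $\alpha_q:=\alpha^\sharp_q$ otherwise. Since $q_1,q_2$ share the reaction vector of $q_0$, one has $L^\sharp=L\circ\pi$, so $\alpha^\sharp\in\ker L^\sharp$ yields $\alpha=\pi\alpha^\sharp\in\ker L$. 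For reactions other than $q_0$ the conditions transfer verbatim. For $q_0$: if $\alpha_{q_0}\neq 0$ then $\sgn\alpha_{q_0}$ equals the sign of at least one of $\alpha^\sharp_{q_1},\alpha^\sharp_{q_2}$, and Condition~1 of the $\mathscr{N}^\sharp$-witness applied to that sub-reaction supplies a species $A\in\supp y_0$ with $\sgn\sigma_A=\sgn\alpha_{q_0}$; if $\alpha_{q_0}=0$, then either $\alpha^\sharp_{q_1}=\alpha^\sharp_{q_2}=0$, and Condition~2 for $q_0$ inherits from Condition~2 for $q_1$ (whose support is $\supp y_0$), or $\alpha^\sharp_{q_1}=-\alpha^\sharp_{q_2}\neq 0$, and Condition~1 for $q_1$ and for $q_2$ jointly produce species of $\supp y_0$ realising the sign pattern demanded by Condition~2.

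The main obstacle I anticipate is exactly this last bookkeeping in the dividing case: matching the two degrees of freedom $(\alpha^\sharp_{q_1},\alpha^\sharp_{q_2})$ of $\mathscr{N}^\sharp$ to the single coefficient $\alpha_{q_0}$ of $\mathscr{N}$ while keeping the sign conditions straight — in particular the subcase $\alpha_{q_0}=0$ with non-vanishing, cancelling halves, which forces one to invoke Condition~1 (not Condition~2) of the discordance witness for $q_1$ and $q_2$. The reduction to a single step, the shifting and scaling cases, and statement (i) are routine once the support-invariance supplied by the rsp hypothesis is available; indeed, without rsp a shift could add species to (or remove them from) $\supp y_0$ and the transfer of both conditions would fail, which is precisely why the hypothesis is imposed.
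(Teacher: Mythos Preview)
Your proposal is correct, and it takes a genuinely different route from the paper's proof. The paper argues (ii) via Feinberg's characterisation that a network is concordant iff every PL-NIK kinetics on it is injective (Proposition 10.5.8 of \cite{FEIN2019}): the rsp hypothesis guarantees that reactant supports are unchanged, so a PL-NIK kinetics stays PL-NIK under the transformation, while $S$-invariance together with dynamical equivalence forces the SFRFs and stoichiometric classes to coincide, making injectivity of $(\mathscr{N},K)$ and $(\mathscr{N}^\sharp,K^\sharp)$ equivalent. You instead work directly from the $(\alpha,\sigma)$-definition of concordance, pulling a discordance witness for $\mathscr{N}^\sharp$ back to one for $\mathscr{N}$ via an explicit linear map on $\mathbb{R}^{\mathscr{R}^\sharp}$.

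The trade-off: the paper's argument is short once the PL-NIK characterisation is granted, but that characterisation is itself a substantial result, and in the dividing case one must be a bit careful that an \emph{arbitrary} PL-NIK kinetics on $\mathscr{N}^\sharp$ (with possibly unrelated interaction maps on the two halves $q_1,q_2$) still pulls back to something usable on $\mathscr{N}$. Your approach is longer but entirely self-contained, using only the definition, and your handling of the dividing case --- in particular the subcase $\alpha^\sharp_{q_1}=-\alpha^\sharp_{q_2}\neq 0$, where Condition~1 on each half produces the opposite-sign pair required by Condition~2 for $q_0$ --- is the right bookkeeping and makes explicit exactly where the support equality $\supp(y_0+a)=\supp(y_0+b)=\supp y_0$ is consumed.
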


\begin{proof}
For (i), both the shifting and scaling operations do not change the number of reactions, while each dividing operation increases it by 1. For (ii) the assumption $\supp y^\sharp \subset \supp y$ implies that $\supp (y + y^\sharp) = \supp y$ so that $K^\sharp$ remains PL-NIK. For a scaled SFRF, we have $f^\sharp = \lambda N \lambda^{-1} (\lambda K)$ which is injective on the same stoichiometric class as $K$. Hence, any PL-NIK system is injective on N if and only if its transform is injective on $\mathscr{N}^\sharp$.
\end{proof}

\begin{example}
    Application of the CRNToolbox test shows that the CF-WR translate of the Schmitz carbon cycle model is discordant. This is  because some of the shifting operators do not satisfy the additional property of $\supp y^\sharp \subset \supp y$, highlighting the essentiality of the property.
\end{example}

\begin{erratum}
The previous example also shows that statement ii) in Theorem 9 in \cite{FOME2023} does not hold in general. However, the CRNToolbox test confirms that the linear conjugate studied is concordant. 
\end{erratum}

\begin{example}
    For the Schmitz system, the motto is “Scale, don't shift: by scaling reactions $R_1, R_2, R_4, R_7$ by $\lambda = 2$ and obtaining $2M_3 \Leftrightarrow 2M_1 \Leftrightarrow 2M_2$ and leaving the rest unchanged, we obtain a weakly reversible PL-RDK network transform which is concordant.  Note that this is not a network translation, showing an advantage of the bigger set of network operations.
\end{example}

\subsection{Structo-kinetic properties and network transformations}

Beyond structural properties, a kinetic system has two classes of properties: kinetic and structo-kinetic properties. Kinetic properties are those that are invariant under dynamic equivalence, e.g. the set of positive equilibria. The most studied structo-kinetic property is the set of complex-balanced equilibria of a kinetic system - its variation under network transformations is apparent from Horn's classical necessary condition for complex balancing, i.e., the existence of complex balanced equilibria, namely weak reversibility of the underlying network. \\

The three biochemical systems we have previously analyzed are all not weakly reversible, and hence after Horn's result, not complex balanced. MOLIB however, again after a classical result, this time from Feinberg, is complex balanced, since it has zero deficiency. In fact, it is absolutely complex balanced, i.e., every positive equilibrium is a complex balanced one. \\ % We will assess the complex balancing properties of the weakly reversible realizations of calcium and insulin signaling systems in Section \ref{sec:7}. \\

After a brief assessment of the invariance of injectivity and mono-/multi-stationarity under network transformations, we present the main result of this section: an algorithm for the transformation of a weakly reversible NFK system to a weakly reversible CFK system. We conclude the section by illustrating this result using MOLIB.

\subsubsection{Variation of injectivity and mono-/multi-stationarity of kinetic systems}
For any differentiable kinetic system, it is well known that the trajectory of a solution remains in the stoichiometric class containing the vector of initial conditions. Hence, answers to questions about the number of its equilibria are relative to the corresponding stoichiometric class. This clearly implies that the properties of injectivity and mono-/multi-stationarity of a kinetic system are structo-kinetic properties since a dynamic equivalence does not necessarily conserve the stoichiometric classes. \\

The following Proposition summarizes the relationships of these properties for some classes of network transformations:

\begin{proposition} \label{prop:ten}
Let $(\mathscr{N}^\#,K^\#)$ be a transform of $(\mathscr{N},K)$. Then
\begin{enumerate}
    \item If the transformation is S-including, then $(\mathscr{N},K)$ injective $\Rightarrow$ $(\mathscr{N}^\#,K^\#)$ injective and $(\mathscr{N},K)$  mono-stationary $\Rightarrow$ $(\mathscr{N}^\#,K^\#)$  mono-stationary. \label{una}

    \item If the transformation is S-extending, then $(\mathscr{N},K)$ non-injective $\Rightarrow$ $(\mathscr{N}^\#,K^\#)$ non-injective and $(\mathscr{N},K)$  multi-stationary $\Rightarrow$ $(\mathscr{N}^\#,K^\#)$  multi-stationary.

    \item If the transformation is S-invariant, then all statements hold. \label{tres}
\end{enumerate}
\end{proposition}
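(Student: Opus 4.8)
The plan is to exploit the single fact underlying all three cases: a network transformation is a dynamic equivalence, so the two systems share the identical species-formation-rate function, $f = NK = N^\#K^\# = f^\#$. Consequently the zero sets of $f$ and $f^\#$ coincide; in particular $E_+(\mathscr{N},K) = E_+(\mathscr{N}^\#,K^\#)$, which is precisely the statement recalled in Section~\ref{Sec:7} that the positive equilibrium set is a kinetic property. What the transformation changes is the stoichiometric subspace, hence the partition of $\mathbb{R}^{\mathscr{S}}_{\geq 0}$ into stoichiometric compatibility classes; the three cases of the Proposition are exactly the comparisons $S^\# \subseteq S$, $S \subseteq S^\#$, and $S = S^\#$.

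First I would record the monotonicity of compatibility: if $S^\# \subseteq S$, then $x^* - x^{**} \in S^\#$ implies $x^* - x^{**} \in S$, so every $S^\#$-compatible pair is $S$-compatible (the $S^\#$-classes refine the $S$-classes), and symmetrically when $S \subseteq S^\#$. With this in hand, case~\ref{una} is immediate: assuming the transformation is S-including and $(\mathscr{N},K)$ injective, take distinct $x^*,x^{**}$ that are $S^\#$-compatible with at least one positive; they are then $S$-compatible, so injectivity of $(\mathscr{N},K)$ gives $f(x^*)\neq f(x^{**})$, hence $f^\#(x^*)\neq f^\#(x^{**})$, i.e.\ $(\mathscr{N}^\#,K^\#)$ is injective. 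For mono-stationarity, two distinct positive equilibria of $(\mathscr{N}^\#,K^\#)$ in a common $S^\#$-class would be two distinct elements of $E_+(\mathscr{N},K)=E_+(\mathscr{N}^\#,K^\#)$ lying in a common $S$-class, contradicting mono-stationarity of $(\mathscr{N},K)$.

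Case 2 is the same argument repackaged contrapositively with the inclusion reversed: a witness to non-injectivity of $(\mathscr{N},K)$ is a distinct $S$-compatible pair $x^*,x^{**}$ (one positive) with $f(x^*)=f(x^{**})$; since $S\subseteq S^\#$ the pair is $S^\#$-compatible and $f^\#(x^*)=f^\#(x^{**})$, so it witnesses non-injectivity of $(\mathscr{N}^\#,K^\#)$. Likewise, two distinct positive equilibria of $(\mathscr{N},K)$ in a common $S$-class are, via $E_+(\mathscr{N},K)=E_+(\mathscr{N}^\#,K^\#)$ and $S\subseteq S^\#$, two distinct positive equilibria of $(\mathscr{N}^\#,K^\#)$ in a common $S^\#$-class. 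Case~\ref{tres} follows since $S=S^\#$ makes the compatibility relations identical, so all the implications and their converses hold.

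I do not anticipate a genuine obstacle; the only points needing care are (a) making explicit that ``the positive equilibrium set is a kinetic property'' is exactly $f=f^\#\Rightarrow E_+(\mathscr{N},K)=E_+(\mathscr{N}^\#,K^\#)$, so that ``being an equilibrium'' transfers between the two systems, and (b) carrying the clause ``at least one of which is positive'' unchanged through the injectivity definition, which is automatic because that clause constrains only the points, not the network. I would also remark that the asymmetry of the statement is sharp and is visible in the direction of the refinement argument: S-including transformations need not preserve non-injectivity or multi-stationarity, and S-extending ones need not preserve injectivity or mono-stationarity.
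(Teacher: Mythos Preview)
Your proposal is correct and follows essentially the same approach as the paper: both arguments rest on the identity $NK = N^\#K^\#$ from dynamic equivalence together with the inclusion between $S$ and $S^\#$, so that stoichiometric compatibility in the smaller subspace implies compatibility in the larger. The paper's proof is terser (it argues case~\ref{una} by contrapositive and dismisses the remaining implications as ``similar''), while you spell out each case and the mono-/multi-stationarity parts explicitly, but the underlying idea is identical.
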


\begin{proof}
    For (\ref{una}): $(\mathscr{N}^\#,K^\#)$ non-injective $\Rightarrow$ there exist $x,x'$ with $x'-x$ in $S^\#$ and $N^\# K^\# (x) = N^\# K^\# (x')$. Since $N^\# K^\# = NK$ and $S^\#$ is contained in $S$, this contradicts the injectivity of $(\mathscr{N}, K)$. The arguments for the other implications are similar. For (\ref{tres}): It follows from the fact that S-invariant transformations are both S-including and S-extending.  
\end{proof}

\begin{example}
    Since the olfactory calcium signaling system is concordant and has mass action kinetics, it follows from Proposition \ref{prop:ten} that the system is injective. Since the transformation in Section \ref{4.2.3} is S-invariant, then its weakly reversible transform, though discordant, is still injective (and mono-stationary).  
\end{example}   

\subsubsection{Complex balancing, complex factorizability and network transformations}
With regards to the structo-kinetic properties of complex balancing and complex factorizability, which are useful for analyses, we have hitherto focused on finding transformations leading to them. The question of their invariance under network transformations has not been discussed. Fortunately, it is easily answered in the sense that the only operation preserving them in general is the scaling operation. The effect of all other operations is very network specific and hence not amenable to more general statements.

\section{Summary/Outlook}
\label{Sec:8}

The study is motivated by the fact that the graphical structure of the reaction network is crucial for gaining mathematical insights, such as equilibria, multiple equilibria, concordance, stability, and concentration robustness. The study of network translation by M. Johnston \cite{JOHN2014}, later extended by Hong et al. \cite{HONG2023}, introduced operations for modifying the structure of the network while simultaneously preserving the ordinary differential equations (ODEs). In this paper, we improve and extend their work. To summarize, here are the main results:

\begin{enumerate}
\item We establish that positive dependence is a necessary and sufficient condition for a weakly reversible dynamically equivalent system. That is, in a positively dependent system, there is a sequence of shifting, merging, and dividing operations that will make it weakly reversible.

\item We demonstrate the benefits of kinetic realizations with positive deficiency for the analysis of biochemical systems. Here, we provide two examples: (1) calcium signaling in olfactory cilia by Reidl et al. \cite{REIN2006}, and (2) metabolic insulin signaling in adipocytes by Sedaghat et al. \cite{SEDA2002}.

\item We introduce an algorithm for a network transformation of a weakly reversible non-CFK system to a weakly reversible CFK system. This leads to three important results: (1) a significant improvement of a result by Nazareno et al. \cite{NAZA2019} on the Low Deficiency Complement for the Schmitz carbon cycle model, (2) enhancement of the Subspace Coincidence Theorem for non-CFK systems, and (3) a sufficient condition for the log parametrization of PL-NDK systems.

\item We extend the concept of network translations to network transformations, introducing scaling, splitting via rate constant difference, splitting via reaction vector, merging via reaction vector, and adding dummy reactions as additional operations. In this part, we introduce dynamical equivalences that adhere to stoichiometric alignment properties, namely S-extending, S-including, and S-invariant transformations.

\item Lastly, using the biosynthesis of monolignol in the \textit{Populus xylem} model of Lee and Voit \cite{LEVO2010}, we discuss the variation or invariance of structural and kinetic properties under different subsets of transformations. Additionally, we establish a sufficient condition for the invariance of concordance under a network transformation in terms of its constituent network operations.
\end{enumerate}

As future perspective, one may look at additional system properties which ensure the invariance/variation of the property under network operations. Furthermore, one may integrate with the existing research on decompositions for analyzing reaction networks in  biochemical systems. This integration could provide a more comprehensive understanding of the underlying dynamics and behavior of such systems.
\section*{Acknowledgments}
The authors thank the two reviewers of an earlier version whose comment contributed greatly to improving the manuscript.

\baselineskip=0.25in
\bibliographystyle{unsrt}

\begin{thebibliography}{1}

\bibitem{FEHO1977}Feinberg, M. \& Horn, F. Chemical mechanism structure and the coincidence of the stoichiometric and kinetic subspaces.. {\em Archive For Rational Mechanics And Analysis}. \textbf{66}, 83-97 (1977)
\bibitem{WIFE2013}Wiuf, C. \& Feliu, E. Power-Law Kinetics and Determinant Criteria for the Preclusion of Multistationarity in Networks of Interacting Species. {\em SIAM Journal On Applied Dynamical Systems}. \textbf{12}, 1685-1721 (2013), %20%0A%20%20%20%20%0A%20%20%20%20%20%20%20%20https://doi.org/10.1137/120873388%0A%20%20%20%20%0A%20%20%20%20%0A%0A
\bibitem{HKK2021}Hong, H., Kim, J., Al-Radhawi, M., Sontag, E. \& Kim, J. Derivation of stationary distributions of biochemical reaction networks via structure transformation. {\em Commun Biol}. pp. 620 (2021)
\bibitem{FOME2020}Fortun, N. \& E.R. Mendoza Absolute concentration robustness in power law kinetic systems. {\em MATCH Communications In Mathematical And In Computer Chemistry}. pp. 669-691 (2020)
\bibitem{HELM2024}Hernandez, B., Lubenia, P. \& E.R. Mendoza  Embedding-based comparison of reaction networks of Wnt signaling. {\em (Accepted At) MATCH Communications In Mathematical And In Computer Chemistry }. (2024)
\bibitem{FOME2023}Fortun, N. \& E.R. Mendoza Comparative analysis of carbon cycle models via kinetic representations. {\em Journal Of Mathematical Chemistry}. pp. 896-932 (2023)
\bibitem{TAME2018}Talabis, D., Mendoza, E. \& Jose, E. Positive Equilibria of Weakly Reversible Power Law Kinetic Systems with Linear Independent Interactions. {\em Journal Of Mathematical Chemistry}. \textbf{56}, 2643-2673 (2018)
\bibitem{HOJA1972}Horn, F. \& Jackson, R. General mass action kinetics. {\em Archive For Rational Mechanics And Analysis}. \textbf{47} pp. 81-116 (1972)
\bibitem{MTJ2022}Jose, E., Talabis, D. \& Mendoza, E. Absolutely Complex Balanced Kinetic Systems. {\em MATCH Commun. Math. Comput. Chem.}. \textbf{88} pp. 397-436 (2022)
\bibitem{JOSS2013}Johnston, M., Siegel, D. \& Szederkényi, G. Computing weakly reversible linearly conjugate chemical reaction networks with minimal deficiency. {\em Mathematical Biosciences}. \textbf{241}, 88-98 (2013), https://www.sciencedirect.com/science/article/pii/S0025556412001952
\bibitem{MHNN2020}Magpantay, D., Hernandez, B., Delos Reyes, A., Mendoza, E. \& Nocon, E. A computational approach to multistationarity in poly-PL kinetic systems. {\em MATCH Commun. Math. Comput. Chem.}. \textbf{85} pp. 605-634 (2021)
\bibitem{LLMM2022}Lao, A., Luvenia, P., Magpantay, D. \& Mendoza, E. Concentration Robustness in LP Kinetic Systems. {\em MATCH Commun. Math. Comput. Chem.}. \textbf{88} pp. 29-66 (2022)
\bibitem{FLRM2019}Fortun, N., Lao, A., Razon \& Mendoza, E. A Deficiency Zero Theorem for a class of power law kinetic systems with non-reactant determined interactions. {\em MATCH Commun. Math. Comput. Chem.}. \textbf{81} pp. 621-638 (2019)
\bibitem{SF2010}Shinar, G. \& Martin Feinberg Structural Sources of Robustness in Biochemical Reaction Networks. {\em Science}. \textbf{327}, 1389-1391 (2010), https://www.science.org/doi/abs/10.1126/science.1183372
\bibitem{FLRM2021}Fortun, N., Lao, A., Razon, L. \& Mendoza, E. Robustness in Power-Law Kinetic Systems with Reactant-Determined Interactions. {\em Discrete And Computational Geometry, Graphs, And Games}. pp. 106-121 (2021)
\bibitem{LUML2022}Lubenia, P., Mendoza, E. \& Lao, A. Reaction Network Analysis of Metabolic Insulin Signaling. {\em Bull. Math. Biol.}. \textbf{84} pp. 129 (2022)
\bibitem{SF2012}Shinar, G. \& Feinberg, M. Concordant chemical reaction networks. {\em Mathematical Biosciences}. \textbf{240}, 92-113 (2012), https://www.sciencedirect.com/science/article/pii/S0025556412001204
\bibitem{LMF2021}Fontanil, L., Mendoza, E. \& Fotun, N. A computational approach to concentration robustness in power law kinetic systems of Shinar-Feinberg type. {\em MATCH Commun. Math. Comput. Chem.}. \textbf{85} pp. 489-516 (2021)
\bibitem{FEME2021}Fotun, N. \& Mendoza, E. Absolute Concentration Robustness in Power Law Kinetic Systems. {\em MATCH Commun. Math. Comput. Chem.}. \textbf{85} pp. 669-691 (2021)
\bibitem{HEME2021}Hernandez, B. \& Mendoza, E. Positive equilibria of Hill-type kinetic systems. {\em J. Math. Chem.}. \textbf{59} pp. 840-870 (2021)
\bibitem{MURE2014}Müller, S. \& Regensburger, G. Generalized Mass-Action Systems and Positive Solutions of Polynomial Equations with Real and Symbolic Exponents (Invited Talk). {\em Computer Algebra In Scientific Computing}. pp. 302-323 (2014)
\bibitem{HEME2023}Hernandez, B. \& Mendoza, E. Positive equilibria of power law kinetics on networks with independent linkage classes. {\em J. Math. Chem.}. \textbf{61} pp. 630-651 (2023)
\bibitem{JOHN2014}Johnston, M. Translated Chemical Reaction Networks. {\em Bull. Math. Biol.}. \textbf{76}, 1081-1116 (2014)
\bibitem{TONJOHN2018}Tonello, E. \& Johnston, M. Network translation and steady state properties of chemical reaction systems. {\em Bull. Math. Biol.}. \textbf{80}, 2306-2337 (2018)
\bibitem{CRACIUN2009}Craciun, G., Dickenstein, A., Shiu, A. \& Sturmfels, B. Toric dynamical systems. {\em Journal Of Symbolic Computation}. \textbf{44}, 1551-1565 (2009), https://www.sciencedirect.com/science/article/pii/S0747717109000923, In Memoriam Karin Gatermann
\bibitem{JOHNMUPA2018}Johnston, M., Müller, S. \& Pantea, C. A Deficiency-Based Approach to Parametrizing Positive Equilibria of Biochemical Reaction Systems. {\em Bull. Math. Biol.}. \textbf{81} pp. 1143-1172 (2019)
\bibitem{BOROSYU2019}Boros, B., Craciun, G. \& Yu, P. Weakly Reversible Mass-Action Systems With Infinitely Many Positive Steady States. {\em SIAM Journal On Applied Mathematics}. \textbf{80}, 1936-1946 (2020)
\bibitem{MESH2022}Meshkat, N., Shiu, A. \& Torres, A.  Absolute Concentration Robustness in Networks with Low-Dimensional Stoichiometric Subspace. {\em Vietnam J. Math.}. \textbf{50} pp. 623-651 (2022)
\bibitem{HONG2023}Hong, H., Hernandez, B., Kim, J. \& Kim, J. Computational Translation Framework Identifies Biochemical Reaction Networks with Special Topologies and Their Long-Term Dynamics. {\em SIAM Journal On Applied Mathematics}. \textbf{83}, 1025-1048 (2023)
\bibitem{AJLM2017}Arceo, C., Jose, E., Lao, A. \& Mendoza, E. Reaction networks and kinetics of biochemical systems. {\em Mathematical Biosciences}. \textbf{283} pp. 13-29 (2017), https://www.sciencedirect.com/science/article/pii/S0025556416302486
\bibitem{FEIN1987}Feinberg, M. Chemical reaction network structure and the stability of complex isothermal reactors—I. The deficiency zero and deficiency one theorems. {\em Chemical Engineering Science}. \textbf{42}, 2229-2268 (1987), https://www.sciencedirect.com/science/article/pii/0009250987800994
\bibitem{LEVO2010}Lee, Y. \& Voit, E. Mathematical modeling of monolignol biosynthesis in Populus xylem. {\em Mathematical Biosciences}. \textbf{228}, 78-89 (2010), https://www.sciencedirect.com/science/article/pii/S0025556410001410
\bibitem{NAZA2019}Nazareno, A., Eclarin, R., Mendoza, E. \& Lao, A. Linear conjugacy of chemical kinetic systems. {\em Math. Biosci. Eng.}. \textbf{16}, 8322-8355 (2019)
\bibitem{REIN2006}Reidl, J., Borowski, P., Sensse, A., Starke, J., Zapotocky, M. \& Eiswirth, M. Model of Calcium Oscillations Due to Negative Feedback in Olfactory Cilia. {\em Biophysical Journal}. \textbf{90}, 1147-1155 (2006), https://www.sciencedirect.com/science/article/pii/S0006349506723062
\bibitem{SEDA2002}Sedaghat, A., Sherman, A. \& Quon, M. A mathematical model of metabolic insulin signaling pathways. {\em American Journal Of Physiology-Endocrinology And Metabolism}. \textbf{283}, E1084-E1101 (2002), PMID: 12376338
\bibitem{MURE2012}Müller, S. \& Regensburger, G. Generalized Mass Action Systems: Complex Balancing Equilibria and Sign Vectors of the Stoichiometric and Kinetic-Order Subspaces. {\em SIAM Journal On Applied Mathematics}. \textbf{72}, 1926-1947 (2012)


\end{thebibliography}

\appendix
\section{Fundamentals}
\label{fundamentals}
A \textbf{reaction network} is a system of unique reactions. The structure of a network can be viewed as a directed graph where the edges are the \textbf{reactions} and the \textbf{complexes} (i.e., vertices) are non-negative linear combinations of the \textbf{species} of the network. The set of species, set of complexes and set of reactions are denoted as $\mathscr{S}$, $\mathscr{C}$, and $\mathscr{R}$, respectively. Furthermore, we denote $m = |\mathscr{S}|$, $n =|\mathscr{C}|$ and $r = |\mathscr{R}|$. We associate each reaction with the difference between its product and source nodes called a \textbf{reaction vector}.

\begin{example}
The following example is a reaction network with two reactions, two species and four complexes.
\begin{center}
    \begin{tikzpicture}[baseline=(current  bounding  box.center)]
    \tikzset{vertex/.style = {draw=none,fill=none}}
    \tikzset{edge/.style = {bend left,->,> = latex', line width=0.20mm}}
    % vertices
    \node[vertex] (1) at  (1,0) {$X + Y$};
    \node[vertex] (2) at  (4,0) {$2Y$};
    \node[vertex] (3) at  (1,-1) {$X$};
    \node[vertex] (4) at  (4,-1) {$Y$};
    \draw [edge]  (1) to["$r_1$"] (2);
    \draw [edge]  (4) to["$r_2$"] (3);
    \end{tikzpicture} 
\end{center}

In this example, we have $\mathscr{S}=\left\{ X, Y\right\}$,  $\mathscr{C}=\left\{ X+Y, 2Y, Y, X \right\}$ and  $\mathscr{R}=\left\{ R_1: X+Y \rightarrow 2Y, R_2: Y \rightarrow X\right\}$. For $R_1: X+Y \rightarrow 2Y$, the complexes $X+Y$ and $2Y$ are called the \textbf{reactant complex} and the \textbf{product complex}, respectively. Also, the reaction vectors for $R_1$ and $R_2$ are given as follows:

$$\begin{array}{l}
R_1: 2Y - (X+Y) = -X + Y = [-1 \quad 1]^\top \\
R_2: X - Y = [1 \quad -1]^\top  \\
\end{array}.$$ 
\label{ex1}
\end{example}

The \textbf{stoichiometric matrix} $N$ of a CRN is a matrix whose columns are the reaction vectors. The \textbf{stoichiometric subspace} $S$ is the image of $N$, which is equivalent to the linear subspace of $\mathbb{R}^m$ defined by:
\begin{equation*}
    S := \text{span } \left\lbrace y' - y \in \mathbb{R}^m \mid y\rightarrow y' \in \mathscr{R}\right\rbrace.
\end{equation*}
 \noindent The dimension of $S$, denoted by $s$, gives the \textbf{rank} of the CRN. We define an important concept in CRN theory called the \textbf{deficiency} of a CRN, denoted by $\delta$, which can be easily calculated using the formula $\delta = n - l -s$, where $n$ is the number of complexes, $l$ is the number of connected components (i.e, linkage classes), and $s$ is the rank of the CRN. Table \ref{structure} summarizes the other studied structural properties in this paper.

\begin{table}[h!]
\centering
\caption{Some structural properties of kinetic systems}\label{structure}%
\begin{tabularx}{\linewidth}{XX}
\hline
Property & Definition \\
\hline
\hline
reversible reaction & reaction that can occur in both the forward and reverse directions \\
linkage classes & connected components of a CRN \\
strong linkage classes & strongly connected components of a CRN \\
terminal strong linkage classes & strongly connected components without outgoing arcs \\
independent linkage classes (ILC) & the total deficiency of the linkage classes is equal to the network deficiency \\
weakly reversible & every linkage class is a strong linkage class \\
t-minimal & every linkage class has one terminal strong linkage class \\
\hline
\hline
\end{tabularx}
\end{table}

In Example \ref{ex1}, we have $N=\begin{blockarray}{ccc}
R_1 & R_2  \\
\begin{block}{[cc]c}
-1 & 1 & X \\ 
1 & -1 & Y \\ 
\end{block}
\end{blockarray}$ and $s=1$. Furthermore, the network has two linkage classes, not weakly reversible but t-minimal. \\

The \textbf{matrix of complexes} $Y$ is the $m \times n$
 matrix whose ($i,j$)th element is the stoichiometric coefficient of $y_j$
 with respect to $C_i$. The \textbf{incidence matrix} 
 $I_a$ is $n \times r$  matrix where each row corresponds to a complex and each column to a reaction, satisfying

 \begin{equation*}
     \nonumber
    (I_a)_{i,j}=\begin{cases}
    -1,& \text{if } i \text{ is the reactant complex of reaction } j \in \mathscr{R},\\
    1,              & \text{if } i \text{ is the product complex of reaction } j \in \mathscr{R},\\
    0, & \text{otherwise.} 
\end{cases}
 \end{equation*}
Here, we note that the stoichiometric matrix $N$ is also defined as $N = Y I_a$. 
In Example \ref{ex1}, $\delta = 1 = 4 - 2 -1$ and the matrices $Y$ and $I_a$ are

$$Y=\begin{blockarray}{ccccc}
C_1 & C_2 & C_3 & C_4 \\
\begin{block}{[cccc]c}
1 & 0 & 0 & 1 & X \\ 
1 & 2 & 1 & 0 & Y \\ 
\end{block}
\end{blockarray} \quad I_a=\begin{blockarray}{ccc}
R_1 & R_2 \\
\begin{block}{[cc]c}
-1 & 0 & C_1 \\
1 & 0 & C_2 \\
0 & -1 & C_3 \\
0 & 1 & C_4 \\
\end{block}
\end{blockarray}.$$

A reaction network is usually endowed with a kinetics to describe how the concentrations of different species change over time.

\begin{definition} \cite{WIFE2013}
    A \textbf{kinetics} $K$ on a CRN $\mathscr{N}$ assigns to each reaction $q: y \rightarrow y'$ a rate function $K_q: \Omega \rightarrow \mathbb{R}_{\geq}$, where $\Omega$ is a subset of $\mathbb{R}^m_{\geq}$ containing $\mathbb{R}^m_{>}$, and $\supp x$ contains $\supp y$ $\Rightarrow$ $K_q(x) > 0$.
\end{definition}
 Thus, a kinetics is an assignment of a rate function to each reaction in a network. A network $\mathscr N$ together with a kinetics $K$ is called a \textbf{kinetic system} and is denoted here by $(\mathscr N, K)$. In Example \ref{ex1}, suppose that $x$ and $y$ are the concentrations of the species $X$ and $Y$, which evolve over time. Then, an example of rate functions can be as follows:

$$\begin{array}{ll}
 R_1: X+Y \rightarrow 2Y  & \quad K_1 = r_1 x^1 y^1 \\
 R_2: Y \rightarrow X & \quad K_2 = r_2 x^0 y^1 \\
\end{array}$$

where $r_i$ is the rate constant of the reaction $R_i$. Here, the kinetics $K = [r_1 x^1 y^1 \quad r_2 x^0 y^1]^\top$ is under \textbf{mass-action} (MAK) where the rate function of each reaction is proportional to the  the product of each concentration raised to the stoichiometric coefficient of the species that occurs in the reactant complex of the associated reaction. If we allow the exponent to be any real number, the kinetics follows \textbf{power-law kinetics} (PLK). A generalization of these systems is the RIDK system.

\begin{definition} \cite{NAZA2019}
A \textbf{rate constant-interaction map decomposable kinetics} (RIDK) is a kinetics, such that for each reaction $R_j$, the coordinate function $K_j: \Omega \rightarrow \mathbb{R}$ can be written in the form $K_j(x)=k_j I_{K,j} (x)$, with $k_j \in \mathbb{R}_{>0}$ (called rate constant) and $\Omega \subset \mathbb{R}^m$. We call the map $I_K : \Omega \rightarrow \mathbb{R}^r$ defined by $I_{K,j}$ as the \textbf{interaction map}. In these systems, each rate function is the product of a rate constant and an interaction function.
\end{definition}

Table \ref{kinetics} displays the additional kinetic systems discussed in the paper. 

\begin{table}[h!]
\centering
\caption{Some kinetics}\label{kinetics}%
\begin{tabularx}{\linewidth}{XX}
\hline
Kinetics & Definition \\
\hline
\hline
MAK & $K_i(x)=k_ix^{Y_{i,\cdot}}$ for $i=1,\dots, r$ where $Y$ is the map of complexes. \\
PLK & $K_i(x)=k_ix^{F_{i,\cdot}}$ for $i=1,\dots, r$ where $F$ is the kinetic order matrix. \\
poly-PL & $K_i(x)= \sum k_ix^{F_{i,\cdot}}$ for $i=1,\dots, r$ where $F$'s are the kinetic order matrices. \\
CFK & RIDK where for any two branching reactions $R_i, R_j\in \mathscr R$, the corresponding interaction functions are identical. \\ %$K$ can be decomposed as $K=I_k \circ \Psi_K$ where $\Psi_K: \Omega_K \rightarrow \mathbb{R}^{\rho(\mathscr{R})}$ and $I_k$ is the k-incidence map. Here, each reactant complex $\rho(\mathscr{R}_i)$, has a specific interaction function $\Omega_{K,i}$. It basically means that $ \Omega_K$ are determined by the values of the reactant complexes. \\
NFK & non-CFK  \\
PL-NIK  & non-inhibitory kinetics , i.e., those whose kinetic orders are all non-negative, constitute the weakly monotonic subset of power law kinetics.\\
PL-FSK & Power Law Factor Span Surjective Kinetics, i.e, subset of PL-RDK and consists of systems with the property that reactions with different reactant complexes have different kinetic order matrix rows. \\
%HLK & Hill-type kinetics \\
\hline
\hline
\end{tabularx}
\end{table}

\begin{definition} 
\label{ODE:KSSC}
The \textbf{ODE} or \textbf{dynamical system} of the kinetic system is the equation $dc/dt=f(c)=NK(c)=YI_aK(c)$, where $c\in \mathbb R^{\mathscr S}_{\geq 0}$, $N$ is the stoichiometric matrix and K is the kinetics. An element $c^*$ of $\mathbb R^{\mathscr S}_{>0}$ such that $f(c^*)=0$ is called a \textbf{steady state} of the system. We use $E_+(\mathscr N, K)$ to denote the set of all positive equilibria of the kinetic system $(\mathscr N, K)$. Furthermore, we define \textbf{KSS coincidence} when the kinetic subspace ($\spa(\Ima f)$) coincides with the stoichiometric subspace $S$. Here, KSS coincidence allows the occurrence of non-degenerate steady states for differentiable kinetics.
\end{definition}

The dynamical system $f(x)$ (or species formation rate functions (SFRF)) of Example \ref{ex1} can be written as
$$\left[ 
\begin{array}{c}
    \dot{x} \\
    \dot{y} \\
\end{array}
\right]=
\left[ \begin{array}{cc}
-1 & 1  \\ 
1 & -1  \\ 
\end{array}
\right]
\left[ 
\begin{array}{c}
	r_1 x^1 y^1 \\
	r_2 x^0 y^1 \\
\end{array}
 \right]
=NK(x) \Rightarrow 
\left\{
\begin{split}
\dot{x} = - r_1 x^1 y^1 + r_2 x^0 y^1 \\
\dot{y} = r_1 x^1 y^1 - r_2 x^0 y^1. 
\end{split}\right. $$

In \cite{HOJA1972}, Horn and Jackson introduced a subset of $E_+$ called the set of complex balanced of equilibria denoted as $Z_+$. It is defined as $Z_+(\mathscr{N},K)=\left\{x\in \mathbb{R}^{\mathscr{S}}_+ \middle| I_a K(x)=0 \right\}$. A kinetic system is \textbf{complex balanced }at a state (i.e., a species composition) if for each complex, formation and degradation are at equilibrium. Table \ref{structo-kin} provides an overview of the remaining structo-kinetic properties discussed in the paper.

\begin{table}[h!]
\centering
\caption{Some structo-kinetic properties of kinetic systems}\label{structo-kin}%
\begin{tabularx}{\linewidth}{XX}
\hline
Property & Definition \\
\hline
\hline
positively equilibrated  & $E_+(\mathscr{N},K)\neq \emptyset$. \\
complex balanced  & $Z_+(\mathscr{N},K)\neq \emptyset$. \\
absolutely complex balanced (ACB) & $Z_+(\mathscr{N},K)\neq \emptyset \text{ and } E_+(\mathscr{N},K)=Z_+(\mathscr{N},K),$
i.e., all positive equilibria are complex balanced. \\
has absolute concentration robustness (ACR) in species $X$ & The concentration value for $X$ is invariant over all equilibria in $E_+(\mathscr{N},K)$.\\
has balanced concentration robustness (BCR) in species $X$ & The concentration value for $X$ is invariant over all equilibria in $Z_+(\mathscr{N},K)$. \\
multistationary & There are positive rate constants where the corresponding ODE system admits at least two distinct stoichiometrically compatible equilibria. \\
monostationary & Non-multistationary systems \\
\hline
\hline
\end{tabularx}
\end{table}

\begin{definition}
    Two kinetic systems $(\mathscr{N}, K)$ and $(\mathscr{N}', K')$ are \textbf{dynamically equivalent} if 
    \begin{enumerate}
        \item $\mathscr{N}$ and $\mathscr{N}'$ have the same set of species $S = S'$,
        \item $K$ and $K'$ have the same definition domain $\Omega = \Omega'$ and 
        \item $NK (x) = N'K'(x)$ for all $x \in \Omega$.
    \end{enumerate}
\end{definition}

An example of a dynamically equivalent system for Example \ref{ex1} is the following system $(\mathscr{N}',K')$:

$$\begin{array}{cc}
    \begin{tikzpicture}[baseline=(current  bounding  box.center)]
    \tikzset{vertex/.style = {draw=none,fill=none}}
    \tikzset{edge/.style = {bend left,->,> = latex', line width=0.20mm}}
    % vertices
    \node[vertex] (1) at  (1,0) {$X + Y$};
    \node[vertex] (2) at  (4,0) {$2Y$};
    \draw [edge]  (1) to["$r_1$"] (2);
    \draw [edge]  (2) to["$r_2$"] (1);
    \end{tikzpicture}  & K'(X) = \left[ 
\begin{array}{c}
	r_1 x^1 y^1 \\
	r_2 x^0 y^1 \\
\end{array}
 \right] 
\end{array}.$$

Here, the aforementioned example represents a zero deficiency weakly reversible power law system. Since $N' = \left[ 
\begin{array}{cc}
	-1 & 1 \\
	1 & -1 \\
\end{array}
 \right]$, direct computations show that $NK(X) = N'K'(X)$. We now define some kinetic system properties.\\

\begin{definition}
\label{concordance}
    Concordance can be seen as a new type of system stability. Consider the linear map  $L: \mathbb{R}^\mathscr{R} \rightarrow S$ defined by
$$L(\alpha) = \sum_{q: y \rightarrow y' \in \mathscr{R}} \alpha_q (y'-y).$$
    
    A reaction network $\mathscr{N}$ is \textbf{concordant} if there do not exist an $\alpha \in \ker (L)$ and a nonzero $\sigma \in S$ having the following properties:
\begin{enumerate}
    \item For each $y \rightarrow y'$ such that $\alpha_{ y \rightarrow y'} \neq 0$, $\supp(y)$ contains a species $A$ for which $\sgn(\sigma_A) = \sgn(\alpha_{ y \rightarrow y'})$,  where $\sigma_A$ denotes the term in sigma involving the species $A$ and $\sgn(\cdot)$ is the signum function.
    \item For each $y \rightarrow y'$ such that $\alpha_{ y \rightarrow y'} = 0$, either $\sigma_A = 0$ for all $A \in \supp(y)$, or else $\supp(y)$ contains species $A$ and $A'$ for which $\sgn(\sigma_A)=\sgn(\sigma_{A'})$, but not zero.
\end{enumerate}

A network that is not concordant is discordant.
    
\end{definition}

\begin{definition} \cite{SF2012}
    A kinetic system $(\mathscr{N},K)$ is \textbf{injective} if, for each pair of distinct stoicihometrically compatible vectors $x^*, x^{**} \in \mathbb{R}^\mathscr{S}_{\geq 0}$, at least one of which is positive,

$$\sum_{y \rightarrow y'} K_{y \rightarrow y'} (x^{**}) (y'-y) \neq \sum_{y \rightarrow y'} K_{y \rightarrow y'} (x^{*}) (y'-y).$$
\end{definition}
 
Note that an injective kinetic system is necessarily a monostationary system. Moreover, an injective kinetic system cannot admit two distinct stoichiometrically compatible equilibria, at least one of which is positive. 
 
\begin{definition}
    A kinetics $K$ for a reaction network $\mathscr{N}$ is \textbf{weakly monotonic} if, for each pair of vectors  $x^*, x^{**} \in \mathbb{R}^\mathscr{S}_{\geq 0}$, the following implications hold for each reaction $y \rightarrow y' \in \mathscr{R}$ such that $\supp(y) \subset \supp(x^*)$ and $\supp(y) \subset \supp(x^{**})$:
\begin{enumerate}
    \item $K_{y \rightarrow y'} (x^{**}) > K_{y \rightarrow y'} (x^{*}) \Longrightarrow$ there exists a species A $\in \supp(y)$ with $x_A^{**} > x_A^*$.
    \item $K_{y \rightarrow y'} (x^{**}) = K_{y \rightarrow y'} (x^{*}) \Longrightarrow x_A^{**} = x_A^*$ for all $A \in \supp(y)$ or else there are species $A,A' \in \supp(y)$ with  $x_A^{**} > x_A^*$ and $x_{A'}^{**} < x_{A'}^*$.
\end{enumerate}
    
\end{definition}

\subsection{Concentration robustness in kinetic systems}
\label{A2:robustness}
We begin with a brief review of the concept and some results on concentration robustness of species in kinetic systems. We are particularly interested in absolute and balanced concentration robustness, denoted by ACR and BCR respectively. \\

\textbf{Absolute concentration robustness} (ACR) pertains to a condition in which the concentration of a species in a kinetic system attains the same value in every positive steady-state set by parameters and does not depend on initial conditions. This concept was introduced by Shinar and Feinberg \cite{SF2012} published in Science in 2010. Notably, they presented sufficient structure-based conditions for a mass action system to display ACR on a particular species. This result was extended to power law systems with low deficiency \cite{FLRM2021,SF2010}, subsets of poly-PL kinetic systems \cite{LLMM2022}, and Hill-type kinetic systems \cite{HEME2021}. For larger systems and those with higher deficiency, independent decomposition helps identify ACR \cite{FEME2021}. In \cite{LMF2021}, a general approach extended the species hyperplane approach introduced in \cite{LLMM2022}. \\

\textbf{Balanced concentration robustness} (BCR) denotes the invariance of a species’ values on the subset of complex balanced equilibria of a system. Hence ACR $\Rightarrow$ BCR, but not conversely. \\

An important subset of power law kinetic systems consists of the log-parametrized (LP) system's, which consist of the positive equilibria log-parametrized (PLP) and complex balanced equilibria log parametrized systems (CLP) systems, which are defined as follows:

\begin{definition}
\label{def:CLP}
    A kinetic system $(\mathscr{N},K)$ is of type 
    \begin{enumerate}
        \item \textbf{PLP (positive equilibria log-parameterized)} if $E_+(\mathscr{N},K) \neq 0$ and $E_+(\mathscr{N},K) = \{ x \in \mathbb{R}^\mathscr{S}_{>0} | \log x - \log x^* \in (P_E)^\perp \}$, where $P_E$ is a subspace of $\mathbb{R}^\mathscr{S}$ and $x^*$ is a positive equilibrium.

        \item \textbf{CLP (complex-balanced equilibria log-parameterized)} if $Z_+(\mathscr{N},K) \neq 0$ and $Z_+(\mathscr{N},K) = \{ x \in \mathbb{R}^\mathscr{S}_{>0} | \log x - \log x^* \in (P_Z)^\perp \}$, where $P_Z$ is a subspace of $\mathbb{R}^\mathscr{S}$ and $x^*$ is a complex-balanced equilibrium.

        \item \textbf{bi-LP} if it is of PLP and of CLP type, and $P_E$ = $P_Z$.  
    \end{enumerate}
\end{definition}

In \cite{MURE2012}, it was shown that any complex balanced PL-RDK system is a CLP system with $P_Z = \tilde{S}^\perp$. It follows from the species hyperplane criterion for CLP systems, that a species has BCR if and only if its coordinate in all vectors of a basis is zero \cite{LLMM2022}. \\

\section{An illustration for the CF-WR algorithm}

We illustrate the algorithm using the MOLIB CRN realization of the biosynthesis of monolignol in \textit{Populus xylem} by Lee and Voit \cite{LEVO2010}. MOLIB is a weakly reversible NDK system. 
\begin{enumerate}
    \item[S1.] The NDK nodes are $X_2+X_{13}$,  $X_3+X_{13}$, $X_9+X_{13}$ and $X_{11}+X_{13}$. See Table \ref{MOLIB}. \\

        \begin{table}[h!]
        \centering
        \caption{CF-subsets of MOLIB}\label{MOLIB}%
        \begin{tabular}{llcc}
        \hline
        NF nodes & Reaction set & CF-subsets \\
        \hline
        $X_{2}+X_{13}$ & $R_{3},R_{4},R_{5}$ & $ \{R_{3}  \} ,\{R_{4}  \},\{R_{5}  \}$ \\
        $X_{3}+X_{13}$ & $R_{6},R_{7},R_{8}$ & $\{R_{6}\},\{R_{7},R_{8}  \} $ \\
        $X_{9}+X_{13}$ & $R_{16},R_{17}$ & $\{R_{16}  \},\{R_{17}  \}$ \\
        $X_{11}+X_{13}$ & $R_{19},R_{20}$ & $ \{R_{19} \},\{R_{20} \}$ \\

        \hline
        \hline
        \end{tabular}
        \end{table}

\item[S2.] Hence, we have
$$CF_{set} = \{ \{R_{3}  \} ,\{R_{4}  \},\{R_{5}  \}, \{R_{6}  \}, \{R_{7},R_{8}  \}, \{R_{16}  \},\{R_{17}  \}, \{R_{19} \},\{R_{20} \} \}.$$

    \begin{itemize}
        \item For the 1st iteration of S2, choose CF subset $\{R_{3}  \}$. Here, we compute

        $$\mathscr{N}_{CF_1}^{\text{cycle}} = \{R_{i} \text{ where } i \in \{1,2,3,6,9\} \}$$ 

        $$CF_{set} = \{\{R_{4}  \},\{R_{5}  \}, \{R_{7},R_{8}  \}, \{R_{16}  \},\{R_{17}  \}, \{R_{19} \},\{R_{20} \} \}.$$

        \item For the 2nd iteration of S2, choose CF subset $\{R_{4}  \}$. Here, we compute

        $$\mathscr{N}_{CF_2}^{\text{cycle}} = \{R_{i} \text{ where } i \in \{1,2,4,10,13,14,16 \} \}$$

        $$CF_{set} = \{\{R_{5}  \}, \{R_{7},R_{8}  \}, \{R_{17}  \}, \{R_{19} \},\{R_{20} \} \}.$$

        \item For the 3rd iteration of S2, choose CF subset $\{R_{5}  \}$. Here, we compute

        $$\mathscr{N}_{CF_3}^{\text{cycle}} = \{R_{i} \text{ where } i \in \{1,2,5 \} \}$$

        $$CF_{set} = \{ \{R_{7},R_{8}  \}, \{R_{17}  \}, \{R_{19} \},\{R_{20} \} \}.$$

        \item For the 4th iteration of S2, choose CF subset $\{R_{7},R_{8}  \}$. Here, we compute

        $$\mathscr{N}_{CF_4}^{\text{cycle}} = \{R_{i} \text{ where } i \in \{1,2,3,7,8,13,14,17,20,21\} \}.$$

        $$CF_{set} = \{ \{R_{19} \} \}.$$

        \item For the 5th iteration of S2, choose CF subset $\{ R_{19}  \}$. Here, we compute

        $$\mathscr{N}_{CF_5}^{\text{cycle}} = \{R_{i} \text{ where } i \in \{1,2,4,11,12,14,17,19\} \}.$$

        \item We update the $CF_{set} = \emptyset.$

\end{itemize}
\item[S3.] We compute for $\mathscr{R}_{set}$, the set of unassigned reactions.

$$\mathscr{R}_{set}=\{R_{15}, R_{18} \}$$

 \begin{itemize}
        \item For the 1st iteration of S3, choose the reaction $\{R_{15}  \}$. Here, we compute

        $$\mathscr{N}_{R_1}^{\text{cycle}} = \{R_{i} \text{ where } i \in \{1,2,4,11,15,18\} \}.$$ 

        \item We update $\mathscr{R}_{set}= \emptyset.$
\end{itemize}
\item[S4.] For brevity, we denote the  subnetworks $\mathscr{N}_{CF_1}^{cycle}$, $\mathscr{N}_{CF_2}^{cycle}$, $\mathscr{N}_{CF_3}^{cycle}$, $\mathscr{N}_{CF_4}^{cycle}$, $\mathscr{N}_{CF_5}^{cycle}$, $\mathscr{N}_{R_1}^{cycle}$ as $\mathscr{N}_1, \cdots, \mathscr{N}_{6}$, respectively. The occurrences of each reaction in the subnetworks are shown in Table \ref{tab:occur}. \\

        \begin{table}[h!]
        \centering
        \caption{Occurrences of $R_j$ in the $\mathscr{N}_i$}\label{tab:occur}%
        \begin{tabular}{lcccccccc}
        \hline
        Reaction & $\mathscr{N}_{1}$ & $\mathscr{N}_{2}$ & $\mathscr{N}_{3}$ & $\mathscr{N}_{4}$ & $\mathscr{N}_{5}$ & $\mathscr{N}_{6}$ & rate constant & kinetics $K_{R_j}^*$ \\
        \hline
       $R_1$ & Y & Y & Y & Y & Y & Y & $k_1/6$ & $\frac{k_1}{6} I_{K,1}(x)$  
       \\
       $R_2$ & Y & Y & Y & Y & Y & Y & $k_2/6$ & $\frac{k_2}{6} I_{K,2}(x)$  
       \\       
       $R_3$ & Y & Y & N & Y & N & N & $k_3/3$ & $\frac{k_3}{3} I_{K,3}(x)$  
       \\       
       $R_4$ & N & Y & N & N & Y & Y & $k_4/3$ & $\frac{k_4}{3} I_{K,4}(x)$  
       \\       
       $R_5$ & N & N & Y & N & N & N & $k_5$ & $k_5 I_{K,5}(x)$  
       \\       
       $R_6$ & Y & N & N & N & N & N & $k_6$ & $k_6 I_{K,6}(x)$  
       \\       
       $R_{7}$ & N & N & N & Y & N & N & $k_7$ & $k_7 I_{K,7}(x)$  
       \\       
       $R_{8}$ & N & N & N & Y & N & N & $k_8$ & $k_8 I_{K,8}(x)$  
       \\ \       
       $R_{9}$ & Y & N & N & N & N & N & $k_9$ & $k_9 I_{K,9}(x)$  
       \\       
       $R_{10}$ & N & Y & N & N & N & N & $k_{10}$ & $k_{10} I_{K,10}(x)$  
       \\       
       $R_{11}$ & N & N & N & N & Y & N & $k_{11}$ & $k_{11} I_{K,11}(x)$  
       \\     
       $R_{12}$ & N & N & N & N & Y & N & $k_{12}$ & $k_{12} I_{K,12}(x)$  
       \\      
       $R_{13}$ & N & Y & N & Y & N & N & $k_{13}/2$ & $\frac{k_{13}}{2} I_{K,13}(x)$  
       \\          
       $R_{14}$ & N & Y & N & Y & Y & N & $k_{14}/3$ & $\frac{k_{14}}{3} I_{K,14}(x)$  
       \\     
       $R_{15}$ & N & N & N & N & N & Y & $k_{15}$ & $k_{15} I_{K,15}(x)$  
       \\         
       $R_{16}$ & N & Y & N & N & N & N & $k_{16}$ & $k_{16} I_{K,16}(x)$  
       \\       
       $R_{17}$ & N & N & N & Y & Y & N & $k_{17}/2$ & $\frac{k_{17}}{2} I_{K,17}(x)$  
       \\        
       $R_{18}$ & N & N & N & N & N & Y & $k_{18}$ & $k_{18} I_{K,18}(x)$  
       \\        
       $R_{19}$ & N & N & N & N & Y & N & $k_{19}$ & $k_{19} I_{K,19}(x)$  
       \\          
       $R_{20}$ & N & N & N & Y & N & N & $k_{20}$ & $k_{20} I_{K,20}(x)$  
       \\        
       $R_{21}$ & N & N & N & Y & N & N & $k_{21}$ & $k_{21} I_{K,21}(x)$  
       \\ 
        \hline
        \hline
        \end{tabular}
        \end{table}

\item [S5.] The reactant set is
$\rho(\mathscr{R})= \{ X_{13}, X_i + X_{13} \text{ for } i \in \{1,2, \cdots 12 \} \}$. We select the complex $X_2$ for the following steps:
    \begin{itemize}
        \item For subnetwork $\mathscr{N}_1$, we translate each reaction $R_j$ using the operations dividing and shifting by adding $(0)X_2$. Thus, $\mathscr{N}^* = \mathscr{N}_1$. We have the following translated reactions

        $$\begin{array}{l}
            R_{3}^*: X_{2}+X_{13}  \rightarrow  X_{3}+X_{13} \\
            R_{6}^*: X_{3}+X_{13}  \rightarrow  X_{4}+X_{13} \\ 
            R_{9}^*: X_{4}+X_{13}  \rightarrow X_{13} \\
            R_{1}^*: X_{13} \rightarrow  X_{1}+X_{13} \\ 
            R_{2}^*: X_{1}+X_{13}  \rightarrow X_{2 }+X_{13} \\ 
            \end{array}$$ \\
           with kinetics $K_{R_j}^*$ for $R_j^*$ as shown in Table \ref{tab:occur}.

           \item For subnetwork $\mathscr{N}_2$, we translate each reaction $R_j$ using the operations dividing and shifting by adding $(1)X_2$. Thus, $\mathscr{N}^*_2$ have the following translated reactions

        $$\begin{array}{l}
            R_{4}^*: X_{2}+X_{13}+X_{1}  \rightarrow X_{5}+X_{13}+X_{1} \\
R_{10}^*: X_{5}+X_{13}+X_{2}  \rightarrow  X_{6}+X_{13}+X_{2} \\ 
R_{13}^*: X_{6}+X_{13}+X_{2}  \rightarrow  X_{8}+X_{13}+X_{2} \\ 
R_{14}^*: X_{8}+X_{13}+X_{2}  \rightarrow  X_{9}+X_{13}+X_{2} \\ 
R_{16}^*: X_{9}+ X_{13}+X_{2}  \rightarrow  X_{13}+X_{2} \\
R_{1}^*: X_{13}+X_{2}  \rightarrow  X_{1}+X_{13}+X_{2} \\ 
R_{2}^*: X_{1}+X_{13}+X_{2}  \rightarrow X_{2 }+X_{13}+X_{2} \\
R_{3}^*: X_{2}+X_{13}+X_{2}  \rightarrow  X_{3}+X_{13}+X_{2} \\
            \end{array}$$ \\
           with kinetics $K_{R_j}^*$ for $R_j^*$ as shown in Table \ref{tab:occur}.

           \item For subnetwork $\mathscr{N}_3$, we translate each reaction $R_j$ using the operations dividing and shifting by adding $(2)X_2$. Thus, $\mathscr{N}^*_3$ have the following translated reactions

        $$\begin{array}{l}
            R_{1}^*: X_{13}+2X_{2} \rightarrow  X_{1}+X_{13}+2X_{2} \\ 
R_{2}^*: X_{1}+X_{13}+2X_{2}  \rightarrow X_{2 }+X_{13}+2X_{2} \\
R_{5}^*: X_{2}+X_{13}+2X_{2}  \rightarrow  X_{13}+2X_{2} \\ 
            \end{array}$$ \\
           with kinetics $K_{R_j}^*$ for $R_j^*$ as shown in Table \ref{tab:occur}.

           \item For subnetwork $\mathscr{N}_4$, we translate each reaction $R_j$ using the operations dividing and shifting by adding $(3)X_2$. Thus, $\mathscr{N}^*_4$ have the following translated reactions

        $$\begin{array}{l}
            R_{7}^*: X_{3}+X_{13}+3X_{2}  \rightarrow X_{6 }+X_{13}+3X_{2} \\   
R_{8}^*: X_{3}+X_{13}+3X_{2}  \rightarrow X_{13}+3X_{2} \\    
R_{13}^*: X_{6}+X_{13}+3X_{2}  \rightarrow  X_{8}+X_{13}+3X_{2} \\    
R_{14}^*: X_{8}+X_{13}+3X_{2}  \rightarrow  X_{9}+X_{13}+3X_{2} \\    
R_{17}^*: X_{9}+ X_{13}+3X_{2}  \rightarrow X_{11} + X_{13} + 3X_{2} \\   
R_{20}^*: X_{11}+X_{13}+3X_{2}  \rightarrow X_{12}+X_{13}+3X_{2} \\   
R_{21}^*: X_{12}+X_{13}+3X_{2}  \rightarrow  X_{13}+3X_{2} \\    
R_{1}^*: X_{13}+3X_{2}  \rightarrow  X_{1}+X_{13}+3X_{2} \\    
R_{2}^*: X_{1}+X_{13}+3X_{2}  \rightarrow X_{2 }+X_{13}+3X_{2} \\ 
R_{3}^*: X_{2}+X_{13}+3X_{2}  \rightarrow  X_{3}+X_{13}+3X_{2} \\  
            \end{array}$$ \\
           with kinetics $K_{R_j}^*$ for $R_j^*$ as shown in Table \ref{tab:occur}.

           \item For subnetwork $\mathscr{N}_5$, we translate each reaction $R_j$ using the operations dividing and shifting by adding $(4)X_2$. Thus, $\mathscr{N}^*_5$ have the following translated reactions

        $$\begin{array}{l}
            R_{19}^*: X_{11}+X_{13}+4X_{2}  \rightarrow  X_{13}+4X_{2} \\   
   
R_{1}^*: X_{13}+4X_{2}  \rightarrow  X_{1}+X_{13}+4X_{2} \\   
  
R_{2}^*: X_{1}+X_{13}+4X_{2}  \rightarrow X_{2 }+X_{13}+4X_{2} \\   
  
R_{4}^*: X_{2}+X_{13}+4X_{2}  \rightarrow X_{5}+X_{13}+4X_{2} \\   
   
R_{11}^*: X_{5}+X_{13}+4X_{2}  \rightarrow  X_{7}+ X_{13}+4X_{2} \\   
   
R_{12}^*: X_{7}+X_{13}+4X_{2}  \rightarrow  X_{8 }+X_{13}+4X_{2} \\   
  
R_{14}^*: X_{8}+X_{13}+4X_{2}  \rightarrow  X_{9}+X_{13}+4X_{2} \\   
   
R_{17}^*: X_{9}+ X_{13}+4X_{2}  \rightarrow X_{11}+ X_{13}+4X_{2} \\ 
            \end{array}$$ \\
           with kinetics $K_{R_j}^*$ for $R_j^*$ as shown in Table \ref{tab:occur}.

           \item For subnetwork $\mathscr{N}_6$, we translate each reaction $R_j$ using the operations dividing and shifting by adding $(5)X_2$. Thus, $\mathscr{N}^*_6$ have the following translated reactions

        $$\begin{array}{l}
            R_{15}^*: X_{7}+X_{13}+5X_{2}  \rightarrow  X_{10}+X_{13}+5X_{2} \\   

R_{18}^*: X_{10}+X_{13}+5X_{2} \rightarrow  X_{13}+5X_{2} \\   
 
R_{1}^*: X_{13}+5X_{2} \rightarrow  X_{1}+X_{13}+5X_{2} \\   
 
R_{2}^*: X_{1}+X_{13}+5X_{2}  \rightarrow X_{2 }+X_{13}+5X_{2} \\   

R_{4}^*: X_{2}+X_{13}+5X_{2}  \rightarrow X_{5}+X_{13}+5X_{2} \\   
 
R_{11}^*: X_{5}+X_{13}+5X_{2} \rightarrow  X_{7}+ X_{13}+5X_{2} \\ 
            \end{array}$$ \\
           with kinetics $K_{R_j}^*$ for $R_j^*$ as shown in Table \ref{tab:occur}.
\item Lastly, we have $\mathscr{N}^* = \cup \mathscr{N}^*_i$ with corresponding kinetics $K^*$. We call this realization, MOLIB*.
    \end{itemize}

\end{enumerate}

Table \ref{netprop:MOLIB*} shows the difference in some basic network properties of the MOLIB* and MOLIB. The CRNToolbox Concordance Report shows that MOLIB* remains concordant despite the use of dividing in the algorithm.
\begin{table}[h!]
\centering
\caption{Network numbers and properties of MOLIB* and MOLIB}\label{netprop:MOLIB*}%
\begin{tabular}{lcc}
\hline
Network numbers & MOLIB* & MOLIB \\
\hline
Number of species ($m$) & 13 & 13 \\
Number of complexes ($n$) & 36 & 13 \\
Number of irreversible reactions ($rirr$) & 36 & 21 \\
Number of reversible reactions ($rrev$) & 0 & 0 \\
Number of reactions ($r$) & 36 & 21 \\
Number of linkage classes ($l$) & 6 & 1 \\
Number of strong linkage classes ($sl$) & 6 & 1 \\
Number of terminal strong linkage classes ($t$) & 6 & 1 \\
Rank of network ($s$) & 12 & 12 \\
Deficiency ($\delta$) & 18 & 0 \\
%ILC & NO & YES \\
%Weakly Reversible & YES & YES \\
%t-minimal & YES & YES \\
%Positive Dependence & YES & YES \\
%Concordant & YES & YES \\
\hline
\hline
\end{tabular}
\end{table}

\section{Table of Symbol and Glossary of Acronyms}
We list some of the symbols and acronyms used in the paper.

\begin{table}[ht!]
\caption{List of Symbols}
\begin{tabular}{ll}
$\mathscr{S}$ & set of species \\
$m$ & cardinality of species \\
$\mathscr{C}$ & set of complexes \\
$n$ & cardinality of complexes \\
$\mathscr{R}$ & set of reactions \\
$r$ & cardinality of reactions \\
$l$ & number of linkage classes \\
$t$ & number of terminal strong linkage classes \\
$\delta$	& 	Deficiency	\\
$S$ & Stoichiometric subspace \\
$S_+$ & Stoichiometric cone \\
$\mathscr{N}$ & Reaction network \\
$N$ & Stoichiometric matrix \\
$K$	& Kinetics of a CRN	\\
$I_K$	& 	Interaction map	\\
$I_a$	& 	Incidence mmatrix	\\
$F$	& 	Kinetic order matrix	\\
$Y$	& 	Matrix of complexes	\\
$E_+(\mathscr{N},K)$	& 	Set of positive equilibria	\\
$Z_+(\mathscr{N},K)$	& 	Set of complex balanced equilibria	\\
%$E_{+,PL}(\mathscr{N},K)$	& 	Set of PL-complex balanced	equilibria\\
%$E_{+,PL}(\mathscr{N},K)$	& 	Set of PL-equilibria	\\
$N_R$ & number of CF subsets \\
$n_R$ & number of reactant complexes \\
$r_{mcf}$ & number of reactions in the maximal CF-subsystems \\
\end{tabular}
\end{table}

\begin{table}[ht!]
\caption{Abbreviations}
\begin{tabular}{ll}
ACB & Absolutely complex balanced \\
ACR	&	Absolute concentration robustness	\\
BCR	&	Balanced concentration robustness	\\
CFK	&	Complex factorizable kinetics	\\
CLP & complex balanced equilibria log parametrized systems \\
FSK & Factor Span Surejective kinetics \\
HLK & Hill-type kinetics \\
ILC & independent linkage classes \\
ISS & Interaction Span Surjective \\
LP & Log Parametrization \\
MAK	&	Mass action kinetics	\\
NFK & Non-complex factorizable kinetics \\
NIK & Non-inhibitory kinetics \\
NDK 	&	Non-reactant determined kinetics	\\
ODE & Ordinary differential equations \\
PLP & positive equilibria log-parametrized \\
PYK	&	Poly-PL kinetics	\\
PL	&	Power law	\\
RDK	&	Reactant-determined kinetics	\\
RIDK & rate constant-interaction map decomposable kinetics \\
RN	&	Reaction network	\\
rsp & reactant support preserving \\
SFRF & Species formation rate functions \\
\end{tabular}
\end{table}

\end{document}